\newcommand{\bs}[1]{\ensuremath{\boldsymbol{#1}}}
\newcommand{\dd}{\mathrm{d}}
\def\Res{\mathop{\mathrm{Res}}}
\def\Aut{\mathop{\mathrm{Aut}}}
\def\DP{\widehat{\phi}}
\def\d{\partial}
\def\oM{\overline{\mathcal{M}}}
\newcommand{\beq}{\begin{equation}}
\newcommand{\eeq}{\end{equation}}
\newcommand{\bea}{\begin{eqnarray}}
\newcommand{\eea}{\end{eqnarray}}
\newcommand{\Tr}{{\rm Tr}\,}
\newtheorem{theorem}{Theorem}[section]
\newtheorem{proposition}[theorem]{Proposition}
\newtheorem{lemma}[theorem]{Lemma}
\newtheorem{corollary}[theorem]{Corollary}
\theoremstyle{remark}
\newtheorem{example}{Example}[section]
\newtheorem{definition}[example]{Definition}
\newtheorem{notation}[example]{Notation}
\newtheorem{remark}[example]{Remark}
\newcommand*\MapsTo{%
  \xrightarrow[\raisebox{0.25 em}{\smash{\ensuremath{\sim}}}]{}%
}
\begin{document}

\title[Blobbed topological recursion]{Blobbed topological recursion: \\ properties and applications}
\author{Ga\"etan Borot}
\address{Max Planck Institut f\"ur Mathematik, Vivatsgasse 7, 53111 Bonn, Germany.}
\email{gborot@mpim-bonn.mpg.de}
\author{Sergey Shadrin}
\address{Korteweg de Vries Instituut voor Wiskunde, Universiteit van Amsterdam, P.O. Box 94248, 
1090 GE Amsterdam, Netherlands.}
\email{s.shadrin@uva.nl}

\thanks{The authors would like to thank the organizers of the workshop ``Moduli Spaces and Integrable Systems", held in Banff in August 2013, where this project was initiated. G.B. is grateful to the University of Amsterdam for hospitality during the conduct of this project.}

\begin{abstract}
We study the set of solutions $(\omega_{g,n})_{g \geq 0,n \geq 1}$ of abstract loop equations. We prove that $\omega_{g,n}$ is determined by its purely holomorphic part: this results in a decomposition that we call ``blobbed topological recursion". This is a generalization of the theory of the topological recursion, in which the initial data $(\omega_{0,1},\omega_{0,2})$ is enriched by non-zero symmetric holomorphic forms in $n$ variables $(\phi_{g,n})_{2g - 2 + n > 0}$. In particular, we establish for any solution of abstract loop equations: (1) a graphical representation of $\omega_{g,n}$ in terms of $\phi_{g,n}$; (2) a graphical representation of $\omega_{g,n}$ in terms of intersection numbers on the moduli space of curves; (3) variational formulae under infinitesimal transformation of $\phi_{g,n}$ ; (4) a definition for the free energies $\omega_{g,0} = F_g$ respecting the variational formulae. We discuss in detail the application to the multi-trace matrix model and enumeration of stuffed maps.
\end{abstract}

\maketitle

\tableofcontents


\section{Introduction}

Loop equations are ubiquitous in mathematical physics, and their solution compute interesting quantities in random matrix theory, combinatorics of surfaces, enumerative geometry (Hurwitz and Gromov-Witten theory), topological quantum field theories, etc.  Specifically in each of these problems, one can often derive functional relations between quantities (that bear the name of Schwinger-Dyson equations, Virasoro constraints, Tutte's equations, cut-and-join relations, Ward identities, etc.) and then show that they imply loop equations. The latter take a universal form, and here we refer to the framework of ``abstract loop equations" formulated in \cite{BEO}. The quantity subject to these abstract loop equations is a collection $(\omega_{g,n})_{g,n}$ of meromorphic forms in $n$ variables indexed by two integers $n \geq 1$ and $g \geq 0$. In this article, we propose a general study the set $\mathcal{M}$ of solutions of abstract loop equations. The definition of $\mathcal{M}$ depends on a few data that are made explicit in \S~\ref{extdsn}, and $\mathcal{M}$ itself is in any case infinite dimensional.

The topological recursion of \cite{EOFg} already constructed a subset $\mathcal{M}^{0}$ of solutions to abstract loop equations, by a recursion on $2g - 2 + n > 0$ starting from the initial data $\omega_{0,1}$ and $\omega_{0,2}$. The solutions in $\mathcal{M}^0$ have the property that the holomorphic part of $\omega_{g,n}$ is normalized uniformly for all $n$ and $g$. The topological recursion can be studied \textit{per se} and enjoys beautiful properties: Seiberg-Witten like formulas for infinitesimal variations of the initial data \cite{EORev}, symplectic invariance \cite{EO2MM}, representation in terms of intersection numbers on $\overline{\mathcal{M}}_{g,n}$ \cite{Ekappa,Einter}, etc. It has received many applications in algebraic geometry \cite{EOwp,Norbu,DBOSS,BKMP,Zhou2,EOBKMP}, in relation with integrable systems \cite{BEInt,MDHitchin}, in knot theory \cite{DiFuji2,BEMknots,BEknots,BESeifert}, and we also refer to the examples given in \cite{EORev,BEO} for applications to matrix models and statistical physics.

As we shall review in Section~\ref{motiv}, the necessity of considering the full $\mathcal{M}$ instead of its subset $\mathcal{M}^{0}$ came matrix models with multi-trace interactions \cite{Bstuff} and combinatorics of surfaces obtained by gluing elementary $2$-cells of any topology. We find in Theorem~\ref{bllb} that $\mathcal{M}$ is an extension of $\mathcal{M}^0$ by the adjunction of extra initial data $(\varphi_{g,n})_{g,n}$ for each $g$ and $n$ with $2g - 2 + n > 0$. As a matter of fact, $\varphi_{g,n}$ is the ``purely holomorphic part" of $\omega_{g,n}$, Theorem~\ref{bllb} shows that any $(\omega_{g,n})_{g,n} \in \mathcal{M}$ is determined by $\omega_{0,1}$, $\omega_{0,2}$ and the $(\varphi_{g,n})_{2g - 2 + n > 0}$, explicitly in Equations~\eqref{P1wgn} and \eqref{HAPB}. We decide to call $\varphi_{g,n}$ the blobs, and this construction is called the ``blobbed topological recursion". 

We prove that many of the interesting properties of the topological recursion -- maybe to the exception of the relations with integrability -- extend to its blobbed counterpart. In other words, they are structural properties of $\mathcal{M}$. The blobbed topological recursion has several diagrammatic representations that are very helpful in proving general statements about $\mathcal{M}$. We then prove:
\begin{itemize}
\item[$\bullet$] Theorems~\ref{thm:DP} and~\ref{thm:omega-graphs} expressing $\omega_{g,n}$ in terms of intersection numbers. For this purpose, we use a different parametrization of $\mathcal{M}$, denoted $(\phi_{g,n})_{g,n}$, and called ``KdV blobs".
\item[$\bullet$] Theorem~\ref{decun} showing that the even part of the $\omega_{g,n}$'s are decoupled from the odd part.
\item[$\bullet$] Theorem~\ref{vary} computing the infinitesimal generator of the flows $\mathcal{M}$ corresponding to variations of the initial data $\phi_{h,k}$. For $h,k \neq (0,1)$ and $(0,2)$, this only makes sense in $\mathcal{M}$ -- and not in $\mathcal{M}^{0}$. These formulae involve $k$-linear combinations of $\omega$'s, and contain all the necessary information to study solutions of loop equations on families of spectral curves with varying complex structures. 
\item[$\bullet$] Theorem~\ref{popore} associating numbers $F_{g}:= \omega_{g,0}$ (the free energies) to any point in $\mathcal{M} \times \bigoplus_{g \geq 0} \mathbb{C}$, in such a way that the variational equations are respected. The extra data in $\mathbb{C}$ accounts for ``integration constants".
\end{itemize}
We hope that our theory will shed some light on the geometric meaning of the topological recursion, and we think it is a good starting point to investigate further properties (in particular symplectic invariance, quantum curves), generalizations (non-commutative version, base field $\neq \mathbb{C}$), and relations to BV algebras and to homological algebra.


\section{Loop equations and blobbed topological recursion}
\label{S2}
\subsection{Loop equations}
\label{extdsn}

\subsubsection{Local setup} \label{localste} In order to fix the notation, let us recall that we consider a Riemann surface $\Sigma$ which is a disjoint union of finitely many open disks $U_i$, $i \in \llbracket 1,s \rrbracket$. We assume the data of a cover $x\,:\Sigma \rightarrow \,V$. By taking $U_i$ small enough, it is always possible to assume that it contains a single zero $p_i$ of $\dd x$. Throughout the paper we assume that all zeroes of $\dd x$ are simple. Then, again by taking $U_i$ small enough, the deck transformation of the cover $x\,:\,\Sigma \rightarrow V$ defines a holomorphic involution $\sigma_i\,:\,U_i \rightarrow U_i$ so that $x \circ \sigma_i = x$ and $\sigma_i \neq \mathrm{id}$. The $p_i$ are the fixed points of the $\sigma_i$. We find convenient to call this ``local spectral curve''; $\mathcal{O}_{\Sigma}$ is the vector space of holomorphic functions, and $K_{\Sigma}$ the canonical bundle. They are both infinite dimensional since $\Sigma$ is a collection of disks. Nevertheless, it is clear by going in local coordinates that their tensor product is well-defined. $P = \sqcup_{i = 1}^s \{p_i\}$ is the divisor of zeroes of $\dd x$, and $\Delta$ is the diagonal divisor in $\Sigma^2$.

A set of admissible correlators is a sequence $(\omega_{g,n})_{g,n}$ indexed by two integers $g \geq 0$ and $n \geq 1$. We ask that $\omega_{0,1} = y\dd x$ and $\omega_{0,2} = B$ with:
\beq
y \in \mathcal{O}_{\Sigma,P}^{*},\qquad B \in H^0\big(\Sigma^{2},K_{\Sigma}^{\boxtimes\,2}(-2\Delta)\big)^{\mathfrak{S}_{2}}\big|_{1}
\eeq
$\mathcal{O}_{\Sigma,P}^*$ stands for the set of holomorphic functions on $\Sigma$ that do not vanish on $P$. The extra subscript $1$ means that we restrict to bidifferential forms with leading coefficient $1$ on the diagonal, i.e.
$$
B(z_1,z_2) = \frac{\dd\xi(z_1)\dd \xi(z_2)}{(\xi(z_1) - \xi(z_2))^2} + O(1),\qquad z_1 \rightarrow z_2
$$
in any local coordinate $\xi$ around $z_2$. And, for $2g - 2 + n > 0$, we ask:
$$
\omega_{g,n} \in H^0\big(\Sigma^n,K_{\Sigma}^{\boxtimes\,n}(* P)\big)^{\mathfrak{S}_{n}}
$$
As the notation indicates, admissible correlators are invariant under the action of $\mathfrak{S}_{n}$ permuting the $n$ factors of $\Sigma^{n}$.

The abstract loop equations form a list of constraints on admissible correlators. In each degree $(g,n)$ with $g \geq 0$ and $n \geq 1$, it consists of the linear loop equations:
$$
\omega_{g,n}(z,I) + \omega_{g,n}(\sigma_{i}(z),I)
$$
is holomorphic when $z \rightarrow p_i$ (since it is $\sigma$-invariant, it must have at least a simple zero at $p_i$); together with the quadratic loop equations, enforcing that:
\beq
\label{Qng}Q_{g,n}(z,I) := \omega_{g - 1,n + 1}(z,\sigma_{i}(z),I) + \sum_{\substack{J \sqcup J' = I \\ h + h' = g}} \omega_{h,1 + |J|}(z,J) \otimes \omega_{h,1 + |J'|}(\sigma_{i}(z),J')
\eeq
is a holomorphic quadratic form in $z$, with at least a double zero at $z \rightarrow p_i$. Here $I$ is a generic $(n - 1)$-uple of variables in $\Sigma$.
For a given local spectral curve $\mathcal{C}$, we shall study the set $\mathcal{M}_{\mathcal{C}}$ of solutions of the abstract loop equations. It carries a filtration by $n$ and $g$. $(g,n) = (0,1)$ and $(0,2)$ are called ``unstable''. The other degrees satisfy $2g - 2 + n > 0$ and are called ``stable''.

\subsubsection{Remark on global setup}
\label{global}
We call ``spectral curve'' a surjective morphism of Riemann surfaces $x\,:\,\underline{\Sigma} \rightarrow \underline{V}$ which, after restriction to suitable open neighborhoods of the ramification divisor in $\underline{\Sigma}$ and of the branching divisor in $\underline{V}$, defines a local spectral curve in the sense of \S~\ref{localste}. We do not put any assumption on the topology (connectedness, compactness, etc.) of $\underline{\Sigma}$ and $\underline{V}$. 

The notion of admissible correlators in \S~\ref{localste} only depend on the data $\Sigma$ and a divisor $P \subseteq \Sigma$. So, we also have notion of admissible correlators $(\omega_{g,n})_{g,n}$ on a spectral curve: they are meromorphic forms globally defined on $n$ copies of $\overline{\Sigma}$ and respecting the axioms.  Let $\overline{\mathcal{C}}$ be a spectral curve, and $\mathcal{C}$ the corresponding local spectral curve. We define $\mathcal{M}_{\overline{\mathcal{C}}}$ to be the sequences of admissible correlators in $\overline{\mathcal{C}}$, whose restriction to $\mathcal{C}$ is in $\mathcal{M}_{\mathcal{C}}$. For obvious reasons, the results of this Section hold as well for $\mathcal{M}_{\overline{\mathcal{C}}}$.

\subsection{Description of the solution set}

\subsubsection{Polar part and holomorphic part}

\label{sec:PolarPart}

Since $\omega_{0,2}$ has a double pole with leading coefficient $1$, for any meromorphic $1$-form $\lambda$ in $\Sigma$:
$$
\mathcal{P}\lambda(z_0) := \sum_{i = 1}^s \Res_{z \rightarrow p_i} \lambda(z)G_i(z,z_0),\qquad G_{i}(z,z_0)  := \int_{p_i}^{z} \omega_{0,2}(\cdot,z_0)
$$
is a meromorphic $1$-form in $\Sigma$ whose divergent part at $P$ coincide with that of $\lambda$. We call it ``the polar part" of $\lambda$. We can decompose:
$$
\lambda = \mathcal{P}\lambda + \mathcal{H}\lambda
$$
where $\mathcal{H}\lambda$ is now a holomorphic form. We call it the ``holomorphic part" of $\lambda$. $\mathcal{P}$ and $\mathcal{H}$ are projectors, and $\mathcal{P}\circ \mathcal{H} = \mathcal{H}\circ \mathcal{P} = 0$. For meromorphic $1$-forms of $n$ variables, we define $\mathcal{P}_{j}$ and $\mathcal{H}_{j}$ the projectors acting on the $j$-th variable. We stress that the notion of polar part and holomorphic part depend on the data of $\omega_{0,2}$.

\begin{definition}
\label{def0} If $(\omega_{g,n})_{g,n}$ are admissible correlators, we introduce the ``purely polar part" $\omega_{g,n}^{\mathcal{P}} = \mathcal{P}_{1}\cdots\mathcal{P}_{n}\omega_{g,n}$ and the ``purely holomorphic part" $\varphi_{g,n}:= \mathcal{H}_1\cdots\mathcal{H}_{n}\omega_{g,n}$.
\end{definition}
Since $\omega_{0,1}$ and $\omega_{0,2}$ do not have pole at $p_i$'s, projecting them to the polar part with respect to one variable gives zero. For $2g - 2 + n > 0$, since $\omega_{g,n}$ has no pole on diagonals, the order to which the operations $\mathcal{P}_i$ or $\mathcal{H}_i$ are applied is indifferent. So, in both stable and unstable cases, $\varphi_{g,n}$ and $\omega_{g,n}^{\mathcal{P}}$ are symmetric in their $n$ variables.

\subsubsection{Normalized solutions}

We say that a solution of abstract loop equations is \emph{normalized} when, for any $(g,n)$ such that $2g - 2 + n > 0$, we have $\mathcal{P}_1\cdots\mathcal{P}_{n}\omega_{g,n} = \omega_{g,n}$. Then, the purely holomorphic part of $\omega_{g,n}$ vanishes for $2g - 2 + n > 0$. We denote $\mathcal{M}^0_{\mathcal{C}} \subseteq \mathcal{M}_{\mathcal{C}}$ the subset of normalized solutions. This is the framework of the usual topological recursion of Eynard and Orantin \cite{EOFg}. We define the local recursion kernel $K_i \in H^0(U_i \times \Sigma,K_{U_i}^{-1}\boxtimes K_{\Sigma})$ by the formula:
\beq
\label{recK} K_i(z,z_1) = \frac{1}{2}\,\frac{\int_{\sigma_i(z)}^{z} B(\cdot,z_1)}{\omega_{0,1}(z) - \omega_{0,1}(\sigma_i(z))}
\eeq

\begin{theorem}
\label{normje} $\omega^0$ is a normalized solution of abstract loop equations iff for any $2g - 2 + n > 0$ we have:
\bea
 \label{btoporec} & & \omega_{g,n}^0(z_1,I) \\
 & & = \sum_{i=1}^{s} \Res_{z \rightarrow p_i} K_i(z,z_1)
\bigg\{
\omega_{g-1,n+1}^0(z,\sigma_i(z),I) 
+ \!\!\!\sum'_{
\begin{smallmatrix}
h+h'=g\\
J\sqcup J'= I 
\end{smallmatrix}
}
\!\!\omega_{h,1+|J|}^0(z,J)\omega_{h',1 + |J'|}^0(\sigma_{i}(z),J')
\bigg\} \nonumber
\eea
Here, $I = \{z_2,\ldots,z_n\}$ is a generic $n$-uple of variables on $\Sigma$, and the symbol $\sum'$ means that we assume $(h,1 + |J|)$ and $(h,1 + |J'|) \neq (0,1)$. 
\end{theorem}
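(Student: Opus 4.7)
The statement is a biconditional. The reverse direction is a standard verification: if $\omega^0$ is defined inductively by \eqref{btoporec}, one checks by induction on $2g-2+n$ that each $\omega^0_{g,n}$ is purely polar in $z_1$ (the only dependence of $K_i(z, z_1)$ on $z_1$ is through $\int_{p_i}^{z_1} B$, which produces only poles at $z_1 \in P$), that the linear and quadratic loop equations hold in each variable, and that $\omega^0_{g,n}$ is $\mathfrak{S}_n$-invariant; these are the usual Eynard--Orantin checks, and the last one (symmetry) is the only genuinely nontrivial part. The forward direction carries the structural content, so I concentrate on it.

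Suppose $\omega^0 \in \mathcal{M}^0_\mathcal{C}$. Since $\omega^0_{g,n}$ is purely polar in its first variable for $2g-2+n>0$, the polar projection reads
\[ \omega^0_{g,n}(z_1, I) \;=\; \sum_{i=1}^{s} \Res_{z \to p_i} \omega^0_{g,n}(z, I)\,G_i(z, z_1). \]
Decompose $G_i(z, z_1)$ in $z$ into its $\sigma_i$-antisymmetric and $\sigma_i$-symmetric parts. The antisymmetric part equals $\tfrac{1}{2}\int_{\sigma_i z}^z B(\cdot, z_1) = K_i(z, z_1)[\omega_{0,1}(z) - \omega_{0,1}(\sigma_i z)]$ by definition of $K_i$. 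A parity analysis in a local coordinate $\zeta$ at $p_i$ with $\sigma_i^*\zeta = -\zeta$, using (i) the linear loop equation, which forces all polar terms in the Laurent expansion of $\omega^0_{g,n}(z, I)$ at $p_i$ to have a definite parity, and (ii) the vanishing of $G_i$ to order $\zeta$ at $p_i$, shows that the $\sigma_i$-symmetric part of $G_i$ makes no contribution to the residue. Hence
\[ \omega^0_{g,n}(z_1, I) \;=\; \sum_i \Res_{z \to p_i} K_i(z, z_1)\,\omega^0_{g,n}(z, I)\bigl[\omega_{0,1}(z) - \omega_{0,1}(\sigma_i z)\bigr]. \]

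The loop equations now convert this into \eqref{btoporec}. First, $K_i$ has a simple pole at $p_i$ while $\omega_{0,1}(z)$ has a simple zero there (from $dx$), so their product is holomorphic at $p_i$; combined with the linear loop equation -- which gives $\omega^0_{g,n}(z, I) + \omega^0_{g,n}(\sigma_i z, I)$ holomorphic at $p_i$ -- this yields $\sum_i \Res_{z\to p_i} K_i \omega_{0,1}(z)[\omega^0_{g,n}(z, I) + \omega^0_{g,n}(\sigma_i z, I)] = 0$. Subtracting this null residue from the previous display and simplifying gives
\[ \omega^0_{g,n}(z_1, I) \;=\; -\sum_i \Res_{z\to p_i} K_i(z, z_1) \bigl[\omega_{0,1}(z)\omega^0_{g,n}(\sigma_i z, I) + \omega^0_{g,n}(z, I)\omega_{0,1}(\sigma_i z)\bigr]. \]
Second, the quadratic loop equation \eqref{Qng} asserts that $Q_{g,n}(z, I)$ has a double zero at $p_i$; paired with $K_i$'s simple pole this gives $\sum_i \Res_{z\to p_i} K_i(z, z_1) Q_{g,n}(z, I) = 0$. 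Splitting $Q_{g,n}$ as the right-hand side of \eqref{btoporec} (i.e.\ the $\omega^0_{g-1, n+1}(z, \sigma_i z, I)$ term plus the primed sum) plus the two unstable terms $\omega_{0,1}(z)\omega^0_{g,n}(\sigma_i z, I) + \omega^0_{g,n}(z, I)\omega_{0,1}(\sigma_i z)$, and substituting the preceding display for the unstable contribution, produces exactly \eqref{btoporec}.

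The main obstacle is the parity argument eliminating the $\sigma_i$-symmetric part of $G_i$ from the residue: this requires careful bookkeeping in local coordinates at each $p_i$, with a precise convention for the $\sigma_i$-pullback on 1-forms so that the linear loop equation indeed fixes the parity of the polar terms of $\omega^0_{g,n}$. Once that key identity is in hand, the rest of the argument is an algebraic reshuffle built on the two loop equations and the pole/zero matching of $K_i$ with $\omega_{0,1}$.
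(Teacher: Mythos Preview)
Your argument is correct and is essentially the paper's own proof (\S\ref{tuna1}), just organized slightly differently: the paper symmetrizes the integrand via the $\Delta/\mathcal S$ identities \eqref{mumu1}--\eqref{mumu2} and then kills the $\mathcal S_1\omega_{g,n}$ term by the linear loop equation, whereas you split $G_i=G_i^++G_i^-$ directly and kill the $G_i^+$ contribution by the same parity reasoning; after that both routes subtract the null residue coming from $Q_{g,n}$ and isolate $\widetilde Q_{g,n}$ in the same way. Your ``main obstacle'' is in fact routine once stated cleanly (even function times even--polar coefficient in $\zeta$ has no $\zeta^{-1}$ term, and the odd part of $\omega^0_{g,n}$ is holomorphic by the linear loop equation), so you need not flag it as delicate.
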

The result is proved in \cite{EOFg,BEO}, but we shall give a short and self-contained proof in \S~\ref{tuna1}. It says that normalized solutions are determined by their $(0,1)$ and $(0,2)$ parts with formula~\eqref{btoporec}. We have a surjective map:
$$
{\rm pr}^0\,:\,\mathcal{M} \rightarrow \mathcal{M}^0
$$
which associates to a solution $(\omega_{g,n})_{g,n}$ of abstract loop equations, the unique normalized solution $\omega^{0}_{g,n}$ constructed from $(\omega_{0,1}^0,\omega_{0,2}^0) = (\omega_{0,1},\omega_{0,2})$.

\subsubsection{General solutions}
\label{soliset}
One of our main result is an explicit description of $\mathcal{M}_{\mathcal{C}}$.

\begin{theorem}
\label{bllb}The projection to the ``purely holomorphic part" defines a bijection:
$$
{\rm BTR}_{{\mathcal{C}}}^{-1}\,:\,\mathcal{M}_{\mathcal{C}} \MapsTo \mathcal{V}_{\mathcal{C}} := \mathcal{O}_{\Sigma,P}^{*} \oplus H^0\big(\Sigma^{2},K_{\Sigma}^{\boxtimes\,2}(-2\Delta)\big)^{\mathfrak{S}_{2}}\big|_{1} \oplus \bigoplus_{\substack{g \geq 0,\,\,n \geq 1 \\ 2g - 2 + n > 0}} H^0(\Sigma^{n},K_{\Sigma}^{\boxtimes n})^{\mathfrak{S}_{n}}
$$
We give the name ``blobbed topological recursion" to the inverse bijection ${\rm BTR}_{\mathcal{C}}$.
\end{theorem}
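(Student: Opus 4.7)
The strategy is an induction on $d = 2g-2+n$ driven by a one-variable \emph{master residue formula} that separates the polar and purely holomorphic parts of $\omega_{g,n}$. The first step is to show that every $(\omega_{g,n})_{g,n}\in\mathcal{M}_{\mathcal{C}}$ satisfies, in each stable degree,
\beq
\omega_{g,n}(z_1,I) = \mathcal{H}_1\omega_{g,n}(z_1,I) + \sum_{i=1}^{s}\Res_{z\to p_i} K_i(z,z_1)\,\mathcal{R}_{g,n,i}(z;I),
\eeq
where
\[
\mathcal{R}_{g,n,i}(z;I) := \omega_{g-1,n+1}(z,\sigma_i(z),I) + \sum'_{\substack{h+h'=g \\ J\sqcup J'=I}} \omega_{h,1+|J|}(z,J)\,\omega_{h',1+|J'|}(\sigma_i(z),J')
\]
only involves correlators with $2g'-2+n' < d$. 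The derivation mirrors that of Theorem~\ref{normje}: the quadratic loop equation makes $Q_{g,n}(\cdot,I)$ doubly vanish at each $p_i$; the two unstable $\omega_{0,1}$-contributions to $Q_{g,n}$, combined with the linear loop equation, reproduce the denominator of $K_i$ in~\eqref{recK}; and $\omega_{0,2}$ delivers the polar part of $\omega_{g,n}$ at $P$ via residues. The only novelty compared to the normalized case is that the purely holomorphic part of $\omega_{g,n}$ in $z_1$ is not required to vanish, which produces the correction term $\mathcal{H}_1\omega_{g,n}$.

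Injectivity of ${\rm BTR}^{-1}_{\mathcal{C}}$ follows by strong induction on $d$: given the lower-order correlators already pinned down by their blobs, the master formula in $z_1$ reduces the determination of $\omega_{g,n}$ to that of $\mathcal{H}_1\omega_{g,n}$. Since stable $\omega_{g,n}$ has no poles on diagonals, the projectors $\mathcal{H}_1,\dots,\mathcal{H}_n$ commute pairwise on it. For $k = 2,\dots,n$, apply $\mathcal{H}_1\cdots\mathcal{H}_{k-1}$ to the master formula written in the variable $z_k$: this reduces the determination of $\mathcal{H}_1\cdots\mathcal{H}_{k-1}\omega_{g,n}$ to that of $\mathcal{H}_1\cdots\mathcal{H}_k\omega_{g,n}$, up to data that is known by induction. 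After $n$ such steps one reaches $\mathcal{H}_1\cdots\mathcal{H}_n\omega_{g,n}=\varphi_{g,n}$, giving uniqueness.

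For surjectivity the same recursion is read as a definition. Given $(\omega_{0,1},\omega_{0,2},\{\varphi_{g,n}\})\in\mathcal{V}_{\mathcal{C}}$ and all $\omega_{g',n'}$ of lower degree already built, \emph{define}
\beq
\omega_{g,n} := \sum_{S\subseteq\{1,\dots,n\}}\omega_{g,n}^{(S)},\qquad \omega_{g,n}^{(\emptyset)} := \varphi_{g,n},
\eeq
where for $S\neq\emptyset$ the piece $\omega_{g,n}^{(S)}$ is obtained by applying $\prod_{i\in S\setminus\{i_0\}}\mathcal{P}_i\prod_{j\notin S}\mathcal{H}_j$ to the residue term of the master formula pivoted on any $i_0\in S$. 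By construction the linear and quadratic loop equations hold in the pivot variable, and adding the holomorphic blob $\varphi_{g,n}$ preserves them because those equations only constrain polar behavior at $P$. The principal technical obstacle is to verify that this construction yields a \emph{symmetric} correlator --- equivalently, that each $\omega_{g,n}^{(S)}$ is independent of the pivot $i_0\in S$ --- whereupon symmetry transports the loop equations to every other variable. Both symmetry and pivot-independence reduce, through a nested induction on $|S|$, to the inductive symmetry of the lower-order $\omega$'s together with the symmetry properties of $\omega_{0,2}$ and of the kernels $K_i$. This bookkeeping is intricate but essentially formal, after which ${\rm BTR}_{\mathcal{C}}$ and ${\rm BTR}^{-1}_{\mathcal{C}}$ are manifest inverses.
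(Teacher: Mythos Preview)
Your injectivity argument is sound and matches the paper's: the master residue formula~\eqref{P1wgn} determines $\mathcal{P}_1\omega_{g,n}$ from lower-degree data, and peeling off $\mathcal{H}$-projections one variable at a time reduces everything to $\varphi_{g,n}$.

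The gap is in surjectivity. You correctly identify that the ``principal technical obstacle'' is symmetry (equivalently, pivot-independence of $\omega_{g,n}^{(S)}$), but calling this ``intricate but essentially formal'' undersells what is actually required. When you apply $\mathcal{H}_j$ for $j\notin S$ to the residue term, you must push $\mathcal{H}_j$ past the residue in $z$; this is not innocuous because the $\omega_{0,2}$-terms in $\mathcal{R}_{g,n,i}$ have poles at $z=z_j$, so the projections do not commute with the residue na\"ively. The paper handles exactly this in \S\ref{tuna2}: the four cases $\mathbf{1}$, $\mathbf{2}$, $\mathbf{3.1}$, $\mathbf{3.2}$ are a careful analysis of how $\mathcal{H}_z$, $\mathcal{P}_z$, $\mathcal{H}_{\sigma_i(z)}$, $\mathcal{P}_{\sigma_i(z)}$ interact with the kernel $K_i$ and with $\omega_{0,2}$. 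The output is the explicit bipartite-graph formula~\eqref{HAPB}, whose sum over all partitions $A\sqcup B$ is \emph{manifestly} symmetric (permuting the leaf labels permutes the graphs). That manifest symmetry is what transports the loop-equation check of \S\ref{converse}---done only in the variable $z_1$---to all variables.

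Your pivot-based definition is a legitimate alternative route, but to close it you would need either to reproduce the case analysis of \S\ref{tuna2} in disguise, or to prove directly that $\mathcal{P}_{S}\mathcal{H}_{S^c}$ applied to the residue formula is independent of which $i_0\in S$ is the pivot. The latter is equivalent to the symmetry statement for the normalized recursion (already non-trivial in \cite{EOFg}) together with its compatibility with the blobs, and it does not follow from ``the symmetry properties of $\omega_{0,2}$ and of the kernels $K_i$'' alone. As written, this step is asserted rather than proved.
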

By Definition~\ref{def0} and its following remark, the unstable part in the right-hand side is $(\omega_{0,1},\omega_{0,2})$, and the stable part is $(\varphi_{g,n})_{2g - 2 + n > 0}$ and is indeed symmetric in its $n$ variables. The result can be reformulated as an exact sequence:
$$
\mathcal{M}^0 \longrightarrow \mathcal{M} \longrightarrow \bigoplus_{2g - 2 + n > 0} H^0(\Sigma^n,K_{\Sigma}^{\boxtimes n})^{\mathfrak{S}_{n}} \longrightarrow 0
$$

We prove this theorem in by constructing ${\rm BTR}_{\mathcal{C}}\,:\,\varphi \mapsto \omega$ by necessary conditions, and checking in \S~\ref{converse} it indeed defines a solution of abstract loop equations. To describe this inverse map, we assume that $\omega$ is a solution of loop equations, and we decompose it:
\beq
\label{demufm}\omega_{g,n}(z_1,\ldots,z_n) = \mathcal{H}_{1}\omega_{g,n} + \mathcal{P}_{1}\omega_{g,n}
\eeq
When $\omega_{g,n}$ is a solution of abstract loop equations, we show in \S~\ref{tuna1} that the first term is given by a formula of type \eqref{btoporec}:
\begin{equation}
\mathcal{P}_{1}\omega_{g,n}(z_1,I) = \sum_{i=1}^{s} \Res_{z \rightarrow p_i} K_i(z,z_1)
\bigg\{
\omega_{g-1,n+1}(z,\sigma_i(z),I) 
+ \!\!\!\sum'_{
	\begin{smallmatrix}
	h+h'=g\\
	J\sqcup J'= I 
	\end{smallmatrix}
}
\!\!\omega_{h,1+|J|}(z,J)\omega_{h',1 + |J'|}(\sigma_{i}(z),J')
\bigg\} 
\label{P1wgn}
\end{equation}
In the case of normalized solutions, the second term $\mathcal{H}_{1}\omega_{g,n}$ vanishes, and this achieves the proof of Theorem~\ref{normje}. But in general, the second term is non-zero, and can be computed as a special case of a more general formula. We show in \S~\ref{tuna2} that, for any partition $A \sqcup B= \llbracket 1,n \rrbracket$, we have:
\beq
\label{HAPB}\mathcal{H}_{A}\mathcal{P}_{B}\omega_{g,n} := \Big(\bigotimes_{a \in A} \mathcal{H}_{a}\Big)\Big(\bigotimes_{b \in B} \mathcal{H}_{b}\Big) \omega_{g,n} = \sum_{\Gamma \in {\rm Bip}^0_{g,n}(A,B)} \frac{\varpi_{\Gamma}^0}{|{\rm Aut}\,\Gamma|}
\eeq
We underline again that, since $\omega_{g,n}$ has no pole on the diagonals for $2g - 2 + n > 0$, the order of the operations $\mathcal{H}_{a}$ and $\mathcal{P}_{b}$ does not matter: the left-hand side is well-defined and $\mathcal{H}_{A}\mathcal{P}_{B}\omega_{g,n} = \mathcal{H}_{B}\mathcal{P}_{A}\omega_{g,n}$. We define ${\rm Bip}^0(A,B)$ as the set of bipartite graphs $\Gamma$ with the properties:
\begin{itemize}
\item[$\bullet$] Vertices $v$ are either of type $\varphi$ or $\omega^0$, and carry an integer label $h(v)$ (the genus), such that the valency satisfies $2h(v) - 2 + d(v) > 0$.
\item[$\bullet$] Edges can only connect $\varphi$ to $\omega^0$ vertices.
\item[$\bullet$] There are $n$ unbounded edges (=leaves) labeled from $1$ to $n$. The leaves with label $a \in A$ must be incident to $\varphi$-vertices, and the leaves with label $b \in B$ to $\omega^0$-vertices.
\item[$\bullet$] Each $\omega^0$ vertex must be incident to at least a leaf (so, if $B =\emptyset$, then we have no $\omega^0$ vertices).
\item[$\bullet$] $\Gamma$ is connected and $b_1(\Gamma) + \sum_{v} h(v) = g$.
\end{itemize}
Since all vertices are stable, the last constraint on the topology selects only finitely many graphs. The automorphisms of $\Gamma$ are the permutation of the edges preserving the graphs and the leaf labels. The weight $\varpi_{\Gamma}^0(z_1,\ldots,z_n)$ is computed as follows. We assign variables $z_1,\ldots,z_n \in \Sigma$ to the leaves, and integration variables $z_e$ to the edges. To each vertex $v$ whose set of variables on incident edges is $Z(v)$, we assign a local weight $\omega^0_{h(v),d(v)}(Z(v))$ or $\varphi^0_{h(v),d(v)}$ depending on its type. This does not depend on any order given to the variables in $Z(v)$ since $\varphi$'s and $\omega^0$'s are symmetric in their variables. We then multiply all local weights: each edge variable $z_e$ appears twice in some combination $\omega^0(z_e,\ldots)\varphi(z_e,\ldots)$ and we integrate it out with the pairing:
\beq
\label{pairing0}\big\langle \omega(z_e,\ldots)\varphi(z_e,\ldots)\big\rangle := \sum_{i = 1}^s \Res_{z_e \rightarrow p_i} \omega(z_e,\ldots)\int^{z_e}_{p_i} \varphi(z_e',\ldots)
\eeq
Since all vertices are stable, the integrand does not contain a $\omega_{0,2}$ and thus has poles only when one of the variables goes to the divisor $P$. Therefore, the final result does not depend on the order of integration of the edge variables: we obtain a meromorphic form in $n$ variables $\varpi_{\Gamma}(z_1,\ldots,z_n)$.

The second term in \eqref{demufm} is then computed from:
$$
\mathcal{H}_{1}\omega_{g,n}(z_1,I) = \sum_{A' \sqcup B = I} \mathcal{H}_{\{1\}\sqcup A'}\mathcal{P}_{B}\omega_{g,n}(z_1,I)
$$
Since the right-hand side contains the projection on the holomorphic part for at least with respect to one variable, it is expressed in terms of $\omega_{g',n'}^0$ with $2g' - 2 + n' < 0$ and $\varphi_{g',n'}$ with $2g' - 2 + n' \leq 2g - 2 + n$. We underline that the variable $z_1$ (chosen arbitrarily) plays a special role in the decomposition \eqref{demufm}, and in particular the two terms $\mathcal{P}_{1}\omega_{g,n}$ and $\mathcal{H}_{1}\omega_{g,n}$ are in general not symmetric in their $n$ variables.

\begin{remark} Formula~\ref{HAPB} computes in particular $\omega_{g,n}^{\mathcal{P}}:= \mathcal{P}_{1}\cdots\mathcal{P}_{n}\omega_{g,n}$ as a sum over ${\rm Bip}_{g,n}^0(\emptyset,\llbracket 1,n \rrbracket$). Since these graphs may have internal $\varphi$-vertices, $\omega_{g,n}^{\mathcal{P}}$ is in general not equal to $\omega_{g,n}^0$, although we have $\mathcal{P}_{1}\cdots\mathcal{P}_{n}\omega_{g,n}^{\mathcal{P}} = \omega_{g,n}^{\mathcal{P}}$. This implies that in general, the sequence of correlators $\omega_{0,1},\omega_{0,2},(\omega_{g,n}^{\mathcal{P}})_{2g - 2 + n > 0}$ is not a solution of loop equations.
\end{remark}

\begin{figure}[h!]
\includegraphics[width=0.45\textwidth]{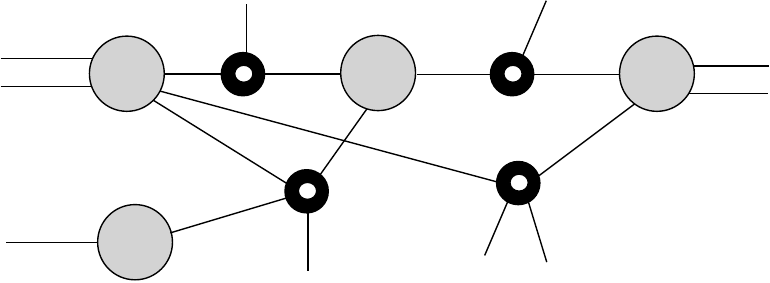}
\caption{\label{FBip0} Example of a graph in ${\rm Bip}^0_{g,n}$. The $\omega^0$ vertices appear as black-white vertices.}
\end{figure}

\subsubsection{$\omega$ from its purely polar and purely holomorphic part}

We can rewrite $\mathcal{H}_{A}\mathcal{P}_{B}\omega_{g,n}$ solely in terms of $\omega^{\mathcal{P}}$'s and $\varphi$'s.
Let ${\rm Bip}^{\mathcal{P}}_{g,n}(A,B)$ be the subset of graphs in ${\rm Bip}^0_{g,n}(A,B)$ that do not contain internal $\varphi$-vertices, and we define a new weight $\varpi_{\Gamma}^{\mathcal{P}}$ for such graphs: we rename $\omega^{\mathcal{P}}$ and $\varphi$ the type of vertices in ${\rm Bip}^{\mathcal{P}}(A,B)$, and assign local weights $\omega^{\mathcal{P}}_{h(v),d(v)}$ or $\varphi_{h(v),d(v)}$ to a vertex $v$ according to its type. The total weight $\varpi_{\Gamma}^{\mathcal{P}}$ is then computed as in \S~\ref{soliset}, by integrating out all edge variables with the pairing \eqref{pairing0}.

\begin{proposition}
\label{comuadfd}If $(\omega_{g,n})_{g,n}$ is a solution of abstract loop equations, we have for $2g - 2 + n > 0$ and partition $A \sqcup B = \llbracket 1,n \rrbracket$:
$$
\mathcal{H}_{A}\mathcal{P}_{B}\omega_{g,n} = \sum_{\Gamma \in {\rm Bip}^{\mathcal{P}}_{g,n}(A,B)} \frac{\varpi_{\Gamma}^{\mathcal{P}}}{|{\rm Aut}\,\Gamma|}
$$
\end{proposition}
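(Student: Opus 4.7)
The plan is to start from \eqref{HAPB}---which gives $\mathcal{H}_A\mathcal{P}_B\omega_{g,n}$ as a sum over ${\rm Bip}^0_{g,n}(A,B)$---and to recast it as a sum over ${\rm Bip}^{\mathcal{P}}_{g,n}(A,B)$ by means of a contraction map
$$
\Pi\,:\,{\rm Bip}^0_{g,n}(A,B) \longrightarrow {\rm Bip}^{\mathcal{P}}_{g,n}(A,B).
$$
Given $\Gamma \in {\rm Bip}^0_{g,n}(A,B)$, call a $\varphi$-vertex \emph{external} if it carries a leaf, \emph{internal} otherwise. Delete all external $\varphi$-vertices together with their incident edges: the remaining bipartite subgraph $\Gamma^{\diamond}$ has vertices only of type $\omega^0$ or internal $\varphi$. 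Contract each connected component $C$ of $\Gamma^{\diamond}$ to a single $\omega^{\mathcal{P}}$-vertex $v_C$ of genus $h(v_C) := b_1(C) + \sum_{v \in C} h(v)$, whose valency equals the number of $B$-leaves of $\omega^0$-vertices in $C$ plus the number of edges of $\Gamma$ running from $C$ to external $\varphi$-vertices. A short Euler-characteristic computation yields $2h(v_C) - 2 + d(v_C) = \sum_{v \in C}\bigl(2h(v) - 2 + d(v)\bigr)$, so all contracted vertices are stable, the total genus and leaf-labelling are preserved, and $\Pi(\Gamma) \in {\rm Bip}^{\mathcal{P}}_{g,n}(A,B)$.

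The fibre $\Pi^{-1}(\Gamma')$ over a fixed $\Gamma'$ is parametrised, at each $\omega^{\mathcal{P}}$-vertex $v'$ of $\Gamma'$, by the choice of a graph $\Gamma_{v'} \in {\rm Bip}^0_{h(v'),d(v')}\bigl(\emptyset,\llbracket 1, d(v')\rrbracket\bigr)$ together with a bijection between its $d(v')$ leaves and the half-edges of $\Gamma'$ incident to $v'$. By the Remark preceding the Proposition---namely the special case $A = \emptyset$ of \eqref{HAPB}---the local weight at $v'$ expands as
$$
\omega^{\mathcal{P}}_{h(v'),d(v')}(z_1,\ldots,z_{d(v')}) = \sum_{\Gamma_{v'}} \frac{\varpi^0_{\Gamma_{v'}}(z_1,\ldots,z_{d(v')})}{|{\rm Aut}\,\Gamma_{v'}|}.
$$
Substituting this into $\varpi^{\mathcal{P}}_{\Gamma'}$ and performing the pairings \eqref{pairing0} both along the surviving edges of $\Gamma'$ and along the internal edges of each $\Gamma_{v'}$, one recognises the weight $\varpi^0_{\Gamma}$ of the graph $\Gamma$ obtained by inserting each $\Gamma_{v'}$ at $v'$. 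A standard orbit--stabiliser argument for vertex insertion converts the factors $|{\rm Aut}\,\Gamma'|^{-1} \prod_{v'} |{\rm Aut}\,\Gamma_{v'}|^{-1}$ together with the sum over leaf-labellings into the single factor $|{\rm Aut}\,\Gamma|^{-1}$, giving
$$
\sum_{\Gamma \in \Pi^{-1}(\Gamma')} \frac{\varpi^0_{\Gamma}}{|{\rm Aut}\,\Gamma|} = \frac{\varpi^{\mathcal{P}}_{\Gamma'}}{|{\rm Aut}\,\Gamma'|}.
$$
Summing over $\Gamma' \in {\rm Bip}^{\mathcal{P}}_{g,n}(A,B)$ and comparing with \eqref{HAPB} yields the claim.

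The main obstacle---and the only analytic input---is the commutation of the nested residues underlying the iterated pairings \eqref{pairing0}. Because every vertex of $\Gamma$, $\Gamma'$, and each $\Gamma_{v'}$ is stable, no $\omega_{0,2}$ enters the local weights, hence the integrands are singular only along the divisor $P$; Fubini on residues therefore allows one to evaluate first the pairings along the edges contracted by $\Pi$. This produces precisely $\omega^{\mathcal{P}}_{h(v_C),d(v_C)}$ at each component $C$, after which the remainder of the argument is pure graph-insertion bookkeeping of automorphism factors.
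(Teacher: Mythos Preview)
Your proof is correct and follows essentially the same approach as the paper: both arguments erase the external $\varphi$-vertices, recognise each remaining connected component as a graph in ${\rm Bip}^0_{h,d}(\emptyset,\llbracket 1,d\rrbracket)$, and then resum these components into $\omega^{\mathcal{P}}$-vertices via the $A=\emptyset$ case of \eqref{HAPB}. Your version is somewhat more explicit than the paper's about the Euler-characteristic check for stability of the contracted vertices, the orbit--stabiliser bookkeeping of automorphism factors, and the justification that the nested residues commute; the paper's proof in \S\ref{cororoo} is terser on all three points.
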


This is an easy consequence of \eqref{HAPB} proved in \S~\ref{cororoo}  by a resummation of internal $\varphi$-vertices in ${\rm Bip}^0_{g,n}(A,B)$.

\subsection{Diagrammatics and examples}

\subsubsection{Normalized solutions: skeleton graphs}
\label{Skekek}
We first review the diagrammatics of the usual topological recursion, i.e. the computation of any normalized solution of abstract loop equations:
$$
\omega^0 := {\rm BTR}_{\mathcal{C}}[\omega_{0,1},\omega_{0,2},(0)_{2g - 2 + n > 0}]
$$
as a sum over skeleton graphs. Indeed, by applying repeatedly the residue formula \eqref{btoporec}, we arrive in $2g - 2 + n$ steps to an expression of $\omega_{g,n}$ involving only the recursion kernel $K_i$ and $\omega_{0,2}$ (Proposition~\ref{pmm1} below). At each step of the recursion, so there are many ways to go further, since we need to choose a leg to apply the residue formula. We will explain that, if we choose an initial leg $i_0 \in \llbracket 1,n \rrbracket$, there is a canonical way to make further choices. Though this approach breaks the symmetry, it has the advantage to restrict the number of terms.

\begin{figure}[h!]
\includegraphics[width=0.9\textwidth]{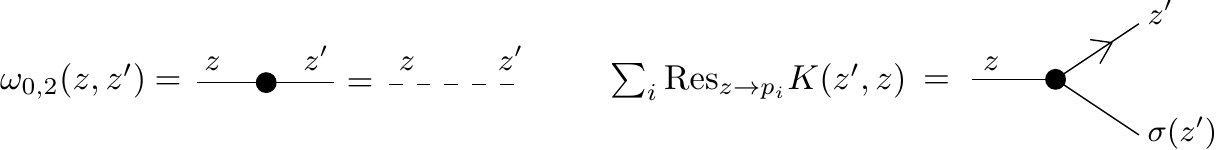}
\caption{\label{FKB} Building elements of the skeleton graphs.}
\end{figure}

Let $i_0 \in \llbracket 1,n \rrbracket$. For $2g - 2 + n > 0$, we define a set ${\rm Skel}_{g,n}(i_0)$ of graphs $G$ such that:
\begin{itemize}
\item[$\bullet$] $G$ has $n$ leaves labeled from $1$ to $n$, trivalent vertices, bivalent vertices, and its first Betti number is $g$. 
\item[$\bullet$] the trivalent vertices have a cyclic order of their incident half-edges.
\item[$\bullet$] $G$ is equipped with a spanning tree $T$, going through all trivalent vertices, and rooted at the leaf $i_0$. All the other leaves are incident to a bivalent vertex.
\end{itemize}
We shall impose an extra constraint on the graphs, but first need some vocabulary. 
\begin{definition}
Two trivalent vertices $v$ and $v'$ are \emph{parent} if the shortest path $i_0 \rightarrow v$ along $T$ contains (or is contained) in the shortest path $i_0 \rightarrow v'$ along $T$.
\end{definition}
\noindent We require the following property for our graphs:
\begin{itemize}
\item[$\bullet$] If the two edges incident to a bivalent vertex are separating, they should be incident to a leaf and a trivalent vertex. If they are non separating, they must be incident to two parent trivalent vertices.
\end{itemize}
Alternatively, one can erase the black bivalent vertices and replace their two incident edges with a single dashed edge.
\begin{definition}
There is a unique way to arrive at a trivalent vertex $v$ following a path in $T$ from the root. The edge on which we arrive to $v$ is called the source, and we say that the next (resp. previous) edge according to the cyclic order at $v$ is ``left'' (resp. ``right'').
\end{definition}
\noindent We give a few example of skeleton graphs in $\mathscr{G}_{g,n}(i_0)$, and counterexamples in Figure~\ref{FSkel-forbid}.

The weight of $G$ is obtained by: (1) assigning points $z_1,\ldots,z_n \in \Sigma$ to the leaves; (2) browsing $T$, at each trivalent vertex $v$ met, assigning an integration variable $z_{v}$ to the left edge, and $\sigma(z_{v})$ to the right edge. Now, all edges carry a variable. (3) Assigning a local weight $K(z'',z') := \sum_{i = 1}^s \mathbf{1}_{U_i}(z'')\,K_i(z'',z')$ to a vertex $v$ in $T$ whose source edge carries $z'$ and left edge carries $z''$; (4) assigning a local weight $\omega_{0,2}(z_{e_1},z_{e_2})$ to a bivalent vertex with incident edge variables $\{z_{e_1},z_{e_2}\}$; (5) multiplying all local weights; (6) for each vertex $v$ in $T$ (starting with the last one in the exploration of $T$ from the root) apply the operation $\sum_{i = 1}^s \Res_{z_{v} \rightarrow p_i}$ to the expression obtained so far. In the equivalent representation of $G$ where bivalent vertices were replaced by dashed edges, each half-dashed edge carries a variable prescribed by the assignment made at leaves and then by browsing $T$.

\begin{figure}[h!]
\includegraphics[width=0.6\textwidth]{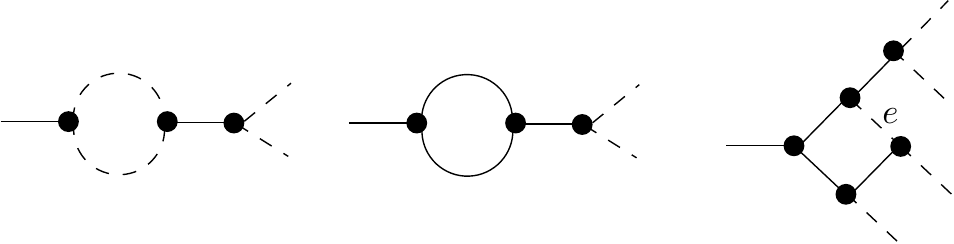}
\caption{\label{FSkel-forbid} Examples of graphs that do not belong to ${\rm Skel}_{g,n}(i_0)$. In the first graph, $K$ vertices do not form a tree. In the second graph. In the third graph, the edge $e$ connects vertices that are not parent.}
\end{figure}

\begin{proposition}[\cite{EOFg}]
\label{pmm1} Denote $\omega_{g,n}^{0}$ the normalized solution associated to $\omega_{0,1}$ and $\omega_{0,2}$ -- see Theorem~\ref{normje}. For any $2g - 2 + n > 0$ and $i_0 \in \llbracket 1,n \rrbracket$, we have:
\beq
\omega_{g,n}^{0}(z_1,\ldots,z_n) = \sum_{G\,\in\,{\rm Skel}_{g,n}(i_0)} \varpi_{G}^{{\rm Skel}}(z_1,\ldots,z_n)
\eeq
\end{proposition}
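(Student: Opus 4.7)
\medskip\noindent\textbf{Proof proposal.} My plan is a straightforward induction on $2g-2+n$, unfolding the recursion of Theorem~\ref{normje} step by step and matching what appears with the combinatorial data defining ${\rm Skel}_{g,n}(i_0)$. The base cases are $(g,n)=(0,3)$ and $(1,1)$. For $(0,3)$, the recursion formula with chosen leg $i_0$ produces a single term $\sum_i \Res_{z \to p_i} K_i(z,z_{i_0}) \omega_{0,2}(z,z_j)\omega_{0,2}(\sigma_i(z),z_k)$, which corresponds to the unique skeleton graph with one trivalent vertex, the root edge to $i_0$, and two bivalent vertices incident to the remaining leaves. For $(1,1)$, one gets $\sum_i \Res_{z\to p_i} K_i(z,z_{i_0})\omega_{0,2}(z,\sigma_i(z))$, which is the unique skeleton graph with one trivalent vertex and a non-separating dashed edge linking the two non-root half-edges.

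For the inductive step, I would apply Theorem~\ref{normje} with leg $i_0$; the residue-kernel prefactor $K_i(z,z_{i_0})$ becomes the unique trivalent vertex adjacent to the root. By convention $z$ is assigned to the \emph{left} edge and $\sigma_i(z)$ to the \emph{right} edge of this vertex, matching the asymmetry in the argument placement $\omega_{g-1,n+1}^0(z,\sigma_i(z),I)$ and $\omega_{h,1+|J|}^0(z,J)\omega_{h',1+|J'|}^0(\sigma_i(z),J')$. On each stable subcorrelator that appears, I would apply the inductive hypothesis taking $z$ (resp.\ $\sigma_i(z)$) as the new distinguished leg; the factors $\omega_{0,2}(z,\cdot)$ or $\omega_{0,2}(\sigma_i(z),\cdot)$ that can occur in the split term (since $\sum'$ only excludes $(0,1)$) produce bivalent vertices, separating in the case $\omega_{0,2}(z,z_j)$ with a leaf $z_j$, and non-separating when the $\omega_{0,2}$ edge eventually closes onto a variable coming from a deeper trivalent vertex. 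Glueing these sub-trees at the left and right edges of the root trivalent vertex produces a rooted spanning tree $T$ as required, and the first Betti number matches because each use of the inductive hypothesis is additive in $g-b_1$ and each non-separating dashed edge adds exactly $1$ to $b_1$.

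The heart of the argument is to check that this unfolding gives a bijection with ${\rm Skel}_{g,n}(i_0)$ with the correct weight. The leaf-and-cyclic-order structure is built in by construction; the only real obstacle is the \emph{parent} constraint for non-separating dashed edges. I would argue as follows: a non-separating bivalent vertex must arise when a variable $\sigma(z_v)$ attached to some trivalent vertex $v$ is eventually paired by an $\omega_{0,2}$ with a variable produced at a strictly deeper trivalent vertex $v'$ in the recursion. Since the recursion at $v'$ is performed only after the recursion at $v$ (because $v'$ sits inside a sub-correlator created by $v$), the shortest path from $i_0$ to $v$ in $T$ is contained in the shortest path from $i_0$ to $v'$, which is precisely the parent condition. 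Conversely, given a skeleton graph $G \in {\rm Skel}_{g,n}(i_0)$, one reconstructs the order of recursion steps from the root-to-leaves traversal of $T$, and the left/right convention plus the parent condition guarantee that every residue, every $K$ kernel and every $\omega_{0,2}$ appearing in the unfolded expression matches exactly one local weight prescribed in the definition of $\varpi_G^{\rm Skel}$. The uniqueness of this reconstruction yields the desired equality and completes the induction.
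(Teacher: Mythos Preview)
Your proof is correct and follows essentially the same approach as the paper's own proof: both unfold the recursion of Theorem~\ref{normje} repeatedly and identify the resulting terms with the combinatorics of ${\rm Skel}_{g,n}(i_0)$. The paper's argument is a terse four-sentence sketch (``apply the recursion at $i_0$, declare a left edge, always recurse next on the left edge, and one easily proves the properties''); your version is a more careful inductive elaboration of the same idea, with an explicit discussion of why the parent constraint on non-separating dashed edges is automatic.

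One small point of comparison: the paper frames the canonical choice as a \emph{sequential} rule (``always apply the recursion at the next step to $e_L$''), whereas you frame it as a \emph{tree-recursive} rule (apply the inductive hypothesis to each stable sub-factor using its freshly created variable as distinguished leg). These are equivalent, since ``always go left'' is really just the prescription that each $\omega^0$-factor is recursed on at the variable it inherited from its parent vertex; your induction implements this implicitly. Your explicit check of the parent condition (a non-separating $\omega_{0,2}$ always links a vertex to one of its descendants in $T$, because the second endpoint is produced strictly later in the recursion) is a genuine addition over the paper's ``one easily proves''.
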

\begin{proof} If we choose the variable attached to an edge $i_0$ to apply a step of the recursion, one produces a trivalent vertex with source edge $i_0$. We declare one of the two other edges to be the left one $e_{L}$ -- this is an arbitrary choice of cyclic order -- and always decide to apply the recursion at the next step to the edge $e_{L}$. In this way, one easily proves that $\omega_{g,n}^0$ is a sum over graphs which have the properties announced.
\end{proof}

\vspace{0.2cm}

\begin{remark} Since $K(z,z_1)$ has a pole at $z_1 = z,\sigma(z)$ and $\omega_{0,2}(z_1,z_2)$ has a pole at $z_1 = z_2$, the order of taking the residues does matter. However, two graphs that differ by their cyclic ordering at the vertices have the same weight, since it merely correspond to a change of variable of integration $z \rightarrow \sigma(z)$.
\end{remark}

\begin{remark}
By consistency, the sum over skeleton graphs must be symmetric in the $n$ variables, although $i_0$ seem to play a special role. The symmetry can be checked by direct computation, see \cite{EOFg}.
\end{remark}

\begin{figure}[h!]
\includegraphics[width=\textwidth]{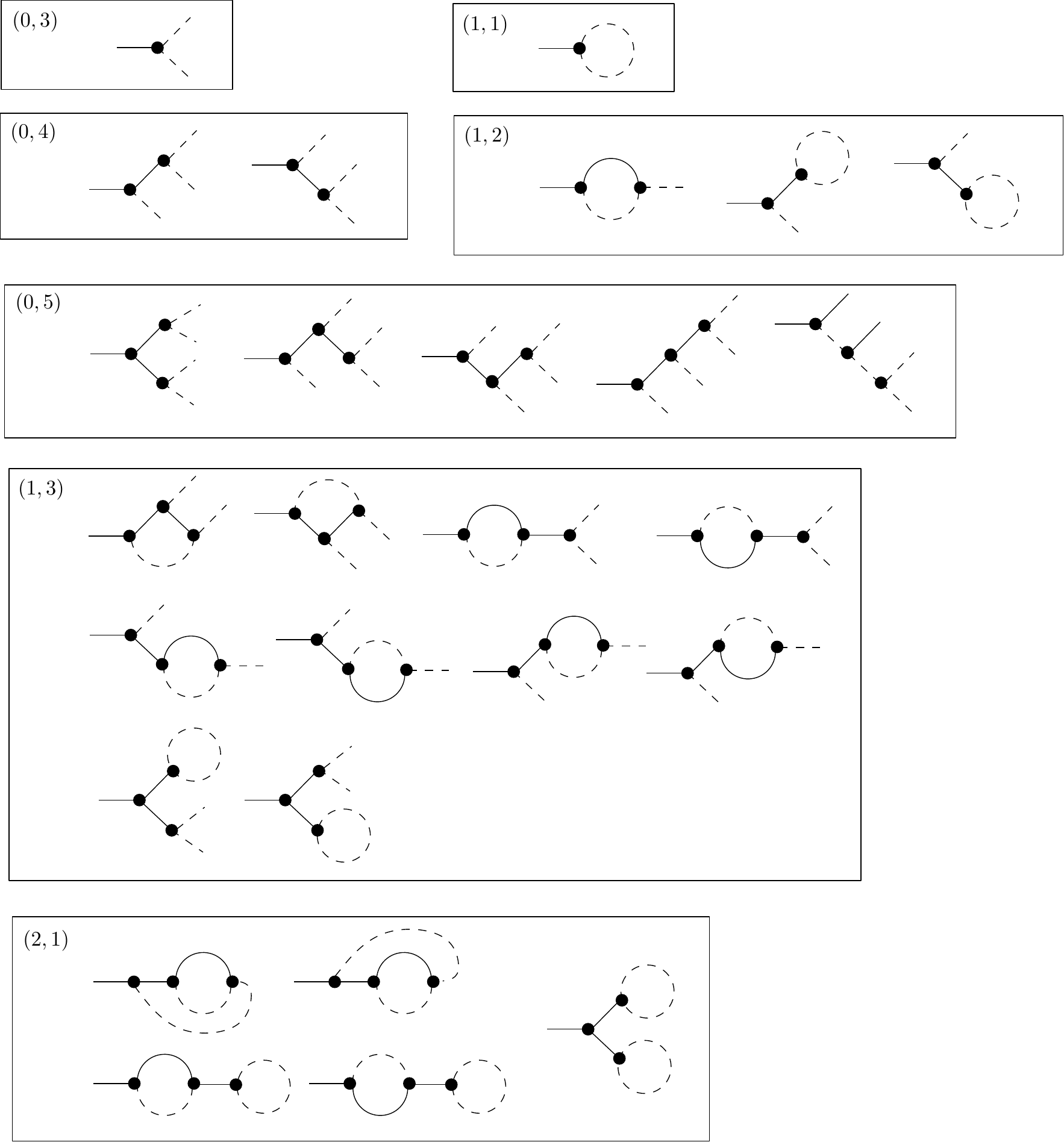}
\caption{\label{FSkel-gn} List of skeleton graphs. It remains to distribute the leaf labels $2,\ldots,n$ in all possible inequivalent ways. For instance, the graph for $(0,3)$ gives rise to 2 terms, whether we find the leaf $2$ on the right on the left.}
\end{figure}

\subsubsection{General solution: bipartite skeleton graphs}
\label{Skekek1}
To compute a general solution $(\omega_{g,n})_{g,n}$ of the loop equations, we can project with respect to each variable either to holomorphic or polar part and use formula~\eqref{HAPB}:
\begin{corollary}
For any $2g - 2 + n > 0$, we have:
$$
\omega_{g,n}(z_1,\ldots,z_n) = \sum_{\Gamma \in {\rm Bip}^{0}_{g,n}} \frac{\varpi_{\Gamma}^0(z_1,\ldots,z_n)}{|{\rm Aut}\,\Gamma|}
$$
where ${\rm Bip}^0_{g,n} = \bigsqcup_{A \sqcup B = \llbracket 1,n \rrbracket} {\rm Bip}^0_{g,n}(A,B)$, and the local weight of $\omega^0$-vertices is itself computed as a sum over skeleton graphs of Theorem~\ref{pmm1}.
\end{corollary}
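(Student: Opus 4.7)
The plan is to combine three ingredients already at our disposal: (i) the variable-by-variable decomposition $\mathrm{id} = \mathcal{H}_j + \mathcal{P}_j$ on each of the $n$ factors, (ii) the graphical formula~\eqref{HAPB} of Theorem~\ref{bllb} for the mixed projections $\mathcal{H}_A\mathcal{P}_B\omega_{g,n}$, and (iii) the skeleton-graph expansion of $\omega^0_{g,n}$ furnished by Proposition~\ref{pmm1}. The corollary is essentially a bookkeeping statement, so the proof is a short assembly rather than a new computation.

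First I would apply the projector identity $\mathrm{id}=\mathcal{H}_j+\mathcal{P}_j$ in each of the $n$ variables. Since for $2g-2+n>0$ the form $\omega_{g,n}$ has no pole on the diagonals, the operators $\mathcal{H}_j$ and $\mathcal{P}_j$ acting on different variables commute, so expanding the product yields the (finite) decomposition
\[
\omega_{g,n} \;=\; \sum_{A \sqcup B = \llbracket 1,n\rrbracket} \mathcal{H}_A \mathcal{P}_B\,\omega_{g,n},
\]
with $\mathcal{H}_A = \bigotimes_{a\in A}\mathcal{H}_a$ and $\mathcal{P}_B = \bigotimes_{b \in B}\mathcal{P}_b$. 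This step is immediate from the discussion preceding Definition~\ref{def0}.

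Second, I invoke Theorem~\ref{bllb}, and specifically formula~\eqref{HAPB}, which expresses each summand as
\[
\mathcal{H}_A \mathcal{P}_B\,\omega_{g,n} \;=\; \sum_{\Gamma \in {\rm Bip}^0_{g,n}(A,B)} \frac{\varpi^0_\Gamma}{|{\rm Aut}\,\Gamma|}.
\]
Adding over all splittings $A\sqcup B=\llbracket 1,n\rrbracket$ and using the definition ${\rm Bip}^0_{g,n} = \bigsqcup_{A\sqcup B} {\rm Bip}^0_{g,n}(A,B)$ gives precisely the claimed formula, where for each graph $\Gamma$ the partition $(A,B)$ is recovered from the types of vertices incident to the labeled leaves. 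The final assertion about local weights is then just Proposition~\ref{pmm1}: in the definition of $\varpi^0_\Gamma$ given in \S\ref{soliset}, each $\omega^0_{h(v),d(v)}$ vertex weight can be replaced by its skeleton-graph expansion.

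The only point requiring a line of justification is that the two operations—summing over partitions $(A,B)$ and expanding $\omega^0$-vertices in skeletons—commute with the residue integrations that define the edge pairing~\eqref{pairing0}. This is harmless: the integrand at each edge has no poles on the diagonals (all vertices being stable), so residues and finite sums commute freely. I do not anticipate any real obstacle here; the content of the corollary lies entirely in Theorem~\ref{bllb} and Proposition~\ref{pmm1}, and the proof is a one-line synthesis of the two.
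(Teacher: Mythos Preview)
Your proof is correct and follows essentially the same approach as the paper: the paper presents this corollary as an immediate consequence of decomposing $\omega_{g,n}$ via $\mathrm{id}=\mathcal{H}_j+\mathcal{P}_j$ in each variable and then applying formula~\eqref{HAPB}, which is exactly what you do. Your additional remarks on commutation of residues with finite sums are unobjectionable extra care, not a departure in method.
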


\subsection{Computation of $\mathcal{P}_{1}\omega_{g,n}$ (proof of Equation~\eqref{btoporec})}
\label{tuna1}

The proof already appears in \cite[Section 1]{BEO}, but we give here a short and self-contained proof. This formula is important for it will also be used in \S~\ref{Hignproo} to compute the second term $\mathcal{H}_{1}\omega_{g,n}$.

\label{proofTBR}

If $\lambda$ is a $1$-form, we denote:
$$
\Delta\lambda(z) = \lambda(z) - \lambda(\sigma(z)),\qquad \mathcal{S}\lambda(z) = \lambda(z) + \lambda(\sigma(z))
$$
in terms of the local involution $\sigma$ near the $p_i$'s. For $\lambda,\mu$ two $1$-forms, we have:
\bea
\label{mumu1}\lambda(z)\mu(z) + \lambda(\sigma(z))\mu(\sigma(z)) & = & \frac{1}{2}\big(\Delta\lambda(z)\cdot\Delta\mu(z) + \mathcal{S}\lambda(z)\cdot\mathcal{S}\mu(z)\big) \\
\label{mumu2} \lambda(z)\mu(\sigma(z)) + \lambda(\sigma(z))\mu(z) & = & \frac{1}{2}\big(\Delta \lambda(z)\cdot\Delta \mu(z) - \mathcal{S}\lambda(z)\cdot \mathcal{S}\mu(z)\big)
\eea
If $\lambda(z,\ldots)$ depends on many variables $\Delta_{z}$ or $\mathcal{S}_{z}$ the action of these operators act on the first variable.

Let $(g,n)$ such that $2g - 2 + n > 0$. We compute:
\bea
\mathcal{P}_{1}\omega_{g,n}(z_1,I) & = & \sum_{i = 1}^s \Res_{z \rightarrow p_i} G_i(z,z_1)\,\omega_{g,n}(z,I) \nonumber \\
& = & \frac{1}{2} \sum_{i = 1}^s \Res_{z \rightarrow p_i} \big\{G_i(z,z_0)\omega_{g,n}(z,I) + G_i(\sigma_i(z),z_0)\omega_{g,n}(\sigma_i(z),I)\big\} \nonumber \\
& = & \frac{1}{4} \sum_{i = 1}^s \Res_{z \rightarrow p_i} \big\{\Delta_{1}G_i(z,z_0)\cdot\Delta_{1}\omega_{g,n}(z,I) + \mathcal{S}_{1}G_i(z,z_0)\cdot\mathcal{S}_{1}\omega_{g,n}(z,I)\big\} \nonumber
\eea
The linear loop equation tells us that $\mathcal{S}_{1}\omega_{g,n}(z,I)$ is holomorphic when $z \rightarrow P$, hence the residue of the second term vanishes. Let us rearrange the expression $Q_{g,n}$ involved in the quadratic loop equations \eqref{Qng},  by writing apart $\omega_{g,n}$ and using \eqref{mumu2}:
\beq
\label{mnun}Q_{g,n}(z,I) = \frac{1}{2}\big(-\Delta_{1}\omega_{g,n}(z,I)\cdot\Delta\omega_{1,0}(z) + \mathcal{S}_{1}\omega_{g,n}(z,I) \cdot \mathcal{S}\omega_{0,1}(z)\big) + \widetilde{Q}_{g,n}(z,I)
\eeq
The remainder is:
$$
\widetilde{Q}_{g,n}(z,I) = \omega_{g - 1,n + 1}(z,\sigma_i(z),I) + \sum_{\substack{h + h' = g \\ J \sqcup J' = I}}^{'} \omega_{h,1 + |J|}(z,J)\omega_{h',1 + |J'|}(\sigma_i(z),J')
$$
where $\sum^{'}$ means that the terms containing $\omega_{0,1}$ were excluded. Therefore:
$$
\mathcal{P}_{1}\omega_{g,n}(z_0,I) = \sum_{i = 1}^s \Res_{z \rightarrow p_i} \frac{\Delta_{1}G_i(z,z_0)}{2\,\Delta \omega_{0,1}(z)}\Big(\widetilde{Q}_{g,n}(z,I) - Q_{g,n}(z,I) + 2\,\mathcal{S}_{1}\omega_{g,n}(z,I)\cdot\mathcal{S}\omega_{0,1}(z)\Big)
$$
The assumption $y \in \mathcal{O}_{\Sigma,P}^*$ implies that $\Delta_{z}\omega_{0,1}$ has exactly a double zero when $z \rightarrow p_i$, so the prefactor $\Delta_{1}G_i(z,z_0)/\Delta\omega_{0,1}$ has exactly a simple pole at $z \rightarrow p_i$. We conclude by observing that the two last terms do not contribute to the residue since: $Q_{g,n}(z,I)$ has a double zero when $z \rightarrow p_i$ according to the quadratic loop equations ; and  $\mathcal{S}_{1}\omega_{g,n}(z,I)$ and $\mathcal{S}\omega_{0,1}(z)$ both have at least a simple zero according to the linear loop equations.

\subsection{Computation of $\mathcal{H}_{A}\mathcal{P}_{B}\omega_{g,n}$ (proof of Equation~\eqref{HAPB})}
\label{tuna2}
\label{Hignproo}

If $B = \emptyset$, there is only one graph in ${\rm Bip}_{g,n}^{0}(A,B)$, namely the $\varphi$-vertex by definition of the purely holomorphic part, so formula~\eqref{HAPB} holds. We shall prove it in general by induction on increasing values of $2g - 2 + n > 0$ and $|B|$.

We first consider $2g - 2 + n = 1$, that is $(g,n) = (0,3)$ and $(g,n) = (1,1)$. For the case $(0,3)$, we have from the residue formula~\eqref{P1wgn}:
\bea
\mathcal{P}_{3}\omega_{0,3}(z_1,z_2,z_3) & \!\!\!\!= &\!\!\! \sum_{i = 1}^s \Res_{z \rightarrow p_i} K_i(z,z_3)\big\{\omega_{0,2}(z,z_1)\omega_{0,2}(\sigma_i(z),z_2) + \omega_{0,2}(\sigma_i(z),z_1)\omega_{0,2}(z,z_2)\big\}  \nonumber \\
& \!\!\!\!= &\!\!\! \omega_{0,3}^{0}(z_1,z_2,z_3) \nonumber
\eea
The right-hand side is already of the form $\mathcal{P}_{2}\mathcal{P}_{3}\lambda(z_1,z_2,z_3)$, thus:
$$
\mathcal{H}_{1}\mathcal{H}_{2}\mathcal{P}_{3}\omega_{0,3} = 0,\qquad \mathcal{H}_{1}\mathcal{P}_{2}\mathcal{P}_{3}\omega_{0,3} = 0,\qquad \mathcal{P}_{1}\mathcal{P}_{2}\mathcal{P}_{3}\omega_{0,3} = \omega_{0,3}^0
$$
This agrees with formula \eqref{HAPB}, since ${\rm Bip}_{0,3}^0(A,B)$ is empty unless $A$ or $B$ is empty, and ${\rm Bip}_{0,3}^0(\emptyset,\{1,2,3\})$ contains only the graph made of a $\omega_{0,3}^0$ vertex.

For the case $(g,n) = (1,1)$, the residue formula \eqref{P1wgn} gives:
$$
\mathcal{P}_{1}\omega_{1,1}(z_1) = \sum_{i = 1}^s \Res_{z \rightarrow p_i} K_i(z,z_1)\,\omega_{0,2}(z,\sigma_i(z)) = \omega_{1,1}^0
$$
and this agrees again with formula \eqref{HAPB} for $A = \emptyset$ and $B = \{1\}$.

Fix $(g,n)$  with $2g - 2 + n \geq 2$ and $\overline{B}$ a non-empty subset of $\overline{N} = \llbracket 1,n \rrbracket$. Assume the formula~\eqref{HAPB} is proved for all $(g',n')$ such that $2g' - 2 + n' < 2g - 2 + n$, and for $|B'| < |\overline{B}|$. Pick an arbitrary element $b_0 \notin B$. We denote:
$$
\qquad A = \overline{N}\setminus \overline{B}, \qquad  B = \overline{B} \setminus \{b_0\},\qquad N = \overline{N}\setminus\{b_0\},\qquad N[j] = N\setminus\{j\}
$$
We first write the residue formula \eqref{P1wgn} with respect to the variable $z_{b_0}$ for $\mathcal{P}_{b_0}\omega_{g,n}$:
 \bea
& & \mathcal{P}_{b_0}\omega_{g,n}(z_{\overline{N}}) \nonumber \\
& = & \sum_{i = 1}^{s} \Res_{z \rightarrow p_i} K_i(z,z_{b_0})\Big\{\omega_{g - 1,n + 1}(z,\sigma_i(z),z_{N}) + \sum_{\substack{J \sqcup J' = N \\ h + h' = g}}^{''} \!\!\!\! \omega_{h,|J| + 1}(z,z_J)\omega_{h',|J'| + 1}(\sigma_i(z),z_{J'}) \nonumber \\
& & + \sum_{j \in N} \omega_{0,2}(z,z_j)\omega_{g,n - 1}(\sigma_i(z),z_{N[j]}) + \omega_{0,2}(\sigma_i(z),z_{j})\omega_{g,n - 1}(z,z_{N[j]}) \Big\} \nonumber
\eea
The $\sum^{''}$ means that we exclude the terms containing $\omega_{0,1}$, or $\omega_{0,2}$ -- which was written in the last line. Note that, since we assumed $2g -2 + n \geq 2$, $\omega_{g - 1,n + 1} \neq \omega_{0,2}$ and the last line does not contain products of two $\omega_{0,2}$. We would like to apply $\mathcal{H}_{A}\mathcal{P}_{B}$ to the right-hand side. We can commute these operations with the residue in $z$ when the integrand has no pole when $z_{a},z_{b}$ approaches $z$ or $\sigma_i(z)$. This is the case for all variables in the second line, but not for the variable $j$ in the last line. However, after taking the residue in $z$ the $j$-th term in the last line is already in ${\rm Im}\,\mathcal{P}_{b_0}\mathcal{P}_{j}$. Hence:
\bea
& & \mathcal{H}_{A}\mathcal{P}_{\overline{B}}\,\omega_{g,n}(z_{\overline{N}})  \nonumber \\
& & = \sum_{i = 1}^{s} \Res_{z \rightarrow p_i} K_i(z,z_{b_0})\bigg\{\mathcal{H}_{A}\mathcal{P}_{B}\omega_{g - 1,n + 1}(z,\sigma_i(z),z_{N}) \nonumber \\
 & &  + \sum_{\substack{A' \sqcup A'' = A \\ B' \sqcup B'' = B \\ h' + h'' = g}}^{''} \!\!\!\! \mathcal{H}_{A'}\mathcal{P}_{B'}\omega_{h',1 + |A'| + |B'|}(z,z_{A'\sqcup B'})\cdot\mathcal{H}_{A''}\mathcal{P}_{B''}\omega_{h'',1 + |A''| + |B''|}(\sigma_i(z),z_{A''\sqcup B''}) \nonumber \\
& &  + \sum_{j \in B} \omega_{0,2}(z,z_j)\mathcal{H}_{A}\mathcal{P}_{B[j]}\omega_{g,n - 1}(\sigma_i(z),z_{N[j]}) + \omega_{0,2}(\sigma_i(z),z_{j})\mathcal{H}_{A}\mathcal{P}_{B[j]}\omega_{g,n - 1}(z,z_{N[j]}) \bigg\} \nonumber \\
\label{HAPBint} 
\eea
Then, we project the integrand in each variable $z$ and $\sigma_i(z)$ either to the holomorphic part or to the polar part, and we can replace the integrand with sums over bipartite graphs using the induction hypothesis. We would like to push the residue in $z$ inside the weights of the graphs, and we need to discuss the type of terms that appear.

\begin{figure}[h!]
\includegraphics[width=0.95\textwidth]{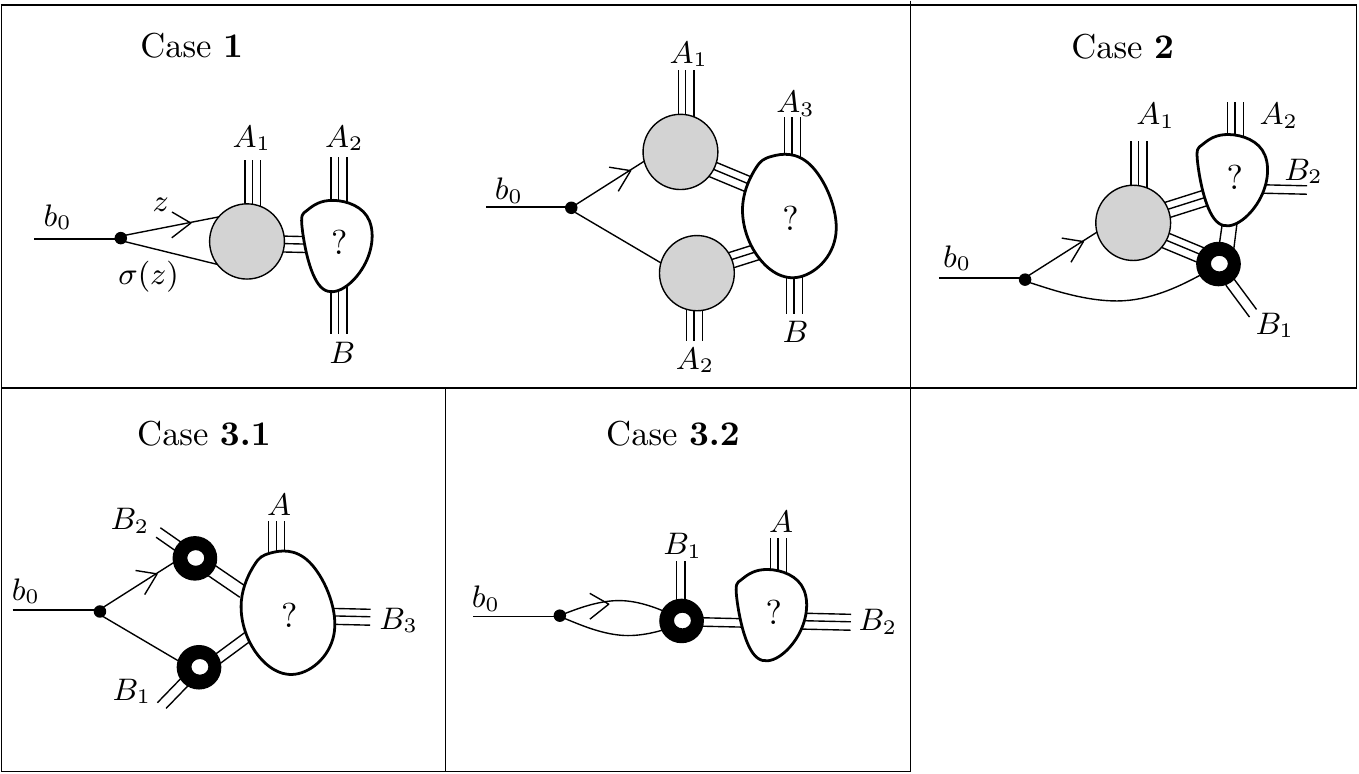}
\caption{\label{FProofPlant} Terms appearing in Equation~\eqref{HAPBint}. The areas marked by ``?'' are subgraphs that need not be connected. $(B_j)_j$ (resp., $(A_j)_{j}$) form a partition of $A$ (resp., $B$).}
\end{figure}

\vspace{0.2cm}

\textbf{1} \noindent $\bullet$ If we used $\mathcal{H}_{z}\mathcal{H}_{\sigma_i(z)}$, we get a term of the form:
$$
\sum_{i = 1}^s \Res_{z \rightarrow p_i} K_i(z,z_{b_0})\,\mathcal{R}\big[U(z,\sigma_i(z),z_{M})\big]
$$
for some $U$ which is a holomorphic form in $z$ and in $\sigma_i(z)$. $z_{M}$ represents other external variables or integration variables (appearing in the weight of a graph) that the integrand may involve. The operation $\mathcal{R}$ stands for the residue at $P$ taken on the integration variables. The only poles of the integrand in those variables occur at $P$, so $\mathcal{R}$ can be put in front of the residue on $z$. Then, it can be written:
$$
\mathcal{R}\bigg[\sum_{i = 1}^s \sum_{z \rightarrow p_i} K_i(z,z_{b_0})\,\Res_{z' \rightarrow z} \Res_{z'' \rightarrow \sigma_i(z)} \omega_{0,2}(z,z')\omega_{0,2}(\sigma_i(z),z'')\,\int^{z'}\int^{z''} U(\cdot,\cdot,z_{M})\bigg]
$$
Noting the double pole at $z = z'$ in $\omega_{0,2}$, we commute the residue in $z$ and $z'$ with the formula:
$$
\Res_{z \rightarrow p_i} \Res_{z' \rightarrow z} = \Res_{z' \rightarrow p_i} \Res_{z \rightarrow p_i} - \Res_{z \rightarrow p_i} \Res_{z \rightarrow p_i} = \sum_{i' = 1}^s \Res_{z' \rightarrow p_{i'}} \Res_{z \rightarrow p_i} - \Res_{z \rightarrow p_i} \Res_{z' \rightarrow p_{i'}}
$$
Since the integrand is regular when $z' \rightarrow p_{i'}$, only remains the first term. We commute likewise the residue in $z''$ and $z$ and recognize the expression of $\mathcal{P}_{1}\omega_{0,3} = \omega_{0,3}^0$:
\bea
& & \mathcal{R}\bigg[\sum_{i',i'' = 1}^s \Res_{z' \rightarrow p_{i'}} \Res_{z'' \rightarrow p_{i''}} \Big(\int^{z'}\int^{z''} U(\cdot,\cdot,z_{M})\Big) \sum_{i = 1}^{s} \Res_{z \rightarrow a_i} K_i(z,z_{b_0}) \omega_{0,2}(z,z') \omega_{0,2}(\sigma_i(z),z'')\bigg] \nonumber \\
& & = \mathcal{R}\bigg[\frac{1}{2} \sum_{i',i'' = 1}^s \Res_{z' \rightarrow p_{i'}} \Res_{z'' \rightarrow p_{i''}} \Big(\int^{z'}\int^{z''} U(\cdot,\cdot,z_{M})\Big) \omega_{0,3}^0(z',z'',z_{b_0})\bigg]  \nonumber
\eea
We recognize the pairing \eqref{pairing0}:
\bea
& & \mathcal{R}\bigg[\frac{1}{2}\sum_{i',i'' = 1}^s \Res_{z' \rightarrow p_{i'}} \Res_{z'' \rightarrow p_{i''}} \Big(\int^{z'}\int^{z''} U(\cdot,\cdot,z_{M})\Big)\omega_{0,3}^0(z',z'',z_{b_0})\bigg] \nonumber \\
\label{yyy1}  & & = \frac{1}{2}\,\big\langle \omega_{0,3}^0(z_{b_0},z',z'')\,U(z',z'',z_{M}) \big\rangle 
\eea
The brackets represent integration over $z',z''$, and over all other integration variables present in $z_{M}$.

\vspace{0.2cm}

\textbf{2} \noindent $\bullet$  If we use $\mathcal{H}_{z}\mathcal{P}_{\sigma_i(z)}$, we get a term of the form:
$$
\sum_{i = 1}^s \Res_{z \rightarrow p_i} K_i(z,z_{b_0})\,\mathcal{R}\Big[U(z,z_{M})\,\omega_{h,k}^0(\sigma_i(z),z_{M'})\Big]
$$
for some $U(z,z_{M})$ which is a holomorphic form in $z$, and some $(h,k) \neq (0,1)$ and $(0,2)$. $z_{M}$ and $z_{M'}$ represents other external or integration variables (appearing in the weight of a graph) that the integrand may involve. Since $(h,k) \neq (0,2)$, we can pull $\mathcal{R}$ in front of the residue in $z$. In what concerns the $U$, we can write again:
\bea
& & \mathcal{R}\bigg[\sum_{i = 1}^s \Res_{z \rightarrow p_i} K_i(z,z_{b_0})\,\Res_{z' \rightarrow z} \Big(\int^{z'} U(\cdot,z_{M})\Big)\omega_{0,2}(z,z')\,\omega_{h,k}^0(\sigma_i(z),z_{M'})\bigg]  \nonumber \\
\label{yyy2}&  & = \bigg\langle \Big[\sum_{i = 1}^s \Res_{z \rightarrow p_i} K_i(z,z_{b_0})\,\omega_{0,2}(z,z')\,\omega_{h,k}^0(\sigma_i(z),z_{M'})\Big] U(z',z_{M})\bigg\rangle 
\eea
We also encounter in the initial sum the term where the role of $z$ and $\sigma_i(z)$ is interchanged.

\vspace{0.2cm}

\textbf{3} \noindent $\bullet$ Using $\mathcal{P}_{z}\mathcal{P}_{\sigma_i(z)}$ yields a term of the form:
\bea
{\rm case}\,\,\textbf{3.1} & & \sum_{i = 1}^{s} \Res_{z \rightarrow p_i} K_i(z,z_{b_0})\,\mathcal{R}\Big[\omega_{h,k}^0(z,\sigma_i(z),z_{M})\,U(z_{M'})\Big] \nonumber \\
{\rm or}\,\,{\rm case}\,\,\textbf{3.2} & & \sum_{i = 1}^s \Res_{z \rightarrow p_i} K_i(z,z_{b_0})\,\mathcal{R}\Big[\omega_{h,k}^0(z,z_{M})\omega_{h',k'}^0(\sigma_i(z),z_{M'})\,U(z_{M''})\Big] \nonumber
\eea
for some $(h,k) \neq (0,1)$ and $(0,2)$. Since $\omega^0_{0,2}$ vertices do not appear in any graph of ${\rm Bip}^0$, we can pull $\mathcal{R}$ before the residue in $z$. The result is the pairing:
\bea
\label{yyy31}& & \bigg\langle \Big[\sum_{i = 1}^{s} \Res_{z \rightarrow p_i} K_i(z,z_{b_0})\omega_{h,k}^0(z,\sigma_i(z),z_{M})\Big]\,U(z_{M''})\bigg\rangle \\
\label{yyy32} {\rm or} & & \bigg\langle\Big[\sum_{i = 1}^s \Res_{z \rightarrow p_i} K_i(z,z_{b_0})\omega_{h,k}^0(z,z_{M})\omega_{h',k'}^0(\sigma_i(z),z_{M'})\Big]\,U(z_{M''}) \bigg\rangle 
\eea

\vspace{0.2cm}

When we replace the $\mathcal{H}\omega$'s in the right-hand side of \eqref{HAPBint} with sums over bipartite graphs, each term comes from a (maybe disconnected) bipartite graph $\Gamma$. We shall associate to each term a new bipartite graph $\overline{\Gamma} \in {\rm Bip}_{g,n}^0(A,\overline{B})$. It remains to check that all graphs of ${\rm Bip}_{g,n}^0(A,\overline{B})$ appear with a total weight that matches \eqref{HAPB}. Here is how we build $\overline{\Gamma}$, with notations taken from the previous list:
\begin{itemize}
\item[$\bullet$] In case $\mathbf{1}$, we connect the leaves $e$ and $\sigma(e)$ of $\Gamma$ carrying the variables $z$ and $\sigma(z)$, to a $\omega_{0,3}^0$-vertex, and add a leaf labeled $b_0$ to it. Note that $e$ and $\sigma(e)$ in $\Gamma$ become (internal) edges in $\overline{\Gamma}$. This graph has a symmetry factor of $1/2$ due to permutation of $z$ and $\sigma_i(z)$.
\item[$\bullet$] In case $\mathbf{2}$, denote $e'$ the leaf of $\Gamma$ carrying the variable $z'$, and decompose $M'$ into the set of leaves $M_{L}'$ and edges $M_{E}'$. We erase the $\omega_{h,k}^0$-vertex incident to $\sigma(e)$, and replace it with an $\omega_{h,k + 1}^0$-vertex with leaves labeled by $M_{L}'$ and $b_0$, and incident to edges labeled by $M_{E}'$ as well as the two edges $e'$ and $\sigma(e)$.
\item[$\bullet$] In case $\mathbf{3.1}$, decompose again $M$ into $M_{L}$ and $M_{E}$. We erase the $\omega_{h,k}^0$-vertex incident to the leaves carrying $z$ and $\sigma(z)$, and replace it with a $\omega_{h + 1,k - 1}^{0}$-vertex with leaves $M_{L}$ and $b_0$, and incident to edges $e$ and $\sigma(e)$.
\item[$\bullet$] In case $\mathbf{3.2}$, decompose $M$ into $M_{L}$ and $M_{E}$ -- resp. $M'$ in $M_{L}'$ and $M_{E}'$. We erase the $\omega_{h,k}^0$- and $\omega_{h',k'}^0$-vertices incident to $e$ and $\sigma(e)$, and replace them with a $\omega_{h + h',k + k' + 1}^0$-vertex with leaves labeled by $M_{L}\sqcup M_{L}'$ and $b_0$, and incident to edges $M_{E}$ and $M_{E}'$ as well as $e$ and $\sigma(e)$.
\end{itemize}

Now, we collect all the terms in \eqref{HAPBint} which are associated to the same graph $\overline{\Gamma} \in {\rm Bip}_{g,n}^0(A,\overline{B})$. We classify them according to the typology of the $\omega_{h_0,k_0}^0$-vertex $v$ to which $b_0$ is incident. Let us denote $E$ the set of edges (resp. $L$ the set of leaves excluding $b_0$) to which $v$ is incident, $M$ the set of all other edges and leaves in $\overline{\Gamma}$, and $U_{v}^{\overline{\Gamma}}(z_{E},z_{M})$ the product of local weights over all vertices in $\overline{\Gamma}$ except $v$. We recognize from \eqref{yyy1}-\eqref{yyy32} that the terms associated to $\overline{\Gamma}$ is:
\bea
& & \frac{1}{|{\rm Aut}\,\Gamma|}\,\bigg\langle \Big[\sum_{i = 1}^s \Res_{z \rightarrow a_i} K_i(z,z_{b_0})\Big\{\omega_{h_0 - 1,k_0 + 1}^0(z,\sigma_i(z),z_{E},z_{L}) \nonumber \\
& & + \sum_{\substack{J \sqcup J' = E \sqcup L \\ f + f' = h_0}}^{'} \omega_{f,|J| + 1}^0(z,z_{J})\omega_{f',|J'| + 1}^0(\sigma_i(z),z_{J'})\Big\}\Big]\,U(z_{E},z_{M})\bigg\rangle \nonumber \\
& & = \frac{\big\langle \mathcal{P}_{b_0}\omega_{h_0,k_0}^0(z_{b_0},z_{E},z_{L})\,U_{v}^{\overline{\Gamma}}(z_{E},z_{M}) \big\rangle}{|{\rm Aut}\,\Gamma|} = \frac{\big\langle \omega_{h,k}^0(z_{b_0},z_{E},z_{L})\,U_{v}^{\overline{\Gamma}}(z_{E},z_{M})\big\rangle}{|{\rm Aut}\,\Gamma|}  \nonumber
\eea
This expression coincides with $\varpi_{\overline{\Gamma}}$, and we have proved the formula \eqref{HAPB} for $\mathcal{H}_{A}\mathcal{P}_{\overline{B}}\omega_{g,n}$. We conclude to the general case by induction.

\subsection{End of proof of Theorem~\ref{bllb}}
\label{converse}

Now, we have to show that for arbitrary sequence of holomorphic symmetric $1$-form in $n$ variable $(\varphi_{g,n})_{2g - 2 + n > 0}$, the correlators $\omega_{g,n}$ defined by the sum of the right-hand sides of formulas~\eqref{P1wgn} and \eqref{HAPB} satisfy abstract loop equations. We first observe that, since the right-hand side of \eqref{HAPB} is holomorphic in its first variable, the right-hand side of \eqref{btoporec} can be identified with $\mathcal{P}_{1}\omega_{g,n}$. The linear loop equation is clearly equivalent to:
$$
0 = \mathcal{P}_{1}\mathcal{S}_{1}\omega_{g,n}(z_1,I)
$$
and since $\Delta\omega_{0,1}$ has exactly a double zero at $p_i$, the quadratic loop equation is equivalent to:
$$
\mathcal{P}_{1}\Big[\frac{Q_{g,n}(z_1;I)}{\Delta\omega_{0,1}(z_1)}\Big] = 0
$$
where $\mathcal{P}_{1}$ acts on the $1$-form in the variable $z_1$ to its right. The ratio $Q_{g,n}(z_1,I)/\Delta\omega_{0,1}(z_1)$ is odd in the variable $z_1$ with respect to $\sigma$ -- we can symmetrize and obtain the equivalent equation:
\beq
\label{lefnu}\sum_{i = 1}^s \Res_{z \rightarrow p_i} K_i(z,z_1)\,Q_{g,n}(z;I) = 0
\eeq
where we remind the expression of the recursion kernel \eqref{recK}.

Assume $2g - 2 + n > 0$. The right-hand side of \eqref{btoporec} gives a formula:
\bea
\mathcal{P}_{1}\omega_{g,n}(z_1,I) & = & \sum_{i = 1}^s \Res_{z \rightarrow p_i} K_i(z,z_1)\,\widetilde{Q}_{g,n}(z,I) \nonumber \\
& = & \sum_{i = 1}^s \Res_{z \rightarrow p_i} \Big(\int_{\sigma_i(z)}^{z} \!\!\!\!\!\!\omega_{0,2}(\cdot,z_1)\Big)\frac{\widetilde{Q}_{g,n}(z,I)}{2 \Delta\omega_{0,1}(z)}
\eea
Since the ratio $\widetilde{Q}_{g,n}(z,I)/\Delta\omega_{0,1}(z)$ is odd in the variable $z$ with respect to $\sigma$, we can desymmetrize:
$$
\mathcal{P}_{1}\omega_{g,n}(z_1,I) = \sum_{i = 1}^s \Res_{z \rightarrow p_i} \Big(\int_{p_i}^{z} \!\!\omega_{0,2}(\cdot,z_1)\Big)\frac{\widetilde{Q}_{g,n}(z,I)}{\Delta\omega_{0,1}(z)} = \mathcal{P}_{1}\Big[\frac{\widetilde{Q}_{g,n}(z_1,I)}{\Delta \omega_{0,1}(z_1)}\Big]
$$
The operators $\mathcal{S}_{1}$ and $\mathcal{P}_{1}$ obviously commute, thus:
$$
\mathcal{S}_{1}\mathcal{P}_{1}\omega_{g,n}(z_1,I) = \mathcal{P}_{1}\mathcal{S}_{1}\Big[\frac{\widetilde{Q}_{g,n}(z_1,I)}{\Delta \omega_{0,1}(z_1)}\Big] = 0
$$
This justifies the linear loop equation.

In the left-hand side of \eqref{lefnu} and using the expression \eqref{mnun} for $Q_{g,n}$, we recognize a piece coinciding with the definition of $\mathcal{P}_{1}\omega_{g,n}$:
\bea
& & \sum_{i = 1}^s \Res_{z \rightarrow p_i} K_i(z,z_1)\,Q_{g,n}(z;I) = -\mathcal{P}_{1}\omega_{g,n}(z_1,I) \nonumber \\
& & + \sum_{i = 1}^{s} \Res_{z \rightarrow p_i} \frac{\int_{\sigma_i(z)}^{z}\omega_{0,2}(\cdot,z_1)}{2\,\Delta\omega_{0,1}(z)}\,\frac{1}{2}\Big[\mathcal{S}_{1}\omega_{g,n}(z,I)\cdot\mathcal{S}\omega_{0,1}(z) -\Delta_{1}\omega_{g,n}(z,I)\cdot\Delta \omega_{0,1}(z) \Big] \nonumber
\eea
Thanks to the linear loop equation, the first term in the bracket has a double zero at $z \rightarrow p_i$, hence does not contribute to the residue. The last term in the right-hand side is thus equal to:
$$
\frac{1}{4}\,\sum_{i = 1}^s \Res_{z \rightarrow p_i} \Big(\int_{\sigma_i(z)}^{z} \omega_{0,2}(\cdot,z_1)\Big) \Delta_{1}\omega_{g,n}(z,I)
$$
Since $\mathcal{S}_{1}\omega_{g,n}(z,I)$ is holomorphic, it can be added to the integrand without changing the result. Then, we can desymmetrize the formula and recognize $\mathcal{P}_{1}\omega_{g,n}(z_1,I)$. This justifies the quadratic loop equation.

\subsection{Proof of Proposition~\ref{comuadfd}}
\label{cororoo}
Equation~\ref{HAPB} expresses the left-hand side as a sum over graphs  in ${\rm Bip}^0_{g,n}(A,B)$ with weight involving $\varphi$'s and $\omega^0$. Those graphs may contain internal $\varphi$-vertices, i.e. that do not have any incident leaves. In particular, ${\rm Bip}^0_{g,n}(\emptyset,\llbracket 1,n \rrbracket)$ is the set of graphs where all $\varphi$-vertices are internal and according to Formula~\ref{HAPB}
\beq
\label{WPgn}\omega^{\mathcal{P}}_{g,n} = \sum_{\Gamma \in {\rm Bip}^{0}_{g,n}(\emptyset,\llbracket 1,n \rrbracket)} \frac{\varpi_{\Gamma}^0}{|{\rm Aut}\,\Gamma|}
\eeq
If $\overline{\Gamma}$ is a graph in ${\rm Bip}^0_{g,n}(A,B)$, let us erase all its $\varphi$-vertices which are incident to at least a leaf. We obtain a number of connected components $\overline{\Gamma}_{C}$ of total genus $h_C$ and with $k_C$ new leaves (which were initially connected to some external $\varphi$-vertices) and $k_C'$ old leaves (those incident to the $\omega^0$-vertices in $\Gamma_{C}$). The $k_{C}$ leaves are not ordered, but the connected components $\Gamma_{C}$ can be grouped in clusters according to the leaf labels incident to the $\varphi$-vertices they were incident to in $\overline{\Gamma}$. This is taken into account by the automorphism factor in $\overline{\Gamma}$. Each $\overline{\Gamma}_{C}$ can be seen as a graph in ${\rm Bip}^{\mathcal{P}}_{h_{C},k_{C}}$, and all such graph appears.  Including the symmetry factor, the contribution of the connected component $C$ to the weight $\varpi_{\overline{\Gamma}}^{0}$ (before integrating out the variables carried by edges incident to external $\varphi$-vertices) is precisely $\omega^{\mathcal{P}}_{h_C,k_C + k_{C'}}(Z_C)$ given by \eqref{WPgn}. This entails Corollary~\ref{comuadfd}.

\section{Relation to intersection theory on the moduli space}
\label{S3} 

The goal of this section is to relate the formulas~\eqref{P1wgn} and~\eqref{HAPB} for general solutions of the abstract loop equations to the intersection theory of the moduli spaces of curves $\oM_{g,n}$, namely, we express these solutions in terms of the intersection indices of $\psi$-classes. In order to do this, we need a different parametrization of the space of solutions. The parameters are again symmetric holomorphic forms that we denote by $\phi_{g,n}$, but they in general differ from $\varphi_{g,n}$ of Section~\ref{S2}. We give below both the formulas for the expansions of $\omega_{g,n}$'s and $\varphi_{g,n}$'s in terms of these new parameters.

\subsection{Expansions and differential operators}

We are interested in the expansions of the differential forms $\omega_{g,n}$ near the points $p_1,\ldots,p_s$ in the local coordinates $\zeta_1,\ldots,\zeta_s$ defined by (up to an arbitrarily chosen sign) $x=\zeta_i^2/2 + a_i$, $i \in \llbracket 1,s \rrbracket$, where $a_i=x(p_i)$, $i \in \llbracket 1,s \rrbracket$. Let $\tilde{U}_i$ the domain of definition of $\zeta_i$, and $\tilde{\Sigma}$ the disjoint union of $\tilde{U}_i$. In presence of many variables, $\zeta_{i,\ell}$ denotes this coordinate for the $\ell$-th variable. We introduce the ``standard bidifferential":
\beq
\label{wo2sk} \omega_{0,2}|_{{\rm KdV}}(z_1,z_2) = \delta_{ij}\,\frac{\dd\zeta_{i,1}\otimes\dd\zeta_{j,2}}{(\zeta_{i,1} - \zeta_{j,2})^2},\quad {\rm for}\,\,(z_1,z_2) \in U_i \times U_j
\eeq
and $\mathcal{H}^{{\rm KdV}}$ the projection to the holomorphic part (Definition~\ref{sec:PolarPart}) using $\omega_{0,2}|^{{\rm KdV}}$ instead of $\omega_{0,2}$. For $2g - 2 + n > 0$, we define:
$$
\phi_{g,n}:= \mathcal{H}_{1}^{{\rm KdV}}\cdots\mathcal{H}_{n}^{{\rm KdV}}\omega_{g,n}
$$
and we take the conventions:
$$
\phi_{0,1}:= \omega_{0,1} = y\dd x,\qquad \phi_{0,2} = \omega_{0,2} - \omega_{0,2}|_{{\rm KdV}}
$$
The $\phi$'s are symmetric, holomorphic forms\footnote{Note that the ``standard bidifferential" is only defined in a local spectral curve, and cannot in general be extended to the global case discussed in \S~\ref{global}. In this respect, $\varphi$'s can be defined globally, and $\phi$'s only locally at ramification points}. in their $n$ variables in $\Sigma = \sqcup_{i = 1}^s U_i$. We may call them ``KdV-blobs". 

Let us first fix the notation for the expansions. For any $(g,n)$, we write the Taylor expansions at the ramification points:
$$
\phi_{g,n}(z_1,\ldots,z_n) \sim \phi_{g,n}^{(i_1,\ldots,i_n)} := \sum_{d_1,\ldots,d_n \geq 0} \phi_{g,n}\left[\begin{smallmatrix}
i_1 & \ldots & i_n \\
d_1 & \ldots & d_n
\end{smallmatrix}\right] 
\bigotimes_{\ell=1}^n \zeta_{i_\ell,\ell}^{d_\ell} \dd\zeta_{i_\ell,\ell}
$$
near the point $(p_{i_1},\ldots,p_{i_n})\in\Sigma^n$. 

Now we introduce differential operators that use the coefficients of these expansions. For any $(g,n)\neq(0,1)$ we denote:
\begin{equation}
\label{phihatgn} \DP_{g,n}^{(i_1,\ldots,i_n)} := \hbar^{g+n-1}\sum_{d_1,\ldots,d_n \geq 0} \phi_{g,n}\left[\begin{smallmatrix}
i_1 & \ldots & i_n \\
2d_1 & \ldots & 2d_n
\end{smallmatrix}\right] 
\prod_{\ell=1}^n (2d_\ell-1)!! \frac{\d}{\d t_{i_\ell,d_\ell}},
\end{equation}
where $t_{i,d}$ indexed by integers $i \in \llbracket 1,s \rrbracket$ and $d \geq 0$ is some set of formal variables. The case of $(g,n)=(0,1)$ is exceptional, in this case we denote:
$$
\DP_{0,1}^{(i)} :=\sum_{d \geq 2} \phi_{0,1}\left[\begin{smallmatrix}
i \\
2d
\end{smallmatrix}\right] 
(2d-1)!! \frac{\d}{\d t_{i,d}}.
$$
The two terms $d=0$ and $d=1$ are missing in this sum compared to \eqref{phihatgn}. Let us comment on that. 

The coefficients $\phi_{0,1}\left[\begin{smallmatrix}
i \\
0
\end{smallmatrix}\right] = y\dd x|_{p_i} $ are equal to zero since we assumed that $y$ is holomorphic (recall that $p_i$ is a critical point of function $x$).
The coefficients $\phi_{g,n}\left[\begin{smallmatrix}
i \\
2
\end{smallmatrix}\right]$ were excluded because they play a special role; we denote them by $\alpha_i$ indexed by $i \in \llbracket 1,s \rrbracket$ and they were assumed non-zero.  For any $(g,n)$ we denote by $\DP_{g,n}$ the sum $\sum_{1 \leq i_1,\ldots,i_n \leq s} \DP_{g,n}^{(i_1,\ldots,i_n)}$.

\begin{remark} We insist that $\DP_{g,n}$ only depends on the coefficients of the purely odd part of the $\phi_{g,n}$. In particular, $\DP_{g,n} = 0$ whenever $\phi_{g,n}$ is even in at least one of its variables. 
\end{remark}

\subsection{Explicit formula}

We consider the Witten-Kontsevich partition function of intersection numbers of $\psi$-classes on $\overline{\mathcal{M}}_{g,n}$, which is a Tau function of the KdV hierarchy \cite{Konts,Witten}:
$$
\tau_{{\rm KdV}}(\hbar,\{t_{d}\})=\exp\bigg(\sum_{g \geq 0} \hbar^{g - 1}\,F_g (\{t_{d}\})\bigg)
$$
in the variables $(t_{d})_{d \geq 0}$. For $2g - 2 + n > 0$:
 $$
 F_{g}(\{t_d\}) = \sum_{\substack{n \geq 1 \\ 2g - 2 + n > 0}} \frac{1}{n!} \sum_{d_1,\ldots,d_n \geq 0} \Big(\int_{\overline{\mathcal{M}}_{g,n}}
 \prod_{\ell = 1}^n \psi_{\ell}^{d_{\ell}}\Big)\prod_{\ell = 1}^{n} t_{d_{\ell}}
 $$
The integrals over $\overline{\mathcal{M}}_{g,n}$ in this expression are zero when $d_1 + \ldots + d_n \neq 3g - 3 + n$. It is convenient to modify this partition function to include the intersection numbers $\oM_{0,2}$ and $\oM_{0,1}$. By convention, they extend  to $n = 1$ and $2$ the formula
$$
\forall n \geq 3,\qquad \int_{\overline{\mathcal{M}}_{0,n}} \prod_{\ell = 1}^{n} \psi_{\ell}^{d_{\ell}} = {n - 3 \choose d_1,\ldots,d_n }
$$
To do so, we introduce extra variables $t_{-1},t_{-2},t_{-3},\ldots$, and define the corresponding generating function as 
\begin{align}
\label{unstablem}\mbox{for } \oM_{0,2}:\quad & \sum_{d \geq 0} \binom{-1}{d} t_{d}t_{-1-d} = \sum_{d \geq 0} (-1)^d t_{d}t_{-1-d} \\
\mbox{for } \oM_{0,1}:\quad & \binom{-2}{-2} t_{-2} = t_{-2}
\end{align}
Finally, we denote by $Z_i$ the following modification of $\tau_{{\rm KdV}}$:
\begin{equation}
Z_i:= \exp\bigg(\frac{t_{-2}}{\hbar}+\frac{1}{\hbar} \sum_{d \geq 0} (-1)^d t_{d}t_{-1-d} \bigg) \tau_{{\rm KdV}} \bigg|_{
\begin{smallmatrix}
t_d \rightarrow -t_{i,d}/\alpha_i & (d\in\mathbb{Z}) \\
\hbar  \rightarrow  \hbar/\alpha_i^2. &
\end{smallmatrix}}
\end{equation}
This defines $Z_i$ to be a series in $\hbar$ and $(t_{i,d})_{d \in \mathbb{Z}}$, and the rescaling by $\alpha_i$ implies that each monomial $\hbar^{g - 1} \prod_{\ell=1}^n t_{i,d_\ell}$ in $\ln Z_i$ comes with a prefactor $(-\alpha_i)^{2-2g-n}$.

We define a new partition function $Z$ by the following formula:
\begin{equation}
\label{eq:defZ} Z:=\exp\Big(\sum_{g \geq 0} \sum_{n \geq 1}  \DP_{g,n}\Big) \prod_{i=1}^s Z_i.
\end{equation}
The fact that it is well-defined will be explained in the next paragraph. Since the $\DP_{g,n}$ are pure differential operators, they all commute and we can apply each $e^{\DP_{g,n}}$ separately. We expand $\ln Z$ in the variables $\hbar$ and $t_{i,d}$:
\begin{equation}
\label{eq:logZ}\ln Z=\sum_{g \geq 0} \hbar^{g-1}\sum_{n \geq 1} \frac{1}{n!} \sum_{i_1,\ldots,i_n=1}^s \sum_{d_1,\ldots, d_n\in \mathbb{Z}} \langle \begin{smallmatrix}
i_1 & \ldots & i_n \\
d_1 & \ldots & d_n
\end{smallmatrix} \rangle_g \prod_{\ell=1}^n t_{i_{\ell},d_{\ell}}.
\end{equation}
We call $(g,n)$-part of $\ln Z$ the sum of terms coming with $\hbar^{g - 1}/n!$ in \eqref{eq:logZ}. 
Our main result is:
\begin{theorem}\label{thm:DP} There exists a solution of the abstract loop equations characterized by the following property: for any $(g,n)$ the odd part (in each variable) of the expansion of $\omega_{g,n}$ near $(p_{i_1},\ldots,p_{i_n})$ in the local coordinates $\zeta_{i_1},\ldots,\zeta_{i_n}$ is given by
\begin{equation}
\label{eq:ome}\omega_{g,n}^{({\rm odd})}\sim \omega_{g,n}^{(i_1,\ldots,i_n)}:=
\sum_{d_1,\ldots, d_n\in \mathbb{Z}} \langle \begin{smallmatrix}
i_1 & \ldots & i_n \\
d_1 & \ldots & d_n
\end{smallmatrix} \rangle_g 
\bigotimes_{\ell=1}^n \frac{(2d_\ell+1)!!\,\dd\zeta_{i_\ell}}{\zeta_{i_\ell}^{2d_\ell+2}}.
\end{equation}
\end{theorem}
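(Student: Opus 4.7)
The strategy is to identify the solution claimed in the theorem as the one produced by blobbed topological recursion from the KdV blobs $(\phi_{g,n})_{2g-2+n>0}$ specified in the statement. Theorem~\ref{bllb}, applied relative to the splitting $\mathcal{H}^{{\rm KdV}},\mathcal{P}^{{\rm KdV}}$ induced by $\omega_{0,2}|_{{\rm KdV}}$ instead of $\omega_{0,2}$ (a purely local change of splitting that still produces a solution of the abstract loop equations, since the difference $\omega_{0,2}-\omega_{0,2}|_{{\rm KdV}}$ is holomorphic on $\Sigma\times\Sigma$ and can be absorbed in the blobs), shows that $\omega_{0,1}$, $\omega_{0,2}$, and $(\phi_{g,n})_{2g-2+n>0}$ uniquely determine $\omega_{g,n}$ for all stable $(g,n)$, and that every such assignment yields a solution. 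Thus existence is immediate, and the content of the theorem is the identification of the odd Taylor coefficients of $\omega_{g,n}$ with the coefficients $\langle\cdots\rangle_g$ of $\ln Z$.

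First I would handle the \emph{normalized} case, where $\phi_{g,n}=0$ for $2g-2+n>0$: then $Z=\prod_i Z_i$, and the claim reduces to the statement that, for the local spectral curve with $x=\zeta_i^2/2+a_i$, generic $y\,\dd x$ with leading Taylor coefficients $\alpha_i$, and $\omega_{0,2}=\omega_{0,2}|_{{\rm KdV}}$, the odd part of the topological recursion correlators $\omega^0_{g,n}$ near the $p_i$'s is read off from the Witten--Kontsevich tau function, after the rescaling $t_d\to -t_{i,d}/\alpha_i$, $\hbar\to\hbar/\alpha_i^2$. This is essentially the Eynard--Kontsevich--Witten correspondence of \cite{Einter,Ekappa}; only a careful check of the $\alpha_i$-rescaling, the appearance of the $(2d+1)!!$ factors from $\zeta^{-2d-2}\dd\zeta$, and the sign conventions remains. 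The extra variables $t_{-1},t_{-2}$ (together with the generating series \eqref{unstablem}) account respectively for the unstable cases $(0,2)$ and $(0,1)$: $\omega_{0,2}-\omega_{0,2}|_{{\rm KdV}}=\phi_{0,2}$ shows up in the $\sum_d (-1)^d t_d t_{-1-d}$ term, and $\omega_{0,1}$ in the $t_{-2}/\hbar$ term.

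The general case reduces to this via the bipartite graph formula \eqref{HAPB} of Theorem~\ref{bllb} written in the KdV splitting: $\omega_{g,n}$ is a sum over bipartite graphs with $\phi$- and $\omega^0$-vertices, and on each edge the pairing \eqref{pairing0} is applied. Taylor-expanding $\phi_{h,k}$ at ramification points and using the residue identity
\[
\Res_{z\to p_i} \frac{\dd\zeta_i(z)}{\zeta_i(z)^{2d+2}}\int^{z}_{p_i} \zeta_i^{2e}\dd\zeta_i = \frac{\delta_{d,e}}{2d+1},
\]
every edge pairing becomes a contraction of a derivative $\d/\d t_{i,d}$ with a variable $t_{i,d}$, with exactly the numerical factors $(2d-1)!!$ hardwired into $\DP_{g,n}$. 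The sum over bipartite graphs on the geometric side is then in bijection with the Wick-type expansion generated by acting with $\exp\bigl(\sum_{g,n}\DP_{g,n}\bigr)$ on $\prod_i Z_i$: each $\phi$-vertex of a bipartite graph corresponds to one operator $\DP_{h,k}$ in the expansion of the exponential, each edge to a Wick contraction, and $\prod_i Z_i$ provides the $\omega^0$-vertices through the normalized case. The automorphism weights $1/|{\rm Aut}\,\Gamma|$ match the $1/n!$ in \eqref{eq:logZ} combined with the symmetrization of the operators in the exponential.

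The main obstacle is the bookkeeping in this last matching: one must verify (i) that only the purely \emph{odd} Taylor coefficients of $\phi_{g,n}$ contribute, in agreement with the remark after \eqref{phihatgn}, which follows because $\omega_{0,2}|_{{\rm KdV}}$ pairs nontrivially only with odd test forms in each variable; (ii) that the normalization conventions (factors of $\alpha_i$, of $(2d\pm1)!!$, and the $\hbar$-grading) align on both sides; and (iii) that the unstable $(0,1)$ and $(0,2)$ vertices, which are absent from the bipartite graphs in ${\rm Bip}^0_{g,n}$, are correctly accounted for by the modifications \eqref{unstablem} of the partition function and by the special form of $\DP_{0,1}$. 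Once these three points are in place, the Wick expansion of \eqref{eq:defZ} reproduces term-by-term the blobbed topological recursion formula for the odd part of $\omega_{g,n}$, proving \eqref{eq:ome}.
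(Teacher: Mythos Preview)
Your proposal follows essentially the same route as the paper: reduce to the normalized case via a bipartite-graph decomposition, invoke Eynard's $\psi$-class formula \cite{Einter,Ekappa} for the normalized correlators, and then match the graph sum with the Wick expansion of $\exp\big(\sum\DP_{g,n}\big)\prod_i Z_i$. The paper organizes this match through an auxiliary graph set $\mathcal{G}_{g,n}$ (Theorem~\ref{thm:omega-graphs}) whose weight sum manifestly equals $\omega_{g,n}^{(i_1,\ldots,i_n)}$ (Lemma~\ref{lem:connection2thm}), and then clusters these graphs to recover the $\mathrm{Bip}^0$ structure of Section~\ref{S2}; you propose instead to apply Theorem~\ref{bllb} directly in the KdV splitting and bypass $\mathcal{G}_{g,n}$.

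Two points need care in your version. First, your ``normalized case'' is misstated: if only the \emph{stable} $\phi_{g,n}$ vanish, one has $Z=\exp(\DP_{0,1}+\DP_{0,2})\prod_i Z_i$, not $\prod_i Z_i$; it is precisely Eynard's result \cite{Einter} (which you cite) that identifies the usual topological-recursion output with this expression. Second, and more substantially, Theorem~\ref{bllb} as written parametrizes solutions by the $\omega_{0,2}$-holomorphic parts $\varphi_{g,n}$, and its $\mathrm{Bip}^0$ graphs carry only \emph{stable} blob vertices. If you re-run it with the KdV splitting, the $\omega^0$-vertex becomes the topological recursion built from $(\omega_{0,1},\omega_{0,2}|_{\rm KdV})$, and $\phi_{0,2}=\omega_{0,2}-\omega_{0,2}|_{\rm KdV}$ has no slot in $\mathrm{Bip}^0$; it cannot simply be ``absorbed in the blobs'' without enlarging the graph set. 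The paper's $\mathcal{G}_{g,n}$ is exactly the device that resolves this: it allows unstable $\Phi_{0,1}$- and $\Phi_{0,2}$-vertices, which after clustering (\S\ref{S37}) are re-absorbed into the $\omega^0$- and $\varphi$-vertices of $\mathrm{Bip}^0$ in the original splitting. Your argument can be completed along the same lines, but the step you label as bookkeeping in point~(iii) is where the real content lies, and it is not a consequence of Theorem~\ref{bllb} alone.
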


\begin{remark}\label{rem:02conv} The statement of the theorem requires one convention. In the case $(g,n)=(0,2)$ and $i_1=i_2$, we must assume the convergence of the series in the right hand side of Equation~\eqref{eq:ome}, that is, we assume that $|\zeta_{i_1}|>|\zeta_{i_2}|$ for the sum over $d_1\geq 0$ and $d_2<0$, 
and $|\zeta_{i_1}|<|\zeta_{i_2}|$ in the opposite range $d_1<0$ and $d_2\geq 0$.
\end{remark}

\begin{remark} As we see, the odd part of the expansions of a solution of the abstract loop equations is characterized in terms of the new parameters $\phi_{g,n}$. The odd parts of the expansions of the standard parameters $\varphi_{g,n}$ can then be reconstructed using Definition~\ref{def0}. There is no unicity because the property \eqref{eq:ome} does not fixes the even part, but we show in \S~\ref{sec:even-odd-parts} that the even part decouples. The fact that Theorem~\ref{thm:DP} allows a representation is explained in its proof in \S~\ref{S37}. 
\end{remark}

This relates the computation with the odd part of the blobbed topological recursion \eqref{btoporec} to a computation whose elementary blocks are the intersection of $\psi$-classes. We show below in  \S~\ref{sec:even-odd-parts} that the even part of the $\omega_{g,n}$ is somewhat decoupled. The key substitution relation is:
\beq
\label{substim}\boxed{t_{i,d} \,\,\longleftrightarrow\,\, \frac{(2d+1)!!\,\dd\zeta_i}{\zeta_i^{2d+2}}}
\eeq
valid for any $d \in \mathbb{Z}$. In particular, for negative indices:
\beq
\label{substin}t_{i,-d - 1} \rightarrow \frac{(-1)^{d}\zeta_{i}^{2d}\,\dd\zeta_{i}}{(2d - 1)!!}, \qquad d\geq 0.
\eeq
We will prove Theorem~\ref{thm:DP} by diagrammatic techniques in \S~\ref{S36}-\ref{S37}, but we first give some illustration on how it works in some special cases that were already known and that will be used in the general proof.

\subsection{Structure of the partition function and exceptional correlators} It is not obvious that expansion given by Equation~\eqref{eq:logZ} is well-defined. Namely, from the general shape of Equation~\eqref{eq:defZ} one could expect that there is an infinite summation involved in the definition of each particular correlator in~\eqref{eq:logZ}. In this section we first prove that the expansion in~\eqref{eq:logZ} is well-defined, and then show how this formula works in two examples. 

\subsubsection{Structure of partition function $Z$} \label{sec:structureZ}

In order to prove that Equation~\eqref{eq:logZ} is well-defined we should find a way to eliminate the $(0,1)$ and $(0,2)$-terms in $Z_i$, $i \in \llbracket 1,s \rrbracket$, and the operator $\DP_{0,1}$. 

First we can observe that the action of $\DP_{0,1}$ is well-defined. Indeed, it is just a shift of the variables $t_{i,d}$ for $d\geq 2$. Since $\ln Z_i$ can be considered as a formal power series in the variables $t_{i,d}$, $d\leq 1$, whose coefficients are polynomials in $t_{i,d}$, $d\geq 2$, this shift of variables is well-defined. More precisely, the expansion of $\ln\big(\exp(\DP_{0,1})Z_i\big)$ is given by
\begin{equation} \label{eq:DP01action}
\sum_{g \geq 0} \Big(\frac{\hbar}{\alpha_i^2}\Big)^{g-1} \sum_{\substack{n \geq 1\\ k\geq 0}} 
\frac{1}{n!k!} \sum_{\substack{d_1,\ldots, d_n\in \mathbb{Z} \\ b_1,\ldots,b_k \geq 2}}
\int_{\oM_{g,n+k}} \prod_{\ell=1}^n \psi_\ell^{d_\ell} \prod_{m=1}^k \psi_{n+m}^{b_\ell} 
\prod_{\ell=1}^n \frac{t_{i,d_{\ell}}}{-\alpha_i} \prod_{\ell=1}^k 
\frac{\phi_{0,1}\left[\begin{smallmatrix}
i \\
2b_\ell
\end{smallmatrix}\right]}{-\alpha_i}
(2b_\ell-1)!!
\end{equation} 
Since $b_\ell\geq 2$, and $\sum_{\ell=1}^n d_\ell+\sum_{m=1}^k b_{m} = 3g-3+n+k$, the coefficient of each monomial $\hbar^{g-1}t_{i,d_1}\cdots t_{i,d_n}$ is a finite sum in this expansion. Furthermore, it is easy to see, for the same dimensional reason, that the $(g,n)=(0,1)$, $(0,2)$, $(0,3)$ and $(1,1)$ terms of this expansion coincide with those of $\ln Z_i$ (the argument will be revisited in Lemma~\ref{renom} below). 

Since the operators  $\DP_{g,n}$ never contain a differentiation with respect to a variable with negative index, we can commute them with the $(0,1)$-terms in $\prod_{i=1}^s\exp(\DP_{0,1})Z_i$ (which are the same as those in $\prod_{i=1}^s Z_i$):
$$
\exp\Big(\sum_{\substack{ g \geq 0, n \geq 1 \\ (g,n)\neq(0,1)} } \DP_{g,n}\Big) \prod_{i=1}^s\exp\Big(-\frac{t_{i,-2}}{\hbar \alpha_i} \Big)=\prod_{i=1}^s\exp\Big(-\frac{t_{i,-2}}{\hbar \alpha_i} \Big)
\exp\Big(\sum_{\substack{ g \geq 0, n \geq 1 \\ (g,n)\neq(0,1)} }   \DP_{g,n}\Big)
$$
For the same reason, the conjugation with the $(0,2)$-terms of $\prod_{i=1}^s\exp(\DP_{0,1})Z_i$ (or, equivalently, $\prod_{i=1}^s Z_i$) gives an operator that is well-defined as a formal power series in $t_{i,-d-1}$ indexed by $i \in \llbracket 1,s \rrbracket$ and $d\geq 0$:
$$
\exp\Big(-\sum_{i=1}^s\sum_{d \geq 0} (-1)^d t_{i,d}t_{i,-1-d}\Big)
\exp\Big(\sum_{\substack{ g \geq 0, n \geq 1 \\ (g,n)\neq(0,1)} } \DP_{g,n}\Big)
\exp\Big(\sum_{i=1}^s\sum_{d \geq 0} (-1)^d t_{i,d}t_{i,-1-d}\Big)
$$
So, we can consider Equation~\eqref{eq:defZ} as an action of this operator on a product of usual KdV tau-functions with rescaled and shifted (for $d\geq 2$) variables, which is well-defined. 

\subsubsection{Case $(g,n)=(1,0)$}

Let us discuss a simple example in order to see how we get expansion of $\omega_{0,1}$ via Theorem~\ref{thm:DP}. Indeed, from the definition we have:
\begin{equation}\label{eq:omega01}
\omega_{0,1}^{{\rm (odd)}} \sim \alpha_i \zeta_i^2\dd\zeta_i +  
\sum_{d \geq 2} \phi_{0,1}\left[\begin{smallmatrix}
i \\
2d
\end{smallmatrix}\right] 
\zeta_i^{2d} \dd\zeta_i
\end{equation}

On the other hand, the $(g,n)=(0,1)$-part of $\ln Z$ is given by the sum over $i$ of the following expression:
\begin{align}
& (-\alpha_i)^{1} t_{i,-2}+
\sum_{d \geq 2} \phi_{0,1}\left[\begin{smallmatrix}
i \\
2d
\end{smallmatrix}\right] 
(2d-1)!! \frac{\d}{\d t_{i,d}} \left(
(-\alpha_i)^{0} \sum_{d \geq 0} (-1)^d t_{i,d}t_{i,-1-d} 
\right)
\nonumber \\ \notag
& =
-\alpha_i t_{i,-2}+ 
\sum_{d \geq 2} \phi_{0,1}\left[\begin{smallmatrix}
i \\
2d
\end{smallmatrix}\right] 
(2d-1)!!  (-1)^d t_{i,-1-d}  \nonumber
\end{align}
After the changes \eqref{substim}-\eqref{substin}, we get exactly the right hand side of Equation~\eqref{eq:omega01}.

\subsubsection{Case $(g,n) = (0,2)$}

In this case, from the definition we have the following expansion in $\tilde{U}_i\times \tilde{U}_j$:

\begin{equation} \label{eq:omega02}
\omega_{0,2}^{{\rm (odd)}} \sim \delta_{ij} \sum_{d=0}^\infty \frac{\zeta_{i,1}^{2d}}{\zeta_{j,2}^{2d+2}} (2d+1) \dd\zeta_{i,1}\otimes\dd\zeta_{j,2} + \sum_{d_1,d_2 \geq 0} 
\phi_{0,2}\left[\begin{smallmatrix}
i & j \\
2d_1 & 2d_2
\end{smallmatrix}\right] \zeta_{i,1}^{2d_1} \zeta_{j,2}^{2d_2} \dd\zeta_{i,1}\otimes\dd\zeta_{j,2}
\end{equation}

The $(0,2)$-part of $\ln Z$ is given by the sum over $i$ and $j$ of the following expression
\begin{align}
&
\delta_{ij}\sum_{d \geq 0} (-1)^{d}\,t_{i,d}t_{j,-1 - d} 
\nonumber \\ \notag &
+\sum_{d_1,d_2 \geq 0} \phi_{0,2}\left[\begin{smallmatrix}
i & j \\
2d_1 & 2d_2
\end{smallmatrix}\right] 
(2d_1-1)!! \frac{\d}{\d t_{i,d_1}} \left(
(-\alpha_i)^{0} \sum_{d \geq 0} (-1)^d t_{i,d}t_{i,-1-d} 
\right)
\\ \notag &
\phantom{sssssssssssssssssss}\times (2d_2-1)!! \frac{\d}{\d t_{j,d_2}} \left(
(-\alpha_j)^{0} \sum_{d \geq 0} (-1)^d t_{j,d}t_{j,-1-d} 
\right)
\\ \notag &
=\delta_{ij}\sum_{d \geq 0} (-1)^{d}\,t_{i,d}t_{j,-1 - d} 
\\ \notag &
+\sum_{d_1,d_2 \geq 0} \phi_{0,2}\left[\begin{smallmatrix}
i & j \\
2d_1 & 2d_2
\end{smallmatrix}\right] (2d_1-1)!!(2d_2-1)!! (-1)^{d_1+d_2} t_{i,-1-d_1} 
t_{j,-1-d_2} 
\end{align}
If we take into account Remark~\ref{rem:02conv}, then after the changes \eqref{substim}-\eqref{substin}, we get exactly the right hand side of Equation~\eqref{eq:omega02}.

\subsection{Special cases of Theorem~\ref{thm:DP}} In this paragraph we collect some known special cases of Theorem~\ref{thm:DP}. The one discussed in \S~\ref{sec:nontrivial01} is needed later in the general proof of this theorem.

\subsubsection{Case of $\DP_{g,n} = 0$ for all $g,n$}

On the topological recursion side, the odd part of the $(0,1)$ correlator near the point $p_i$ is
$$
\omega_{0,1}^{({\rm odd})}\sim \alpha_i\zeta_i^2\dd\zeta_i
$$
Furthermore, we have:
$$
\omega_{0,2}(\zeta_{i,1},\zeta_{i,2}) = \frac{\dd\zeta_{i,1}\otimes\dd\zeta_{i,2}}{(\zeta_{i,1} - \zeta_{i,2})^2} =
\omega_{0,2}^{{\rm even}}(\zeta_{i,1},\zeta_{i,2}) + \omega_{0,2}^{{\rm odd}}(\zeta_{i,1},\zeta_{i,2}),
$$
where
\bea
\omega_{0,2}^{{\rm even}}(\zeta_{i,1},\zeta_{i,2}) & = & \frac{2\zeta_{i,1}\zeta_{i,2}}{(\zeta_{i,1}^2 - \zeta_{i,2}^2)^2}\,\dd\zeta_{i,1}\otimes\dd\zeta_{i,2}  \\
\label{omega02kdv}\omega_{0,2}^{{\rm odd}}(\zeta_{i,1},\zeta_{i,2}) & = & \frac{\zeta_{i,1}^2 + \zeta_{i,2}^2}{(\zeta_{i,1}^2 - \zeta_{i,2}^2)^2}\,\dd\zeta_{i,1}\otimes\dd\zeta_{i,2}
\eea
Eventually, for $2g - 2 + n > 0$, it is proved in \cite{Ekappa} (see also the book \cite{Ebook}) that $\omega_{g,n}$ is odd in each variable, and:
\begin{equation}\label{eq:omegalocal}
\omega_{g,n}^{(i,\ldots,i)} \sim (-\alpha_i)^{2-2g-n} \sum_{d_1,\ldots,d_n \geq 0} \int_{\oM_{g,n}} \prod_{\ell=1}^n \psi_\ell^{d_\ell} \bigotimes_{\ell=1}^n \frac{(2d_\ell+1)!!\,\dd\zeta_i}{\zeta_i^{2d_\ell+2}}\end{equation} 
near the point $(p_i,\ldots,p_i)$, and the expansions near the points $(p_{i_1},\ldots,p_{i_n})$ are equal to zero if $i_k\neq i_\ell$ for some $1\leq k < \ell\leq n$.

On the other hand, $Z=\prod_{i=1}^s Z_i$, and by comparison with the definition of the Witten-Kontsevich partition function:
$$
\langle \begin{smallmatrix}
i_1 & \ldots & i_n \\
d_1 & \ldots & d_n
\end{smallmatrix} \rangle_g = \delta_{i_1,\ldots,i_n} (-\alpha_i)^{2 - 2g - n}\int_{\oM_{g,n}} \prod_{\ell=1}^n \psi_\ell^{d_\ell},  $$
We apply the transformation \eqref{substim} of monomials in $\ln Z$: 
\begin{align}
& \frac{ (-\alpha_i)^{2-2g-n}  \int_{\oM_{g,n}} \prod_{\ell=1}^n \psi_\ell^{d_\ell} \prod_{\ell=1}^n t_{i,d_\ell}}{|\Aut(d_1,\ldots,d_n)|} \nonumber
\\ 
\label{wgnkdv} &
\rightsquigarrow
(-\alpha_i)^{2-2g-n} \sum_{d_1,\ldots,d_n \geq 0} \int_{\oM_{g,n}} \prod_{\ell=1}^n \psi_\ell^{d_\ell} \bigotimes_{\ell=1}^n (2d_\ell+1)!! \frac{\dd\zeta_i}{\zeta_{i,\ell}^{2d_\ell+2}},
\end{align}
and this gives, for $2g - 2 + n > 0$ and each $i \in \llbracket 1,s \rrbracket$, the same expansions as Equation~\eqref{eq:omegalocal}. For $(g,n) = (0,1)$, there is only one term. For $(g,n) = (0,2)$, we have the substitution \eqref{substin}:
\beq
\sum_{d \geq 0} (-1)^d\,t_{i,d}t_{i,-d - 1} \rightsquigarrow \sum_{d \geq 0} (2d + 1)\,\frac{\zeta_{i,1}^{2d}}{\zeta_{i,2}^{2d + 2}}\,\dd\zeta_{i,1}\otimes\dd\zeta_{i,2} = \frac{\zeta_{i,1}^2 + \zeta_{i,2}^2}{(\zeta_{i,1}^2 - \zeta_{i,2}^2)^2}\,\dd\zeta_{i,1}\otimes\dd\zeta_{i,2}
\eeq
and we recognize \eqref{omega02kdv} (remind Remark~\eqref{rem:02conv} on convergence of this series).

\begin{notation}
We define $\omega_{g,n}|_{{\rm KdV}}$ to be the right-hand side of \eqref{wgnkdv} for $2g - 2 + n > 0$, \eqref{wo2sk} for $(g,n) = (0,2)$, and $\omega_{0,1}|_{{\rm KdV}}(z) = \alpha_i\zeta_i^2\dd\zeta_{i}$ when $z \in U_i$. They encode in a simple way the intersection of $\psi$-classes. For instance:
\beq
\label{eq:03KdV} \omega_{0,3}|_{{\rm KdV}} = \frac{1}{-\alpha_i}\,\frac{\dd\zeta_{i,1}\otimes\dd\zeta_{i,2}\otimes\dd\zeta_{i,3}}{\zeta^2_{i,1}\zeta_{i,2}^2\zeta_{i,3}^2}
\eeq
\end{notation}

\begin{remark}
For $2g - 2 + n > 0$, only the non-negative $d_i$ contribute in \eqref{wgnkdv}. Since the top dimension of $\overline{\mathcal{M}}_{g,n}$ is $3g - 3 + n$, $\omega_{g,n}|_{{\rm KdV}}$ has a pole of total degree $6g - 6 + 4n$, and actually is homogeneous of degree $-3(2g - 2 + n)$ in its $n$ variables.
\end{remark}

\subsubsection{Non-trivial $\DP_{0,1}$} \label{sec:nontrivial01} This case falls in the scope of the usual topological recursion of \cite{EORev} with an arbitrary local expansion of the function $y$ near the points $p_i$, $i \in \llbracket 1,s \rrbracket$, but $\omega_{0,2}$ still being the standard bidifferential. In this case the local expansion of $\omega_{g,n}$, $2g-2+n>0$, is given by the following formula:
\begin{align}\label{eq:nontrivial01}
& \omega_{g,n}^{(i,\ldots,i)} \sim \sum_{m \geq 0} \frac{ (-\alpha_i)^{2-2g-n-m}}{m!} \sum_{a_1,\ldots,a_m \geq 2}
\prod_{j=1}^m \left(\phi_{0,1}\left[\begin{smallmatrix}
i \\
2a_j
\end{smallmatrix}\right] (2a_j-1)!!\right) \cdot
\\ \notag
& \sum_{d_1,\ldots,d_n \geq 0} \int_{\oM_{g,n+m}} \prod_{\ell=1}^n \psi_\ell^{d_\ell}
 \prod_{j=1}^m \psi_{n+j}^{a_j}  \bigotimes_{\ell=1}^n (2d_\ell+1)!! \frac{\dd\zeta_{i,\ell}}{\zeta_{i,\ell}^{2d_\ell+2}}
\end{align} 
near the point $(p_i,\ldots,p_i)$. The expansions near the points \mbox{$(p_{i_1},\ldots,p_{i_n})$} are equal to zero if $i_k\neq i_\ell$ for some $1\leq k,\ell\leq n$. This is just a properly renormalized formula in~\cite{Einter}.

It is obvious (see Equation~\eqref{eq:DP01action}) that the intersection number $$\int_{\oM_{g,n+m}} \prod_{\ell=1}^n \psi_\ell^{d_\ell}
 \prod_{j=1}^m \psi_{n+j}^{a_j}$$ should come from the coefficient of the monomial $\prod_{\ell=1}^n t_{i,d_\ell}$ in the corresponding summand\footnote{Since  $\DP_{0,1}$ is an order $1$ operator, it does not matter whether we apply it to $Z$ and then take the logarithm, or if we apply it directly to $\ln Z$.} of $\big(\DP_{0,1}^{(i)}\big)^m\ln Z$. Indeed, by definition the operator $\DP_{0,1}^{(i)}$ replaces the variable $t_{i,a}$, $a\geq 2$, that controls a factor of $\psi^a$ in the intersection class with the scalar coefficient $\phi_{0,1}\left[\begin{smallmatrix} i \\ 2a \end{smallmatrix}\right] (2a-1)!!$. So we have Theorem~\ref{thm:DP}. This formula can be nicely rewritten in terms of $\kappa$-classes \cite{Einter}, as we recall in Appendix~\ref{kappaapp}.

\begin{notation}
\label{nonon}We denote the $n$-form in the right-hand side of \eqref{eq:nontrivial01} by $\omega_{g,n}|_{{\rm KdV}}^{\Box}$. It implicitly depends on the coefficients of $\DP_{0,1}$. 
\end{notation}

\subsubsection{Non-trivial $\DP_{0,1}$ and $\DP_{0,2}$} This is the general setting of the usual topological recursion with $\omega_{0,2}$ a priori different from the standard bidifferential $\omega_{0,2}|_{{\rm KdV}}$. It creates $n$-forms $\omega_{g,n}$ which can be non-zero even when the $n$ variables belong to different open sets. This is the simplest case of non-trivial coupling between the ramification points. It is obtained from the KdV by action with the exponential of a second order differential operator, see~\cite{DBOSS,Givental} for the special cases of that.  Our Theorem~\ref{thm:DP} is then equivalent to the results of~\cite{Einter}, namely~\cite[Theorem 3.1]{Einter} for one ramification point and \cite[Theorem 4.1]{Einter} in the general case. This actually allowed a proof \cite{EOBKMP} of the BKMP conjecture \cite{BKMP}, stating that open Gromov-Witten invariants of toric Calabi-Yau $3$-folds are computed by the topological recursion with initial data coming from their mirror curve.

\subsection{Diagrammatic representation of coefficients of $Z$ in terms of KdV}\label{sec:graphs1}
\label{S36}
There is a way to represent the coefficients $\langle \begin{smallmatrix}
i_1 & \ldots & i_n \\
d_1 & \ldots & d_n
\end{smallmatrix} \rangle_g$ in Equation~\eqref{eq:logZ} (or, equivalently, the $n$-forms $\omega_{g,n}^{(i_1,\ldots,i_n)}$) graphically. Indeed, every term can be represented as a sum of contributions of connected bipartite graphs $\Gamma$ with the following structure:

\begin{itemize} 
\item
There are two types of vertices: KdV-vertices and $\Phi$-vertices. 
\item 
Each vertex $v$ is labeled by a non-negative integer (called genus) $h(v)\geq 0$, and we require $\sum_{v} h(v) + b_1(\Gamma)=g$.
\item
The valency of each vertex is at least $1$.
\item
Each (internal) edge connects a KdV-vertex and a $\Phi$-vertex.
\item
There are exactly $n$ leaves (= unbounded edges), and they are labeled by the numbers $1,\ldots,n$.
\item
Each leaf is connected to a KdV-vertex of genus $0$ and of valency $2$, with one exception: if $(g,n) = (1,0)$, $\Gamma$ has just one vertex, which is the KdV vertex of genus $0$ and valency $1$, and the leaf is connected to it. 
\end{itemize}

\begin{notation}
We denote $\mathcal{G}_{g,n}$ this set of connected bipartite graphs.
\end{notation}

We now describe the weight $\omega_{g,n}^{\Gamma}$ assigned to such a graph $\Gamma$. We first assign variables to any edge of color $i$: integration variables $z_{e}$ for an internal edge $e$, and external variables $z_{k}$ for the leaf labeled by $k \in \llbracket 1,n\rrbracket$. We denote by $E=E(\Gamma)$ the set of the internal edges, and we denote by $E_0\subseteq E$ the subset of internal edges attached to the same KdV vertices as the leaves. By our last rule, provided $\Gamma$ is neither the $(0,1)$-KdV vertex with one leaf or the $(0,2)$-KdV vertex with two leaves, the set $E_0=\{e_1,\ldots,e_n\}$ is in one-to-one correspondence with the set of leaves. 
The weight of $\Gamma$ is a symmetric form in $n$ variables defined by
\begin{equation}
\label{resGamma}
\omega_{g,n}^{\Gamma}:=\frac{1}{|\Aut(\Gamma)|} \prod_{e_i\in E_0} \Res_{z_{e_i}\rightarrow z_i} \prod_{e\in E\setminus E_0} \Bigg[\sum_{i_e}\Res_{z_{e} \rightarrow p_{i_e}}\Bigg] \Big(\bigotimes_v D[v] \Big).
\end{equation}
$D[v]$ are local weights associated to a vertex $v$. Let us denote $d(v)$ its valency, $h(v)$ its genus, and $\{e(1),\ldots,e(d(v))\}$ its set of incident edges (internal edge or leaf) \begin{itemize} 
\item If $v$ is a KdV vertex, $D[v]$ is the symmetric meromorphic form in $d(v)$ variables equal to $\omega_{h(v),d(v)}|_{{\rm KdV}}(z_{e(1)},\ldots,z_{e(d(v))})$.
\item If $v$ is a $\Phi$-vertex, $D[v]$ is the symmetric holomorphic function of $n$ variables equal to $\Phi_{h(v),d(v)}(z_{e(1)},\ldots,z_{e(d(v))})$, where:
\begin{equation}
\label{dVphi}\Phi_{h,d}(z_1,\ldots,z_d) := \begin{cases} \int^{z_1}\cdots\int^{z_d} \phi_{h,d} & \mbox{if }(h,d)\neq(0,1) ; \\
\int^{z_1} \left( \phi_{0,1}- \alpha_{i_1} \zeta^2_{i,1} \dd \zeta_{i,1}\right)& \mbox{if }(g,n)=(0,1). 
\end{cases}
\end{equation}
Here we assume that the variable $z_{j}$ belongs to the open set $U_{i_j}$, and the corresponding integration starts from the ramification point $p_{i_j}$.
\end{itemize}
Since in \eqref{resGamma} we are computing residues of products of holomorphic functions by meromorphic $1$-forms, $\omega_{g,n}^{\Gamma}$ in general does not vanish. $|{\rm Aut}(\Gamma)|$ is the number of permutations of edges and vertices labels fixing the label of each leaf and preserving $\Gamma$.

\begin{theorem}\label{thm:omega-graphs} The formula:
\begin{equation}
\label{Gngi}\omega_{g,n} = \sum_{\Gamma\in \mathcal{G}_{g,n}} \omega^{\Gamma}_{g,n}.
\end{equation}
defines a solution of abstract loop equations.
\end{theorem}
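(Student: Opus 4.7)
The strategy is to verify the linear and quadratic abstract loop equations directly for the graph sum~\eqref{Gngi}, using as the elementary building block the fact that the KdV forms $\omega_{h,d}|_{\rm KdV}$ themselves solve the abstract loop equations for the initial data $(\omega_{0,1}|_{\rm KdV},\omega_{0,2}|_{\rm KdV})$. This is the content of the classical topological recursion for the Kontsevich--Witten partition function, recalled in \eqref{eq:omegalocal} and going back to \cite{EOFg,Ekappa}. In particular, for all $(h,d)$ with $2h-2+d>0$, each $\omega_{h,d}|_{\rm KdV}$ is purely odd in each of its variables near every ramification point, and
$$
Q^{\rm KdV}_{h,d}(z,I) := \omega_{h-1,d+1}|_{\rm KdV}(z,\sigma_i(z),I) + {\sum}'\omega|_{\rm KdV}(z,J)\,\omega|_{\rm KdV}(\sigma_i(z),J')
$$
vanishes to second order as $z \to p_i$.

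\emph{Step 1: admissibility and the unstable correlators.} I would first check that the $\omega_{g,n}$ in \eqref{Gngi} are symmetric meromorphic forms on $\Sigma^n$ whose only singularities, in each variable, lie on $P$. Symmetry is built into the construction; the pole analysis uses that the KdV-vertex weights have poles only at ramification points while the $\Phi$-vertex weights are holomorphic, so the residue extractions over internal edges do not produce new singularities in the leaf variables. A direct enumeration of graphs for $(g,n)=(0,1)$ and $(0,2)$ recovers $\omega_{0,1}=\phi_{0,1}$ (combining the single KdV-vertex giving $\omega_{0,1}|_{\rm KdV}$ with the chain graphs that assemble $\phi_{0,1}-\omega_{0,1}|_{\rm KdV}$ through $\Phi_{0,1}$-insertions) and $\omega_{0,2}=\omega_{0,2}|_{\rm KdV}+\phi_{0,2}$.

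\emph{Step 2: linear loop equation.} In any $\Gamma\in\mathcal{G}_{g,n}$, the leaf $z_k$ enters $\omega_{g,n}^{\Gamma}$ only through the KdV-vertex $v$ to which it is attached, and then only via the factor $\omega_{0,2}|_{\rm KdV}(z_{e_k},z_k)$ together with the residue $\Res_{z_{e_k}\to z_k}$, which acts as a Taylor coefficient extraction that preserves parity in the leaf variable. Since the local weight at $v$ is the purely odd form $\omega_{h(v),d(v)}|_{\rm KdV}$, the whole $\omega_{g,n}^{\Gamma}$ is odd in $z_k$ near every $p_i$. Summing over $\Gamma$, $\omega_{g,n}(z,I)+\omega_{g,n}(\sigma_i(z),I)$ extends holomorphically across $p_i$.

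\emph{Step 3: quadratic loop equation.} For
$$
Q_{g,n}(z,I) = \omega_{g-1,n+1}(z,\sigma_i(z),I) + {\sum}'\omega_{h,|J|+1}(z,J)\,\omega_{h',|J'|+1}(\sigma_i(z),J'),
$$
I would expand each $\omega$ via \eqref{Gngi} and regroup the resulting terms according to the way the two marked leaves carrying $z$ and $\sigma_i(z)$ are distributed among the KdV-vertices of the composite graph obtained by either identifying them (in the first summand) or juxtaposing the two graph components (in the splitting sum). Each regrouped class is isomorphic to a sum over a bipartite graph with a distinguished KdV-vertex $v_\star$, whose marked arguments are $z$ and $\sigma_i(z)$, multiplied by the $\Phi$-dressing on the rest of the graph. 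If $z$ and $\sigma_i(z)$ sit on the same KdV-vertex in the composite, the regrouped class reproduces $Q^{\rm KdV}_{h(v_\star),d(v_\star)-1}(z,\ldots)$ times a factor that is regular at $p_i$, hence vanishes to second order by the KdV quadratic loop equation. If they sit on distinct KdV-vertices, contributions from the connected graphs of $\omega_{g-1,n+1}$ and the disconnected graph pairs of the splitting sum combine via a cut-and-paste bijection on the internal edge separating these two vertices; the resulting expression is again a residue against a KdV quadratic piece multiplied by regular holomorphic factors.

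\emph{Main obstacle.} The delicate point is the combinatorial bookkeeping of Step 3: one must verify that the cut-and-paste bijection between connected and disconnected composite graphs preserves the automorphism factors $|\Aut(\Gamma)|^{-1}$ so that the regrouping into $Q^{\rm KdV}$-pieces is exact, without leftover terms. This parallels the reorganization already carried out in \S~\ref{tuna2} for the identity \eqref{HAPB}, and I would carry out the induction on $2g-2+n$ along the same lines, with the KdV quadratic loop equation supplying the vanishing at $p_i$ at every inductive step.
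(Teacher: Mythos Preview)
Your approach is a valid alternative, and in fact the paper sketches exactly this route in the Remark at the end of \S~\ref{S37}. The paper's main proof, however, is indirect: it partitions each graph in $\mathcal{G}_{g,n}$ into clusters---maximal connected subgraphs built from KdV-vertices and unstable $\Phi$-vertices carrying at least one $\mathcal{P}$-type leaf---and identifies these clusters with Eynard's graphical expansion of the normalized correlators $\omega^0_{h,k}$, while their complements give the blobs $\varphi_{h,k}$. This shows that the sum over $\mathcal{G}_{g,n}$ coincides with the sum over $\mathrm{Bip}^0_{g,n}$ from \S~\ref{soliset}, for a specific choice of $\varphi$'s determined by the $\phi$'s, and then Theorem~\ref{bllb} (already proved in \S~\ref{converse}) supplies the loop equations. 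The paper's route buys two things your direct verification does not: it exhibits the explicit change of parameters $\phi_{g,n}\leftrightarrow\varphi_{g,n}$, and it shows that \emph{every} solution arises from the graph sum for some choice of $\phi$'s---a point the Remark itself flags as the disadvantage of the direct approach.

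One detail in your sketch needs correction. In Step~2 you claim oddness of $\omega_{g,n}^{\Gamma}$ from the KdV weight at the leaf vertex, but a leaf on a $(0,2)$-KdV vertex feeds through $\omega_{0,2}|_{\rm KdV}$, which is \emph{not} purely odd, and the residue at the leaf simply returns $\phi_{h,k}(\ldots,z_k,\ldots)$---holomorphic but not necessarily odd. The linear loop equation still holds because holomorphicity near $p_i$ is all that is required; compare the case distinction in the proof of Proposition~\ref{prop:oddness}. In Step~3, your cut-and-paste is sketched only for the situation where a single edge separates the two distinguished KdV-vertices; in general they may be far apart in the bipartite graph, and the terms with $\omega_{0,1}$ and $\omega_{0,2}$ (which are not stable graph sums) need separate treatment. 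None of this is fatal, but the bookkeeping is heavier than you indicate.
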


We prove this theorem in \S~\ref{S37}. Using this theorem, we can prove Theorem~\ref{thm:DP}. Indeed, it is a direct consequence of the following simple Lemma:

\begin{lemma} \label{lem:connection2thm}
	In $U_{i_1}\times\cdots\times U_{i_n}$, we have the following expansion:
	$$
		\bigg(\sum_{\Gamma\in \mathcal{G}_{g,n}} \omega^{\Gamma}_{g,n}\bigg)^{\mathrm{(odd)}} \sim \omega_{g,n}^{(i_1,\ldots,i_n)}.
	$$
	where $\omega_{g,n}^{(i_1,\ldots,i_n)}$ was by definition the right-hand side of \eqref{eq:ome}.
\end{lemma}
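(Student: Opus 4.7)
The plan is to expand both sides of the claimed asymptotic as sums over the \emph{same} combinatorial index set $\mathcal{G}_{g,n}$, and then match the summands term by term. The left-hand side is already of this form by the definition \eqref{resGamma}, so the task is to rewrite the right-hand side — namely the $(g,n)$-part of $\ln Z$ after the substitution \eqref{substim} — as a sum of Feynman-type contributions indexed by $\mathcal{G}_{g,n}$.

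\textbf{Step 1: Feynman expansion of $\ln Z$.} I would expand $\exp\bigl(\sum_{g,n}\DP_{g,n}\bigr)\prod_i Z_i$ diagrammatically, treating each operator $\DP_{h,d}$ as a potential $\Phi$-vertex of genus $h$ and valency $d$ whose half-edges carry derivatives $(2d_\ell-1)!!\,\d/\d t_{i_\ell,d_\ell}$, and treating each monomial of $\ln Z_i$ (which is a rescaled Witten-Kontsevich free energy after the shift by $\DP_{0,1}$ carried out in \S\ref{sec:structureZ}) as a potential KdV-vertex of the corresponding $(h,d)$-type. Edges arise from pairings of a derivative with a $t_{i,d}$-factor; unconsumed $t$-variables survive as leaves. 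Passing to $\ln Z$ restricts the sum to connected graphs, and the $\hbar$-grading forces $\sum_v h(v)+b_1(\Gamma)=g$. The resulting index set coincides with $\mathcal{G}_{g,n}$, giving an equality of formal series
\[
\ln Z\Big|_{(g,n)\mathrm{-part}} = \sum_{\Gamma\in\mathcal{G}_{g,n}} \frac{W_\Gamma[t]}{|\Aut\Gamma|},
\]
where $W_\Gamma$ is a polynomial in the surviving leaf variables.

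\textbf{Step 2: Matching internal edges.} The next step is to compare the Feynman contractions with the residue operations in \eqref{resGamma}. For an internal edge joining a KdV-vertex of type $(h,k)$ to a $\Phi$-vertex, I would compute the elementary pairing
\[
\mathop{\Res}_{\zeta\to 0}\frac{(2d+1)!!\,\dd\zeta}{\zeta^{2d+2}}\cdot\frac{\phi_{h',k'}[2d,\ldots]}{2d+1}\,\zeta^{2d+1} = (2d-1)!!\,\phi_{h',k'}[2d,\ldots],
\]
which is exactly the numerical factor extracted by $(2d-1)!!\,\d/\d t_{i,d}$ in \eqref{phihatgn}. The higher \emph{even} parts of the primitive $\Phi_{h',k'}$ (coming from odd $\zeta$-powers in $\phi$ as a one-form, hence suppressed in $\DP$) give no residue against the purely odd KdV-expansion, so these terms decouple, matching the fact that $\DP_{g,n}$ only sees the even-index coefficients. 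The factors $(-\alpha_i)^{2-2h-k}$ coming from each KdV-vertex via \eqref{eq:omegalocal} line up with the rescaling $t_d\to -t_{i,d}/\alpha_i$, $\hbar\to\hbar/\alpha_i^2$ built into $Z_i$.

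\textbf{Step 3: Matching leaves and unstable cases.} A leaf labeled $\ell$ in $\Gamma$ is attached to a genus-$0$ valency-$2$ KdV-vertex carrying $\omega_{0,2}|_{{\rm KdV}}$; taking $\Res_{z_{e_\ell}\to z_\ell}$ of this against the remainder reproduces exactly the substitution \eqref{substim}, $t_{i_\ell,d}\leftrightarrow (2d+1)!!\,\dd\zeta_{i_\ell}/\zeta_{i_\ell}^{2d+2}$, applied to the surviving leaf variable of $\ln Z$. The unstable exceptions are handled as in the two examples worked out in \S\ref{sec:structureZ}: the $(0,1)$-KdV vertex with one leaf corresponds to the extra monomial $-\alpha_i t_{i,-2}$, and the $(0,2)$-KdV edge contribution is reproduced by the kinetic term \eqref{unstablem} together with Remark~\ref{rem:02conv}. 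The compatibility of $\Phi_{0,1}$ with the subtracted leading term $\alpha_{i}\zeta_{i}^2\dd\zeta_{i}$ in \eqref{dVphi} exactly kills the part of $\omega_{0,1}|_{{\rm KdV}}$ which is already represented by a bare KdV-vertex, avoiding a double count.

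\textbf{Step 4: Combinatorics and main obstacle.} Finally the symmetry factors must match: the $1/n!$ from $\ln$ and the $1/|\Aut\Gamma|$ arise from the standard exponential formula for connected graphs, and I would verify this in the routine way, careful that permutations of identical $\Phi$-vertices, of identical KdV-vertices, and of parallel edges each contribute the right quotient. The main obstacle is precisely this bookkeeping: keeping track simultaneously of the factors $(2d_\ell\pm 1)!!$ produced by primitives versus differentiations, the signs $(-\alpha_i)^{\chi}$ attached to each vertex, the unstable conventions \eqref{unstablem}, and the automorphism factors. Once this is controlled, the odd-part statement is automatic, since every internal contraction only accesses the odd (in $\zeta$) part of $\phi$ and the KdV-vertex expansions are odd in each variable by \eqref{eq:omegalocal}.
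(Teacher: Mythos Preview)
Your proposal is correct and follows essentially the same approach as the paper's own proof: both recognize $\ln Z$ as the connected Feynman expansion of a constant-coefficient differential operator acting on an exponential, index the resulting terms by $\mathcal{G}_{g,n}$, and identify the derivative--variable pairing $(2d-1)!!\,\partial_{t_{i,d}}\,t_{i',d'}$ with the residue pairing $\Res_{\zeta\to 0}\bigl(\int^{\zeta}\zeta'^{2d}\dd\zeta'\bigr)\cdot(2d'+1)!!\,\zeta^{-2d'-2}\dd\zeta$. The paper simply declares the lemma ``almost obvious'' and records this one residue identity, whereas you spell out the leaf/unstable bookkeeping and the automorphism matching; but the substance is the same.
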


\begin{proof} In fact, this Lemma is almost obvious. Indeed, consider an expansion of an expression of the same type as~\eqref{eq:defZ}, that is, an action of a differential operator with constant coefficients on an exponential formal power series. Then the results can always be presented as a sum of bipartite graphs, where the two types of vertices represent the coefficients of the original formal power series and the coefficients of the differential operator; leaves correspond to the variables in which we expand the result, and the edges correspond to particular differentiations in the operator. 

In our case, we re-arrange the result of this computation into a differential form via the substitution $t_{i,d} \leftrightarrow (2d + 1)!!\zeta_i^{-(2d + 2)}\dd\zeta_i$ described in \eqref{substim} and used above in the definition of the expansion of $\omega_{g,n}$ at $(p_{i_1},\ldots,p_{i_n})$. Then the only subtle thing is that we want to represent differentiations in terms of the residues. Then it is enough to observe that the identity for $d \in \mathbb{Z}$:
$$
(2d-1)!!\frac{\partial}{\partial t_{i,d}} t_{i',d'} =\delta_{i,i'}\delta_{d,d'} (2d-1)!!
$$
is reproduced by the residue pairing:
$$
\Res_{\zeta_i\to 0}\Bigg[\Big(\int^{\zeta_i}_{0} \zeta^{2d}\Big) \cdot \delta_{i,i'} \cdot (2d'+1)!! \frac{\dd\zeta_{i'}}{\zeta_{i'}^{2d'+2}}\Bigg]
= \delta_{i,i'}\delta_{d,d'} (2d-1)!!, 
$$
while the last equality is precisely the computation on the internal edges in the residue formula. 
\end{proof}

\begin{remark}\label{rem:graph01} Using this diagrammatic formalism, we can now represent Equation~\eqref{eq:nontrivial01} for $\omega_{g,n}|_{{\rm KdV}}^{\Box}$ in terms of graphs. In this case the only non-trivial $\Phi$-vertices $v$ have $h(v)=0$ and $d(v)=1$ (see Figure~\ref{omphi5}).
\end{remark}
\begin{figure}[h!]
\begin{center}
\includegraphics[width=0.85\textwidth]{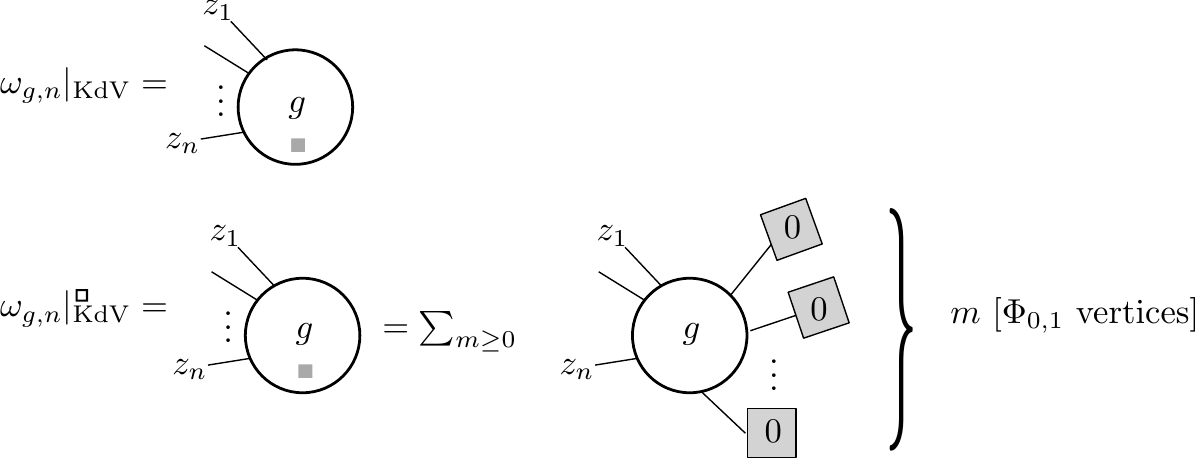}
\caption{\label{omphi5} Relation between the KdV vertices and the disk-renormalized KdV vertices. The graphs in the right-hand side have a symmetry factor $1/m!$.}
\end{center}
\end{figure}

\subsection{Graphical representations of topological recursion}
\label{S37}
In this Section we compare the sum of the graphs in Equation~\eqref{HAPB} with the one presented in \S~\ref{S36}. Since we are able to identify these two expressions, this way we will prove Theorem~\ref{thm:omega-graphs}, and, as a corollary (cf. Lemma~\ref{lem:connection2thm})  Theorem~\ref{thm:DP}. This way we also explain an explicit relation between the parameters $\phi_{g,n}$ and $\varphi_{g,n}$. 

Consider Equation~\eqref{Gngi} for $2g-2+n>0$. There are several options for the leave number $i$, $1\leq i\leq n$. It can either be attached to a stable KdV vertex, or it can be attached to a $(0,2)$-KdV vertex that is further attached to a $(0,2)$-$\Phi$ vertex, or it can be attached to a $(0,2)$-KdV vertex that is further attached to a stable $\Phi$-vertex. In the first two cases we say that the $i$-th leaf of $\mathcal{P}$-type, and in the third case we call it $\mathcal{H}$-type. 

Consider $\omega_{g,n}$ given by the sum of graphs~\eqref{Gngi}. A direct consequence of the definition of the projections $\mathcal{P}$ and $\mathcal{H}$ in \S~\ref{sec:PolarPart} is the following lemma.

\begin{lemma} \label{limim}If $\omega_{g,n}$ is given by the sum over graphs~\eqref{Gngi},
	then for any partition $A \sqcup B = \llbracket 1,n \rrbracket$,  the form $\mathcal{P}_A\mathcal{H}_B\omega_{g,n}$ is given by the sum over the subset of graphs in $\mathcal{G}_{g,n}$, where all leaves with labels in $A$ are of $\mathcal{P}$-type and all leaves with labels in $B$ are of $\mathcal{H}$-type. 
\end{lemma}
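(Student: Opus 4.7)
The plan is to verify the claim leaf by leaf, using the fact that the $z_i$-dependence of $\omega^\Gamma$ is completely determined by the local structure at leaf $i$ (the KdV-vertex $v_i$ adjacent to the leaf, and the adjacent $\Phi$-vertex $v''$ accessed via the $E_0$-residue at $z_{e_i}\to z_i$). Once the single-leaf statement is proved, the general assertion follows by linearity: the projectors $\mathcal{P}_i$, $\mathcal{H}_j$ commute across distinct variables because each $\omega_{g,n}$ has no poles on the diagonals, and so $\mathcal{P}_A\mathcal{H}_B=\prod_{i\in A}\mathcal{P}_i\prod_{j\in B}\mathcal{H}_j$ can be applied one variable at a time.

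First I would fix a leaf $i$ and compute the $z_i$-dependence of $\omega^\Gamma$ in each of the three cases. In the Type C case ($v_i$ a $(0,2)$-KdV, $v''$ a stable $\Phi$-vertex), a short calculation using the double pole of $\omega_{0,2}|_{\rm KdV}$ on the diagonal yields $\Res_{z_{e_i}\to z_i}\omega_{0,2}|_{\rm KdV}(z_i,z_{e_i})\,\Phi_{h(v''),d(v'')}(z_{e_i},\ldots)=\phi_{h(v''),d(v'')}(z_i,\ldots)$, which is holomorphic in $z_i$; hence $\omega^\Gamma\in\mathrm{Im}(\mathcal{H}_i)$, giving $\mathcal{H}_i\omega^\Gamma=\omega^\Gamma$ and $\mathcal{P}_i\omega^\Gamma=0$. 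In the Type A case ($v_i$ a stable KdV vertex), the form $\omega^\Gamma$ factorizes as $\omega_{h(v_i),d(v_i)}|_{\rm KdV}(z_i,\ldots)\cdot F$, whose $z_i$-dependence has poles only at ramification points. Substituting $\omega_{0,2}=\omega_{0,2}|_{\rm KdV}+\phi_{0,2}$ into the defining formula of $\mathcal{P}_i$ from \S~\ref{sec:PolarPart} produces $\mathcal{P}_i\omega^\Gamma=\omega^\Gamma+\omega^{\Gamma'}$, where $\Gamma'$ is the Type B graph obtained from $\Gamma$ by inserting a single $(0,2)$-KdV--$(0,2)$-$\Phi$ chain on leaf $i$'s edge; equivalently $\mathcal{H}_i\omega^\Gamma=-\omega^{\Gamma'}$. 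Finally, in the Type B case the same residue computation as in Type C produces an integral of $\phi_{0,2}(z_i,\cdot)$, again holomorphic in $z_i$, so $\mathcal{H}_i\omega^\Gamma=\omega^\Gamma$.

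Second I would show that Type B graphs in which the vertex just past $v''$ is again a $(0,2)$-KdV (``deep chains'') have weight zero: the residue of $\omega_{0,2}|_{\rm KdV}$ at a ramification point, in a variable that carries only a diagonal pole, vanishes. The surviving (shallow) Type B graphs are in bijection with Type A graphs via the chain insertion $\Gamma\mapsto\Gamma'$. Summing $\mathcal{H}_i\omega^\Gamma$ over all $\Gamma\in\mathcal{G}_{g,n}$, the contributions $-\omega^{\Gamma'}$ from Type A and $+\omega^{\Gamma'}$ from shallow Type B cancel, leaving $\sum_\Gamma\mathcal{H}_i\omega^\Gamma=\sum_{\text{Type C}}\omega^\Gamma$. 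This proves the single-leaf version, and iterating at each leaf in $A\sqcup B$ with the commutativity of projectors gives the partition-level statement $\mathcal{P}_A\mathcal{H}_B\omega_{g,n}=\sum\omega^\Gamma$, the sum ranging over graphs whose leaf types match $(A,B)$.

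The hard part will be the sign-tracking in the Type A/B cancellation together with the verification of the deep-chain vanishing; both are direct residue computations, but they require bookkeeping through the conventions for basepoints in $\Phi_{0,2}$ and for the orientations of $E_0$- versus non-$E_0$-edges. No new ideas beyond the residue formulas \eqref{dVphi}, \eqref{resGamma} and the decomposition $\omega_{0,2}=\omega_{0,2}|_{\rm KdV}+\phi_{0,2}$ should be needed.
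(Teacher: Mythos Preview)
Your proposal is correct and supplies exactly the details the paper omits: the paper's ``proof'' is the single sentence that the lemma is ``a direct consequence of the definition of the projections $\mathcal{P}$ and $\mathcal{H}$ in \S\ref{sec:PolarPart}''. Your leaf-by-leaf analysis---in particular the identity $\mathcal{P}_i\omega^\Gamma=\omega^\Gamma+\omega^{\Gamma'}$ for Type~A graphs (coming from the splitting $\omega_{0,2}=\omega_{0,2}|_{\rm KdV}+\phi_{0,2}$ in $G_i$), the holomorphicity of Type~B and Type~C weights in $z_i$, the vanishing of deep chains via Lemma~\ref{renom}, and the resulting pairwise cancellation between Type~A and shallow Type~B under $\mathcal{H}_i$---is precisely the computation that makes the paper's one-line claim rigorous, and the iteration over leaves is justified since the chain insertion $\Gamma\mapsto\Gamma'$ at leaf $i$ does not alter the type of any other leaf.
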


Now, we have a formula for $\varphi_{g,n}$. It is given by the restriction of the sum of graphs~\eqref{Gngi} to those graphs, where all leaves are of the $\mathcal{H}$-type. We remove the $(0,2)$-KdV vertices on the leaves using the following formula for the expansions of $\varphi_{g,n}$ (see Figure~\ref{omphi6}):
\begin{equation} 
\label{phiexi}\varphi_{g,n}^{(i_1,\ldots,i_n)}(\zeta_{i_1},\ldots,\zeta_{i_n})= \Res_{\eta_{1}\to \zeta_{i_1}} \cdots \Res_{\eta_{n}\to \zeta_{i_n}}
\Big[\prod_{\ell=1}^n \frac{\dd\zeta_{i_\ell}\otimes\dd\eta_{\ell}}{(\zeta_{i_\ell}-\eta_{\ell})^2} \Big]\,\int^{\eta_{1}}_{p_{i_1}}\cdots\int^{\eta_{n}}_{p_{i_n}} \varphi_{g,n}
\end{equation}

\begin{figure}[h!]
	\begin{center}
		\includegraphics[width=0.45\textwidth]{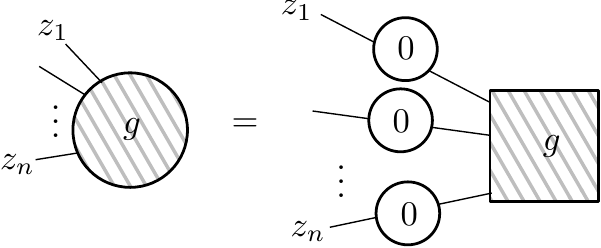}
		\caption{\label{omphi6} The hatched circle represents any symmetric holomorphic form in $n$ variables, and the hatched square the holomorphic function of $n$ variables, so that formula \eqref{phiexi} hold. We remind that the bivalent genus $0$ KdV vertices represent the standard bidifferential $\dd \zeta\otimes\dd \zeta'/(\zeta - \zeta')^2$. Edge correspond to the residue pairing between a form and a function, with the special rule~\eqref{resGamma} for the point at which the residue is taken.} 
	\end{center}
\end{figure}

Now, we can arrange the vertices of any graph in $\mathcal{G}_{g,n}$, $2g-2+n>0$, into clusters. Any cluster is a connected subgraph of positive Euler characteristic. Namely, we consider the maximal connected subgraphs that consist of the KdV-vertices (stable or unstable), unstable $\Phi$-vertices, and have at least one leaf of $\mathcal{P}$-type. These clusters are contributions to the graphical formulas of Eynard in~\cite{Einter} for the solutions of the usual topological recursion, so they form the black-white vertices in the terminology of Section~\ref{S2}. The connected components of the complement of these clusters are exactly of the type we use in the formula for $\varphi_{g,n}$ of Lemma~\ref{limim} (with the same remark that the $(0,2)$-KdV vertices on the leaves should be removed). These graphs from the $(g,n)$ $\varphi$-vertices in the terminology of Section~\ref{S2}, and indeed they can be internal, or be incident to $\mathcal{H}$-leaves. 

So, this way we represent black-white and gray vertices of Section~\ref{S2} as the sum over special subgraphs of graphs in $\mathcal{G}_{g,n}$, and this way we associate with any graph in $\mathrm{Bip}_{g,n}^0$ a subset of graphs in $\mathcal{G}_{g,n}$. The disjoint union of these subsets forms the whole set $\mathcal{G}_{g,n}$, and the definition of the weights of graphs imply that the weight of a graph in $\mathrm{Bip}_{g,n}^0$ given by Equation~\eqref{pairing0} is equal to the sum over the corresponding subset of graphs in $\mathcal{G}_{g,n}$ of the weights given by Equation~\eqref{resGamma}. 
This proves Theorem~\ref{thm:DP} and explains the relation between the parameters $\phi_{g,n}$ and $\varphi_{g,n}$. 

\begin{remark}
	In fact, we can sketch a different proof of Theorem~\ref{thm:DP}, which is easier. First, we observe that the $\omega_{g,n}|_{{\rm KdV}}$ (our KdV-vertices) solve the abstract loop equations. In this case the abstract loop equations are equivalent to the Virasoro constrains for the intersection indices of $\psi$-classes, see \cite{Ekappa} or~\cite{Einter}. Then, we observe that all dependence on the variables is through the leaves that are attached to the KdV-vertices. So, since the abstract loop equations are of the local nature, we can apply it to the piece of the graph that consists of one or two KdV vertices of fixed Euler characteristic, and this will imply the same property for the whole sum over graphs. 
	
	A big disadvantage of this approach is that we do not see that we can represent in this way \emph{any solution} of the abstract loop equations. This can be done only through a link to Theorem~\ref{bllb}, as we did in our proof.
\end{remark}

\subsection{Renormalization by $\DP_{0,1}$}
\label{S37a}
In this paragraph, we simplify a bit the diagrammatics of Section~\ref{S36}. The set of graphs $ \mathcal{G}_{g,n}$ is infinite, since without changing the topology: we can attach an arbitrary number of $\Phi_{0,1}$-vertices to each of the KdV vertex ; and we can replace a $(0,2)$-KdV vertex by an arbitrarily long sequence alternatively made of $(0,2)$-KdV vertices and $\Phi_{0,2}$ vertices. Nevertheless, the sum \eqref{Gngi} is finite because only a finite number of them have non-zero weight -- this has to do with the absence of poles at $p_i$'s in $\phi_{g,n}$ as well as $\omega_{0,1}|_{{\rm KdV}}$ and $\omega_{0,2}|_{{\rm KdV}}$, see Section~\ref{sec:structureZ}. The argument of Section~\ref{sec:structureZ} can be revisited in terms of graphs in the following way.

\begin{lemma}
\label{renom}
The weight $\omega_{g,n}^{\Gamma}$ vanishes if the graph $\Gamma$ has one of the following properties:
\begin{itemize}
\item there exists a $(0,1)$-KdV vertex incident to an internal edge.
\item there exists a KdV vertex attached to $k$ $(0,1)$-$\Phi$-vertices, with total valency $n + k$ for $n\geq 0$, $k \geq 1$, and genus $g \geq 0$, such that $2g - 2 + n+k > 0$ and $k > 3g - 3 + n$.
\item there exists an internal $(0,2)$-KdV vertex.
\end{itemize}
\end{lemma}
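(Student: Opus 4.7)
The plan is to prove each of the three vanishing statements by inspecting the explicit residue formula~\eqref{resGamma}, using the known local structure of $\omega_{h,d}|_{\rm KdV}$ and $\Phi_{h,d}$. The common thread is that the residue at $p_{i_e}$ of a holomorphic $1$-form in $z_e$ is zero, and that stable KdV vertices are expressible as a \emph{finite} polynomial in $\psi$-class intersection numbers of degree $3h-3+d$.

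For item (1), the local weight of a $(0,1)$-KdV vertex is $\omega_{0,1}|_{\rm KdV}(z) = \alpha_i\,\zeta_i^2\,\dd\zeta_i$ on $\tilde U_i$, which is a holomorphic $1$-form. An internal edge incident to this vertex is paired via $\Res_{z_e\to p_{i_e}}$ with the function contributed by the $\Phi$-vertex on the other end; by~\eqref{dVphi}, this function is holomorphic in $z_e$ and vanishes at $p_{i_e}$. Hence the integrand is a holomorphic $1$-form and the residue is zero. For item (3), the local weight $\omega_{0,2}|_{\rm KdV}$ has poles only along the diagonal; in particular it is holomorphic in each variable at $p_i$. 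An internal $(0,2)$-KdV vertex carries two residues $\Res_{z_{e_1}\to p_{i_{e_1}}}\Res_{z_{e_2}\to p_{i_{e_2}}}$ acting on $\omega_{0,2}|_{\rm KdV}(z_{e_1},z_{e_2})$ times holomorphic functions from the adjacent $\Phi$-vertices, hence the integrand is holomorphic at each residue point and the weight vanishes.

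For item (2), expand the stable KdV vertex $\omega_{g,n+k}|_{\rm KdV}$ via~\eqref{wgnkdv} as a sum over multi-indices $(d_1,\ldots,d_{n+k})$ satisfying the dimensional constraint $\sum_\ell d_\ell = 3g-3+n+k$, the $\ell$-th factor being proportional to $(2d_\ell+1)!!\,\dd\zeta_{i,\ell}/\zeta_{i,\ell}^{2d_\ell+2}$. On each of the $k$ legs attached to a $(0,1)$-$\Phi$-vertex, formula~\eqref{dVphi} gives $\Phi_{0,1}(z_e) = \sum_{d\ne 2} \frac{\phi_{0,1}[i,d]}{d+1}\zeta_{i,e}^{d+1}$, and the residue $\Res_{\zeta\to 0}\zeta^{d+1}\dd\zeta/\zeta^{2d_\ell+2}$ is non-zero only when $d=2d_\ell$. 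Combined with $d\neq 2$ and $\phi_{0,1}[i,0]=0$, the surviving contributions require $d_\ell\geq 2$ on each such leg. Summing over these $k$ legs gives $\sum_{j=1}^k d_{\ell_j}\geq 2k$, and the dimensional constraint forces $\sum_{\text{other $n$ legs}} d_\ell \leq 3g-3+n-k$. Since the remaining $d_\ell$ are non-negative, this is impossible when $k>3g-3+n$, so the intersection integral vanishes term by term.

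The main obstacle is item (2): one has to show that each $(0,1)$-$\Phi$-vertex behaves, under the residue pairing, as the insertion of a $\psi$-class of exponent at least $2$. Once this residue identification is made, the dimensional obstruction on $\oM_{g,n+k}$ is immediate. Items (1) and (3) are essentially formal, following from the holomorphicity of $\omega_{0,1}|_{\rm KdV}$ everywhere and the regularity of $\omega_{0,2}|_{\rm KdV}$ away from the diagonal.
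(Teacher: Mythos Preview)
Your proof is correct and follows essentially the same approach as the paper. For items (1) and (3) you use holomorphicity of the relevant KdV weights at the ramification points to kill the residues, exactly as the paper does; for item (2) you make the same dimensional argument on $\oM_{g,n+k}$, and your explicit residue computation showing that each $(0,1)$-$\Phi$-leg forces $d_\ell\geq 2$ is just a more detailed version of the paper's observation that the odd part of $\Phi_{0,1}$ is $O(\zeta_i^5)$.
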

\begin{proof} $(0,1)$ KdV vertices that are incident to an internal edge $e$ have weight $\alpha_{i} \zeta_{i,e}^2\dd z_{i,e}$ for $z_e \in U_i$, and they are paired with $\Phi$-vertices whose weight is a holomorphic function, hence yield a zero weight. 
	
	For the second statement, we need to prove that:
$$
\Res_{\zeta_1 \rightarrow p_i} \cdots \Res_{\zeta_k \rightarrow p_i} \Bigg[\omega_{g,n + k}|_{{\rm KdV}}(w_1,\ldots,w_n,\xi_{1},\ldots,\xi_k)\,\prod_{j = 1}^k \Phi_{0,1}(\xi_j)\Bigg] = 0
$$
whenever $2g - 2 + n +k > 0$ and $k > 3g - 3 + n$. Let us consider a pairing of $\omega_{g,n + k}|_{{\rm KdV}}$ with $k$ $\Phi_{0,1}$-vertices, with $k \geq 1$. Since $2g - 2 + n + k > 0$, the pairings only involve residues at $p_i$'s. According to the definition \eqref{dVphi}, $\Phi_{0,1}$ behaves as $O(\zeta_i^5)$ around $p_i$, therefore it can only gives a non-zero result when it is paired with $\zeta_i^{-2(d_i + 1)}\,\dd\zeta_i$ for $d_i \geq 2$. Since the dimension of $\overline{\mathcal{M}}_{g,n + k}$ is $3g - 3 + n + k$, $\omega_{g,n + k}|_{{\rm KdV}}$ is a linear combination of terms $\prod_{i} \zeta_i^{-2(d_i + 1)}\,\dd\zeta_i$ with $\sum_{i} d_i = 3g - 3 + n + k$. Having the non-vanishing condition $d_i \geq 2$ for $i \in \llbracket 1,k \rrbracket$ implies:
$$
 k \leq 3g - 3 + n.
$$
In particular, any graph where $k\geq 1$ $\Phi_{0,1}$-vertices are attached to a $(0,2+k)$ vertex receives a zero weight.  

To prove the last statement, we need to consider:
$$
\Res_{z \rightarrow p_{i}} \Res_{z' \rightarrow p_{i'}} \frac{\dd \zeta_1\otimes\dd\zeta}{(\zeta_1 - \zeta)^2}\,\Phi_{h,k}(z,z',\ldots)\,\frac{\dd \zeta'\otimes\dd\zeta_2}{(\zeta' - \zeta_2)^2} 
$$
or
$$
\Res_{z \rightarrow p_{i}} \Res_{z' \rightarrow p_{i'}} \frac{\dd\zeta_1\otimes\dd\zeta}{(\zeta_1 - \zeta)^2}\,\Phi_{h,k}(\zeta,\ldots)\Phi_{h',k'}(\zeta',\cdots)\,\frac{\dd \zeta'\otimes\dd\zeta_2}{(\zeta' - \zeta_2)^2}.
$$
Since $\Phi$'s are holomorphic at $(p_i,p_{i'})$, these two residues are zero.
\end{proof}

For $(g,n) \neq (0,1)$, let us define the set of reduced bipartite graphs $ \mathcal{G}_{g,n}^{\Box}$ which do not have $(0,1)$-vertices, 
and in which each $(0,2)$-KdV vertex is attached to at least one leaf. By convention, $ \mathcal{G}_{0,1}^{\Box}$ consists of a single graph, made of a $(0,1)$-KdV vertex attached to the leaf.
  
\begin{lemma}
\label{aru}For any $g \geq 0$ and $n \geq 1$, $ \mathcal{G}_{g,n}^{\Box}$ is finite.
\end{lemma}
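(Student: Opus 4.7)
The strategy is to convert the constraints defining $\mathcal{G}_{g,n}^{\Box}$ into a uniform bound on the number of vertices, using the Euler characteristic of the graph as the main accounting tool.

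First I would establish the basic identity
$$
\sum_{v \in V(\Gamma)} \bigl(d(v) + 2h(v) - 2\bigr) = 2g - 2 + n
$$
for every $\Gamma \in \mathcal{G}_{g,n}$. This is a standard consequence of the handshake lemma $\sum_v d(v) = 2|E(\Gamma)| + n$ combined with the connectedness condition $|E(\Gamma)| - |V(\Gamma)| + 1 + \sum_v h(v) = g$. The key point is that in the reduced set $\mathcal{G}_{g,n}^{\Box}$, where all $(0,1)$-vertices are forbidden, every summand is non-negative: a vertex contributes $\geq 1$ if it is stable (that is, $2h(v)-2+d(v)>0$), and exactly $0$ if it is of type $(0,2)$.

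From this identity I would extract three separate bounds. The number of stable vertices is at most $2g-2+n$, and each such vertex has valency $d(v) \leq (d(v)+2h(v)-2)+2$, so the total valency carried by stable vertices is at most $3(2g-2+n)$. Next, since every $(0,2)$-KdV vertex in $\mathcal{G}_{g,n}^{\Box}$ carries at least one leaf and has valency $2$, the number of $(0,2)$-KdV vertices is at most $n$. Finally, I would control the $(0,2)$-$\Phi$ vertices by observing that they are all internal and that each of their two incident edges necessarily joins a KdV vertex; a $(0,2)$-KdV vertex carrying one leaf absorbs at most one such edge, while a stable KdV vertex absorbs at most $d(v)$. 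Summing these capacities gives
$$
2\,\bigl|V_{\Phi}^{(0,2)}(\Gamma)\bigr| \;\leq\; \sum_{v \in V_{\mathrm{KdV}}^{\mathrm{st}}(\Gamma)} d(v) \;+\; n \;\leq\; 3(2g-2+n) + n,
$$
so the number of $(0,2)$-$\Phi$ vertices is bounded as well.

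Putting the three bounds together, the number $|V(\Gamma)|$ is bounded by an explicit function of $g$ and $n$ only, and the valency of every vertex is similarly bounded (stable vertices have valency $\leq 2g-2+n+2$, unstable ones have valency $2$). Since $\Gamma$ is a connected bipartite graph on a bounded number of vertices each of bounded valency, and each vertex carries a genus label in a bounded range (forced by $\sum_v h(v) \leq g$), there are only finitely many isomorphism types. This gives $|\mathcal{G}_{g,n}^{\Box}| < \infty$.

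The only mildly subtle point — and really the only obstacle — is bounding the internal $(0,2)$-$\Phi$ vertices; without the requirement in the definition of $\mathcal{G}_{g,n}^{\Box}$ that every $(0,2)$-KdV vertex carry a leaf, arbitrarily long alternating chains of $(0,2)$-KdV and $(0,2)$-$\Phi$ vertices would exist and finiteness would fail, as already noted in the paragraph preceding Lemma~\ref{renom}. The capacity argument above is precisely the one that exploits this restriction.
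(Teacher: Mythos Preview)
Your proof is correct and follows essentially the same approach as the paper: both arguments rest on the Euler-characteristic identity $\sum_v (d(v)+2h(v)-2)=2g-2+n$ together with the absence of $(0,1)$-vertices in $\mathcal{G}_{g,n}^{\Box}$, which makes every summand non-negative and hence bounds the stable vertices. The only organizational difference is in the last step: you bound the number of $(0,2)$-$\Phi$ vertices by an explicit capacity count on the KdV side, while the paper bounds the total number of KdV vertices first and then invokes bipartiteness together with $E\leq g+|V|-1$ and $E=\sum_{\Phi} d(v)\geq 2\,\#\Phi$ to get $\#\Phi\leq g+\#\mathrm{KdV}-1$; the two arguments are equivalent, and yours is arguably the more transparent of the two.
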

\begin{proof} The statement is obvious for $(g,n)=(0,1),(0,2)$. Consider $\Gamma \in  \mathcal{G}_{g,n}^{\Box}$, $2g-2+n>0$. According to our defining rules, there are no  $(0,1)$ KdV or $\Phi$-vertices, and at most $n$ $(0,2)$-KdV vertices. Since the total $g$ is fixed, the total number of KdV- or $\Phi$-vertices carrying a positive genus is bounded. We have the two basic relations:
\bea
& & \sum_{k,h} k\big(\#{\rm KdV}_{h,k} + \#\Phi_{h,k}\big) = 2\#{\rm edge} + n \nonumber  \\
\label{combre}& & 1 + \#{\rm edge} - \sum_{h,k} \big(\#{\rm KdV}_{h,k} + \#\Phi_{h,k}\big) = b_1(\Gamma) \leq g 
\eea
and we deduce:
\beq
\label{missing} \#\frac{1}{2} \sum_{k \geq 3} (\#\Phi_{0,k} + \#{\rm KdV}_{0,k}) \leq g - 1 + n
\eeq
In particular, there exists only a finite number of KdV vertices. Since the graph is bipartite, it also means that there exists a finite number of $\Phi$-vertices -- which was the piece of information missing in \eqref{missing}. We conclude that there exists a finite number of vertices, so we can only form a finite number of graphs.
\end{proof}

To a graph $\Gamma \in  \mathcal{G}^{\Box}_{g,n}$, we assign a new weight $\omega_{g,n}^{\Gamma,\Box}$. It is defined following the steps of \S~\ref{S36}, but now each $(h,k)$-KdV vertex is assigned a renormalized weight $\omega_{h,k}|^{\Box}_{{\rm KdV}}$, that incorporates the effect of blossoming $\Phi_{0,1}$ vertices (see Figure~\ref{omphi5}). We ruled out in $ \mathcal{G}_{g,n}^{\Box}$ graphs which had zero weight according to Lemma~\ref{renom}, so the result of the sum over reduced graphs is the same:
\beq
\label{re:ombox}\omega_{g,n} = \sum_{\Gamma \in  \mathcal{G}_{g,n}^{\Box}} \omega_{g,n}^{\Gamma,\Box}
\eeq
This formula holds for any $g$ and $n$. The first graphs in $\mathcal{G}_{g,n}^{\Box}$ are given in Figure~\ref{omphi7}.

\begin{remark} The proof of Lemma~\ref{renom} tells us that
\label{nonono}\beq
\label{norenom}\omega_{0,2}|_{{\rm KdV}}^{\Box} = \omega_{0,2}|_{{\rm KdV}},\qquad \omega_{0,3}|_{{\rm KdV}}^{\Box} = \omega_{0,3}|_{{\rm KdV}},\qquad \omega_{1,1}|_{{\rm KdV}}^{\Box} = \omega_{1,1}|_{{\rm KdV}}
\eeq
which is equivalent to the observation we made in \S~\ref{sec:structureZ}.
\end{remark}

\begin{figure}[h!]
\begin{center}
\includegraphics[width=0.9\textwidth]{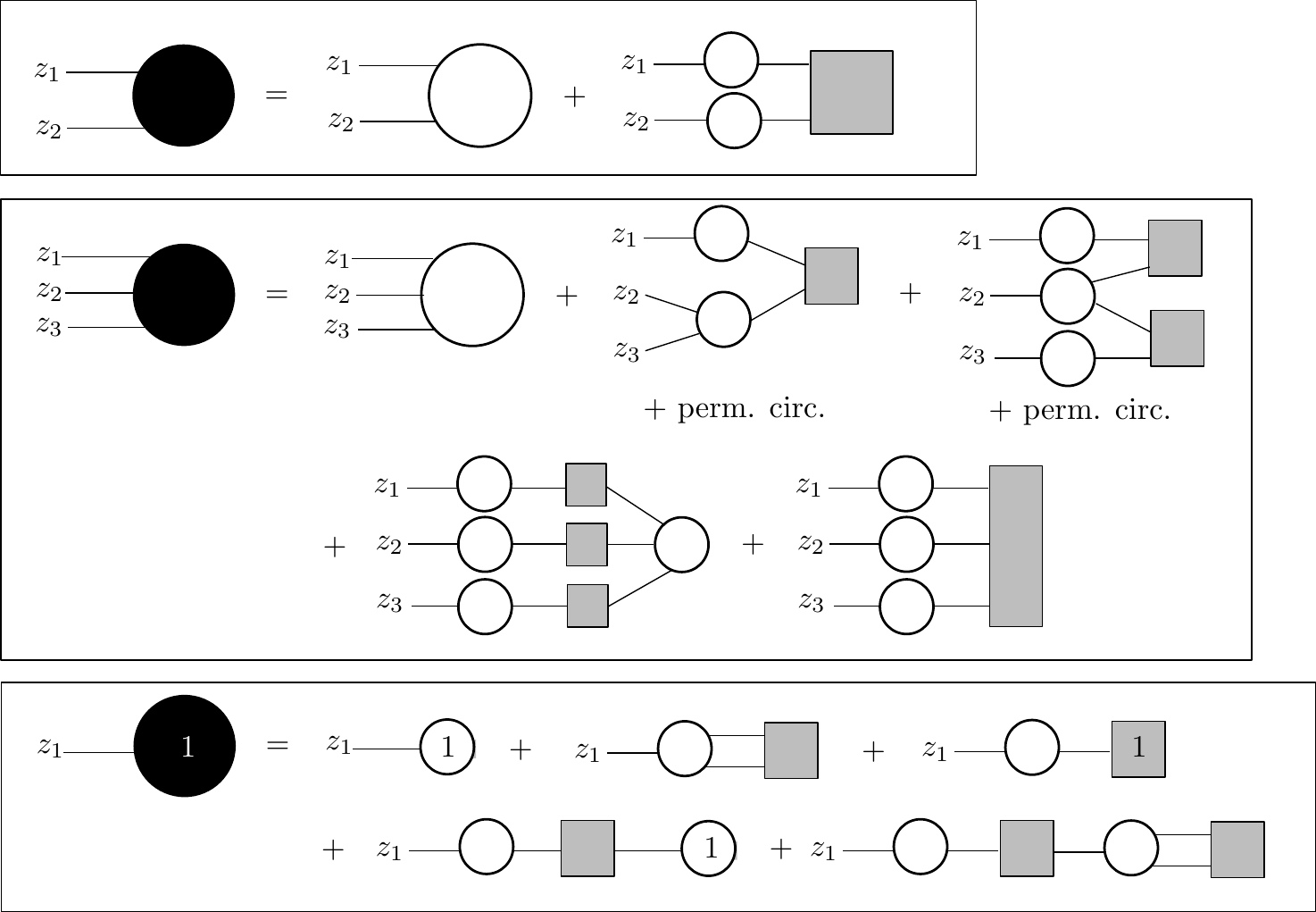}
\end{center}
\caption{\label{omphi7} Diagrammatic representation of correlators as a sum over $\mathcal{G}_{g,n}^{\Box}$ for $(g,n) = (0,2)$, $(0,3)$ and $(1,1)$. We only indicate the genus of vertices when greater or equal to $1$. It is not necessary to put a grey dot on $(0,2)$, $(0,3)$ and $(1,1)$-KdV vertices thanks to the no-renormalization Remark~\ref{nonono}.}
\end{figure}


\section{Some basic properties}

\subsection{Even and odd parts}\label{sec:even-odd-parts} Note that Theorem~\ref{thm:omega-graphs} describes the forms $\omega_{g,n}$ completely while Theorem~\ref{thm:DP} just gives the purely odd part of their expansion. In this section we analyze the difference between odd and even parts of $\omega_{g,n}$.

The first easy statement is the following.

\begin{proposition} \label{prop:oddness}
	If all forms $\phi_{g,n}$ are odd in each variable, then the forms $\omega_{g,n}$ defined by \eqref{Gngi} are also odd in each variable -- for $(g,n) \neq (0,2)$. 
\end{proposition}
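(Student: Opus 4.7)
The plan is to prove oddness graph-by-graph: using the expansion $\omega_{g,n}=\sum_{\Gamma\in\mathcal{G}_{g,n}}\omega_{g,n}^{\Gamma}$ from Theorem~\ref{thm:omega-graphs}, I would show that each summand is odd in each variable for $(g,n)\neq(0,2)$. The first step is to translate the assumption on the $\phi_{h,d}$'s into a statement about the functions $\Phi_{h,d}$ of \eqref{dVphi}. In a local coordinate $\zeta$ at a ramification point, the involution $\sigma$ acts by $\zeta\mapsto-\zeta$, and a $1$-form $f(\zeta)\dd\zeta$ is $\sigma$-odd exactly when $f$ is an even function. Integrating an odd $d$-form starting from $\zeta=0$ in each variable therefore produces a function that is odd in each of its arguments. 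The exceptional case $(h,d)=(0,1)$ requires checking that the subtracted term $\alpha_i\zeta_i^2\dd\zeta_i$ is itself a $\sigma$-odd $1$-form, which it is.

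Next I would analyze the local structure at each leaf of a graph $\Gamma\in\mathcal{G}_{g,n}$. Fix a leaf $\ell$. By definition it is attached to a $(0,2)$-KdV vertex $w_\ell$ of valency $2$. For $(g,n)\neq(0,2)$, the second edge of $w_\ell$ cannot be another leaf: by connectedness of $\Gamma$, this would isolate $w_\ell$ as a component and force $(g,n)=(0,2)$. So the second edge is an internal edge $e_\ell$, which by bipartiteness leads to a $\Phi$-vertex $v_\ell$. Consequently the variable $z_\ell$ enters $\omega_{g,n}^{\Gamma}$ only through the factor $\omega_{0,2}|_{\mathrm{KdV}}(z_\ell,z_{e_\ell})$, and we may extract
\[
\omega_{g,n}^{\Gamma}=\Res_{z_{e_\ell}\to z_\ell}\omega_{0,2}|_{\mathrm{KdV}}(z_\ell,z_{e_\ell})\,R(z_{e_\ell};z_{\neq\ell}),
\]
where $R$ collects all remaining vertex weights and residues. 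The $z_{e_\ell}$-dependence of $R$ comes solely from the factor $\Phi_{h(v_\ell),d(v_\ell)}(z_{e_\ell},\ldots)$, the other residues acting only on variables different from $z_{e_\ell}$; hence by the first step $R$ is odd in $\zeta_{e_\ell}$.

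The conclusion is a short parity computation. From \eqref{wo2sk} one has $\omega_{0,2}|_{\mathrm{KdV}}(\sigma_\ell(z_\ell),\sigma(u))=\omega_{0,2}|_{\mathrm{KdV}}(z_\ell,u)$, as the standard bidifferential depends only on $\zeta_\ell-\zeta_{e_\ell}$ when both points lie in the same $U_i$. Substituting $u=\sigma(z_{e_\ell})$ in the residue sends the pole $z_{e_\ell}=\sigma_\ell(z_\ell)$ back to $u=z_\ell$, and yields
\[
\omega_{g,n}^{\Gamma}(\ldots,\sigma_\ell(z_\ell),\ldots)=\Res_{u\to z_\ell}\omega_{0,2}|_{\mathrm{KdV}}(z_\ell,u)\,R(\sigma(u))=-\omega_{g,n}^{\Gamma}(\ldots,z_\ell,\ldots),
\]
using $R(\sigma(u))=-R(u)$. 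Iterating over $\ell$ and summing over $\Gamma$ gives the claim; the edge case $(g,n)=(0,1)$ is immediate since $\omega_{0,1}=\phi_{0,1}$. I expect the main obstacle to be bookkeeping signs in this last step, namely verifying that the two $\sigma$-actions on the factors $\dd\zeta_\ell$ and $\dd\zeta_{e_\ell}$ combine correctly with the motion of the pole so that no spurious sign is introduced beyond the one carried by $R$.
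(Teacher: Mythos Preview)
Your proof is correct and follows essentially the same route as the paper: analyze $\omega_{g,n}$ graph by graph, observe that for $(g,n)\neq(0,2)$ each leaf sits on a $(0,2)$-KdV vertex whose other edge feeds into a $\Phi$-vertex, and deduce oddness in that leaf variable from oddness of $\phi_{h,k}$. The only difference is cosmetic: the paper simply evaluates the residue $\Res_{z_e\to z_a}\omega_{0,2}|_{\mathrm{KdV}}(z_a,z_e)\,\Phi_{h,k}(\ldots,z_e,\ldots)=\phi_{h,k}(\ldots,z_a,\ldots)$ and reads off oddness directly, whereas you reach the same conclusion by the substitution $u=\sigma(z_{e_\ell})$ and the parity of $R$---a slightly longer path to the same endpoint, and your anticipated sign bookkeeping indeed works out.
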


\begin{proof}
	Indeed, consider the formula for $\omega_{g,n}(z_1,\ldots,z_n)$ given by Theorem~\ref{thm:omega-graphs}. Dependence on the variables $z_a$, $a \in \llbracket 1,n \rrbracket$, is expressed via the leaves of the underlying graphs. All leaves are attached to KdV-vertices. If a leaf is attached to a KdV-vertex of positive Euler characteristic, its contribution to the weight of the graph is purely odd, as it follows from Equation~\eqref{eq:omegalocal}. The same is true in the exceptional situation when we have one $(0,1)$-KdV vertex. 
	
	So, the only situation that we have to consider is when the $i$th leaf is attached to a $(0,2)$-KdV vertex. But in this case this $(0,2)$-KdV vertex is connected by an edge $e$ to a $\Phi_{h,k}$ vertex (we excluded the case of $(g,n)=(0,2)$).
	
	Consider the piece of the graph that consists of this leaf, the $(0,2)$-KdV vertex, and the internal edge that connects this $(0,2)$-KdV vertex to a $\Phi_{k,h}$ vertex. The contribution of this piece of the graph to the to the weight of the graph is equal 
	\begin{equation} \label{eq:evendep}
	\Res_{z_e\to z_a} \frac{\dd \zeta_{i,a} \otimes\dd \zeta_{i,e}}{(\zeta_{i,a}-\zeta_{i,e})^2} \int^{z_e}_{p_{i}} \phi_{h,k}(\ldots,z_e,\ldots)=\phi_{h,k}(\ldots,z_a,\ldots)|_{z_e=z_i}.
	\end{equation}
	for $z_a \in U_i$. Since we assume that $\phi_{h,k}$ is odd in each of its variables, the contribution of this piece of the graph is also odd in $z_i$. 
\end{proof}

Let us restrict $\omega_{g,n}$ (respectively, $\phi_{g,n}$) to $U_{i_1}\times \cdots \times U_{i,n}$. We can uniquely represent $\omega_{g,n}$ (resp., $\phi_{g,n}$) as a sum of $2^n$ forms, where each of these forms is either even or odd in each of its variable leaving in $U_{i_j}$, $j \in \llbracket 1,n \rrbracket$. Let us denote by $\omega_{g,n_o/n_e}$ (resp., $\phi_{g,n_o/n_e}$) the summand that is odd in the first $n_o$ variables and is even in the last $n_e$ variables, $n_o+n_e=n$. Since we assume the forms to be symmetric, we loose no generality when we make statements only about $\omega_{g,n_o/n_e}$ (resp., $\phi_{g,n_o/n_e}$).

The same argument that we used in the proof of Proposition~\ref{prop:oddness} implies:

\begin{lemma}
	The form $\omega_{g,n_o/n_e}$ is holomorphic in the last $n_e$ variables. 
\end{lemma}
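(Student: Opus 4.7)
The plan is to revisit the graph-sum representation of $\omega_{g,n}$ provided by Theorem~\ref{thm:omega-graphs} and refine the case analysis used in the proof of Proposition~\ref{prop:oddness}. Recall that the $z_a$-dependence of $\omega_{g,n}^{\Gamma}$ enters only through the leaf labelled $a$, and that leaf is always attached to a KdV-vertex of some type. I will classify the attachment and show that the only contributions to $\omega_{g,n}$ which are even in $z_a$ come from configurations in which the resulting $z_a$-dependence is automatically holomorphic.

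First, I would recall the three possibilities for the KdV-vertex $v$ to which the $a$-th leaf is attached: (i) $v$ is stable (i.e.\ $2h(v)-2+d(v)>0$), in which case Equation~\eqref{eq:omegalocal} shows the local weight $\omega_{h(v),d(v)}|_{\mathrm{KdV}}$ is purely odd in $z_a$; (ii) $\Gamma$ is the exceptional $(0,1)$-KdV graph, with odd weight $\alpha_i\zeta_i^2\dd\zeta_i$; (iii) $v$ is a $(0,2)$-KdV vertex, necessarily connected by an internal edge $e$ to a $\Phi_{h,k}$-vertex (any other $(0,2)$-configuration has been excluded either by the stability constraint or by Lemma~\ref{renom}). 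Cases (i) and (ii) contribute nothing to the even part in $z_a$, so only case (iii) can produce $\omega_{g,n_o/n_e}$ when $a$ belongs to the even block.

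Next, in case (iii) I use exactly the local computation \eqref{eq:evendep}: the residue pairing along the edge $e$ with the standard bidifferential collapses to the substitution $z_e=z_a$, producing a factor $\phi_{h,k}(\ldots,z_a,\ldots)$ into which no other $z$-dependence on $z_a$ enters. Since $\phi_{h,k}$ is by construction a holomorphic symmetric form on $\Sigma^{k}$, this contribution is holomorphic in $z_a$. Applying this observation independently to each of the last $n_e$ variables (the computation \eqref{eq:evendep} is local at each leaf and the integrations associated with other leaves do not touch $z_a$), I conclude that every graph contributing to $\omega_{g,n_o/n_e}$ is holomorphic in the last $n_e$ variables, and therefore so is the sum.

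I do not expect any real obstacle here: the only subtlety is to make sure that the decomposition into odd and even parts is indeed computed graph-by-graph before summing, which follows from the fact that the sum over $\mathcal{G}_{g,n}^{\Box}$ is finite (Lemma~\ref{aru}), so projecting each summand onto its purely even part in the selected variables is well-defined and commutes with the finite sum. The key input is simply the holomorphy of the $\phi_{h,k}$ together with the fact that the $(0,2)$-KdV bidifferential is the unique mechanism by which a holomorphic (hence possibly even) contribution in $z_a$ can be generated.
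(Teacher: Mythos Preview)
Your proposal is correct and follows essentially the same approach as the paper: both rely on the case analysis from the proof of Proposition~\ref{prop:oddness}, observe that even dependence in a variable can only arise via the $(0,2)$-KdV vertex configuration computed in Equation~\eqref{eq:evendep}, and conclude holomorphy from the holomorphy of the $\phi_{h,k}$. Your version is more detailed (making the finiteness of the graph sum and the leaf-by-leaf locality explicit), but the core argument is identical.
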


\begin{proof}
	The only way $\omega_{g,n_o/n_e}$ depends on its last $n_e$ variables is through the formula given by Equation~\eqref{eq:evendep}, and the forms $\phi_{g.n}$ are assumed to be holomorphic.
\end{proof}

Now, in order to decouple the even parts of the forms $\omega_{g,n}$ and $\phi_{g,n}$, we introduce a new notation. Let $s_{i,\ell}$, $i \in \llbracket 1,s \rrbracket$ and $\ell \geq 0$ be formal variables. Define the operator 
$$
O:=\sum_{\ell \geq 0} \sum_{i=1}^s \varsigma_{i,\ell} \Res_{\zeta_i \to 0} \zeta_i^{-2\ell-2}. 
$$
This operator substitutes formal variables $\varsigma_{i,\ell}$ instead of the terms $\zeta_i^{2\ell+1}\dd\zeta_i$ in the expansions of (holomorphic) differential forms. 

We introduce new differential forms, 
\bea
\omega^O_{g,n}& := & \sum_{k \geq 0} \frac{1}{k!} \bigg[\prod_{i=1}^k O^{(n+a)}\bigg]\omega_{g,n/k}; \nonumber \\ 
\phi^O_{g,n}& := & \sum_{k \geq 0} \frac{1}{k!} \bigg[\prod_{i=1}^k O^{(n+a)}\bigg]\phi_{g,n/k}; \nonumber
\eea
Here by $O^{(a)}$ we denote the action of the operator $O$ on $a$-th variable. So, this way we recollect the dependence on all even variables in formal power series in $\varsigma_{a,\ell}$. One exception for this definition is the case of $\omega_{0,2}$, where we modify in this way only the non-singular part. 

An immediate consequence of the argument we used to proof Proposition~\ref{prop:oddness} is the following statement.

\begin{proposition}\label{prop:oddandeven}\label{decun}
	The topological recursion~\eqref{btoporec} applied to $\phi^O_{g,n}$ gives $\omega^O_{g,n}$, and its purely holomorphic part is $\varphi_{g,n}^O$. 
\end{proposition}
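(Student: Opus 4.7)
The strategy is to start from the graphical formula $\omega_{g,n} = \sum_{\Gamma \in \mathcal{G}_{g,n}} \omega^{\Gamma}_{g,n}$ of Theorem~\ref{thm:omega-graphs} and to track how the operator $O$ interacts with the graph weights, using the parity analysis that underlies Proposition~\ref{prop:oddness}. First I would note that $O^{(a)}$ extracts from a holomorphic form its even-in-$\zeta_{i,a}$ part and repackages its Taylor coefficients as formal monomials in $\varsigma_{i,\ell}$; in particular, a holomorphic form already odd in that variable is annihilated. Consequently $\phi^O_{g,n}$ (resp.\ $\omega^O_{g,n}$) is the generating series in $\varsigma$ whose $\varsigma$-order-$k$ coefficient encodes the odd-in-first-$n$, even-in-last-$k$ piece $\phi_{g,n/k}$ (resp.\ $\omega_{g,n/k}$) of $\phi_{g,n+k}$ (resp.\ $\omega_{g,n+k}$), with an explicit symmetry factor $1/k!$.

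The key step is to identify, inside any graph $\Gamma \in \mathcal{G}_{g,n+k}$, which sub-structure carries the even dependence in each of the last $k$ leaves. By the argument reproduced in the proof of Proposition~\ref{prop:oddness} and Equation~\eqref{eq:evendep}, such even dependence can only occur when the $(n+a)$-th leaf is attached to a $(0,2)$-KdV vertex joined by an internal edge to a stable $\Phi_{h,k'}$-vertex, and the even part in $z_{n+a}$ is exactly the even part of that blob $\phi_{h,k'}$ in its corresponding argument. Applying $O^{(n+a)}$ to this configuration replaces that argument of $\phi_{h,k'}$ by the formal substitution $\zeta_i^{2\ell+1}\dd\zeta_i \leftrightsquigarrow \varsigma_{i,\ell}$ dictated by $O$. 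Summing over $k$ and over the partition of the $k$ even leaves among the stable $\Phi$-vertices, and reorganizing the resulting expression, one obtains the graphical expansion of Theorem~\ref{thm:omega-graphs} but with every blob $\phi_{h,k'}$ replaced by its $O$-enhanced version $\phi^O_{h,k'}$ (understood as a blob with $k'$ actual arguments and an arbitrary number of formal-parameter insertions) and with $\omega^O_{0,2}$ in place of $\omega_{0,2}$ for the unstable data; by Theorem~\ref{thm:omega-graphs} this is precisely the BTR solution with blobs $\phi^O_{g,n}$. The statement about the purely holomorphic part is then immediate from Lemma~\ref{limim}, which identifies $\varphi_{g,n}$ with the restriction of the sum~\eqref{Gngi} to graphs whose leaves are all of $\mathcal{H}$-type: the same $O$-reorganization restricted to these subgraphs identifies $\varphi^O_{g,n}$ with the $\mathcal{H}$-projection of $\omega^O_{g,n}$, using that $\omega^O_{0,2}$ has the same polar part as $\omega_{0,2}$.

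The main obstacle is purely combinatorial: verifying that the factorial and automorphism factors match under the reorganization. Specifically, one must check that the overall $1/k!$ from the definition of $\omega^O_{g,n}$, combined with the choice of which $\Phi$-vertex receives which even leaf, matches exactly the $1/j!$ factors appearing in the definition of each $\phi^O_{h,k'+j}$ together with the automorphism factors $|\mathrm{Aut}\,\Gamma|$ of the reorganized graphs. The clean way to phrase this is that each stable $\Phi$-vertex is independently ``dressed'' by an exponential of formal-parameter insertions, and that these dressings commute with the residue pairings defining the graph weights; once this factorization is checked, the identification $\omega^O_{g,n} = \mathrm{BTR}[\phi^O]_{g,n}$ is immediate.
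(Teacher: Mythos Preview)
Your proposal is correct and follows essentially the same approach as the paper. The paper's own proof is a single sentence pointing to ``the argument we used to prove Proposition~\ref{prop:oddness}''; you have simply spelled out that argument in detail, tracing how the even dependence of $\omega_{g,n+k}$ on its last $k$ variables arises solely through the $\Phi$-vertices via Equation~\eqref{eq:evendep}, so that applying $O$ to the leaves is equivalent to replacing each blob $\phi_{h,k'}$ by $\phi^O_{h,k'}$ in the graphical expansion of Theorem~\ref{thm:omega-graphs}.
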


Note that in order to redefine $\phi_{0,2}$ in this proposition one needs to re-define $B$ in Equation~\eqref{btoporec}. This proposition explains completely how the even parts of the differential forms are decoupled.

\subsection{Elementary group action properties}

The diagrammatic representation of \S~\ref{S36} has two main inputs: the weights assigned to KdV vertices, that we should now call ``reference vertices", and the weights assigned to $\Phi$-vertices. 

Proposition~\ref{prop:oddandeven} allows us to restrict our attention only to the purely odd forms, and in this case the diagrammatic representation of \S~\ref{S36} is equivalent to the presentation in terms of differential operators given by Theorem~\ref{thm:DP}, and we will use it throughout this section.

\begin{lemma}
Let us give a weight $\omega_{g,n}$ to the reference vertices, and $-\Phi_{g,n}$ to the $\Phi$-vertices. Then, the sum over graphs compute $\omega_{g,n}|_{{\rm KdV}}$.
\end{lemma}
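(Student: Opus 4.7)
The plan is to recognise this statement as the operator inversion $\exp\bigl(-\sum_{g,n}\DP_{g,n}\bigr)\exp\bigl(\sum_{g,n}\DP_{g,n}\bigr)=\mathrm{id}$ in disguise. By Proposition~\ref{prop:oddandeven} it suffices to verify the identity on the purely odd parts of the forms, and Lemma~\ref{lem:connection2thm} together with Theorem~\ref{thm:omega-graphs} identifies the diagrammatic construction of \S\ref{S36} with the partition-function formalism of \eqref{eq:defZ}.

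In that formalism, the graph sum over $\Gamma\in\mathcal{G}_{g,n}$ with KdV-vertex weights $\omega_{h,k}|_{\mathrm{KdV}}$ and $\Phi$-vertex weights $+\Phi_{h,k}$ produces the full correlators $\omega_{g,n}$ and corresponds to applying $\exp\bigl(\sum_{g,n}\DP_{g,n}\bigr)$ to $\prod_i Z_i$. Replacing the reference weights by the full $\omega_{h,k}$ amounts to starting from $Z$ in place of $\prod_i Z_i$, while flipping the sign of each $\Phi$-vertex weight corresponds to negating each operator $\DP_{g,n}$. Hence the modified graph sum in the statement computes the series expansion of
$$
\exp\Bigl(-\sum_{g,n}\DP_{g,n}\Bigr)\,Z.
$$
Because every $\DP_{g,n}$ is a pure differential operator with constant coefficients in the $t_{i,d}$'s, these operators commute pairwise, and so this equals $\prod_i Z_i$, whose associated forms are, by the special case $\DP_{g,n}=0$ treated in \S\ref{S36}, exactly the $\omega_{g,n}|_{\mathrm{KdV}}$.

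The step that will require care is the dictionary between the operator expansion and the labelled graph sum. Concretely, treating each $\omega_{h,k}$ sitting on a reference vertex as a composite object --- itself a sum over inner bipartite graphs of $\omega|_{\mathrm{KdV}}$ and $+\Phi$ vertices --- produces a two-level graph expansion that must be repackaged into a single-level sum of graphs with only $\omega|_{\mathrm{KdV}}$ vertices and signed $\pm\Phi$ vertices. One must verify that the automorphism factors $1/|\mathrm{Aut}(\Gamma)|$ and the residue pairings \eqref{resGamma} combine coherently under this repackaging, which is the same translation already carried out in \S\ref{S37} applied twice (once to the outer layer and once to each inner composite vertex); after that, the cancellation of all graphs containing at least one $\Phi$-vertex is automatic from $e^{-\DP}e^{\DP}=\mathrm{id}$. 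I expect this bookkeeping to be the main obstacle, but a conceptual rather than a calculational one.
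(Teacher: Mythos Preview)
Your proposal is correct and follows essentially the same approach as the paper: both invert the defining relation $Z=\exp\bigl(\sum_{g,n}\DP_{g,n}\bigr)\prod_i Z_i$ to obtain $\prod_i Z_i=\exp\bigl(-\sum_{g,n}\DP_{g,n}\bigr)Z$, and then translate the sign-flipped operator back into the graph language via the dictionary of \S\ref{S36}. The paper's proof is somewhat terser and does not spell out the automorphism bookkeeping you flag at the end, but the underlying argument is the same.
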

\begin{proof}
We can invert \eqref{eq:defZ}:
$$
Z = \exp\Big(\sum_{g \geq 0} \sum_{n \geq 1} \DP_{g,n}\Big) \prod_{i = 1}^s Z_i,\qquad \prod_{i  = 1}^{s} Z_i = \exp\Big(-\sum_{g \geq 0} \sum_{n \geq 1} \DP_{g,n}\Big)\,Z
$$
Since the diagrammatic representation of \S~\ref{S36} follows from the first expression, we deduce that $\omega_{g,n}|_{{\rm KdV}}$ as the same diagrammatic representation with the changes announced. And we remark that the position/degree of poles of $\omega_{g,n}|_{{\rm KdV}}$ -- that were used to perform the renormalization steps of \S~\ref{S37} -- are the same in $\omega_{g,n}$.
\end{proof}

Theorem~\ref{thm:DP} showed that any sequence of admissible correlators solution of the abstract loop equations can be expressed explicitly in terms of the KdV correlators. Instead of the KdV correlators, one could have chosen any other solution of the abstract loop equation, and obtain a similar expression.
\begin{lemma}
Let $(\omega_{g,n})_{g,n}$ and $(\omega_{g,n}|_{{\rm ref}})_{g,n}$ be two solutions of abstract loop equations. They are associated to KdV-blobs $(\phi_{g,n})_{g,n}$ and $(\phi_{g,n}|_{{\rm ref}})_{g,n}$. Let us give a weight $\omega_{g,n}|_{{\rm ref}}$ to reference vertices, and $\big[\phi_{g,n} - \phi_{g,n}|_{{\rm ref}}\big]$ to $\Phi$-vertices. Then, the sum over graphs computes $\omega_{g,n}$.
\end{lemma}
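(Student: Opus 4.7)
The plan is to mimic the argument of the previous lemma, replacing the identity $\prod_{i=1}^s Z_i = \exp\bigl(-\sum_{g,n} \widehat{\phi}_{g,n}\bigr)\,Z$ used there by an analogous one with $Z_{{\rm ref}}$ in place of $\prod_{i} Z_i$. The operators $\widehat{\phi}_{g,n}$ of \eqref{phihatgn} are differential operators with constant coefficients in the variables $t_{i,d}$, so they commute pairwise, and their dependence on the coefficients of $\phi_{g,n}$ is linear. This yields the operator factorization
$$
\exp\Big(\sum_{g,n} \widehat{\phi}_{g,n}\Big) \,=\, \exp\Big(\sum_{g,n} \big[\widehat{\phi}_{g,n} - \widehat{\phi}_{g,n}\big|_{{\rm ref}}\big]\Big)\,\exp\Big(\sum_{g,n} \widehat{\phi}_{g,n}\big|_{{\rm ref}}\Big).
$$
Applying both sides to $\prod_{i=1}^s Z_i$ and using the definition \eqref{eq:defZ} for each of the two solutions, I obtain the key identity $Z = \exp\bigl(\sum_{g,n} [\widehat{\phi}_{g,n} - \widehat{\phi}_{g,n}|_{{\rm ref}}]\bigr)\,Z_{{\rm ref}}$.

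I would then re-run the diagrammatic unpacking of $\ln Z$ used in Theorem~\ref{thm:omega-graphs} and Lemma~\ref{lem:connection2thm}, but with $Z_{{\rm ref}}$ now playing the role of the product of KdV tau-functions. Expanding $\exp\bigl(\sum [\widehat{\phi}_{g,n} - \widehat{\phi}_{g,n}|_{{\rm ref}}]\bigr)$ as a power series in the operators, and then applying Wick's theorem to $\ln Z_{{\rm ref}}$, each coefficient of $\ln Z$ becomes a sum over connected bipartite graphs with two vertex types: those arising from $\ln Z_{{\rm ref}}$ which, by Theorem~\ref{thm:DP} applied to the reference, carry the weight $\omega_{h,d}|_{{\rm ref}}$ after the substitution \eqref{substim}; and those arising from the operators, carrying the weight $\phi_{h,d} - \phi_{h,d}|_{{\rm ref}}$. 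The residue pairing on edges is unaffected, since the matching of $t_{i,d}$-derivatives with residues against $\zeta_i^{2d}\dd\zeta_i$ in the proof of Lemma~\ref{lem:connection2thm} depends only on the local coordinate $\zeta_i$, not on the choice of reference solution.

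The main point of care is well-definedness of the expansion: a priori the exponential of operators could create infinite summations in each coefficient. This is where the assumption that both solutions live on the \emph{same} local spectral curve $\mathcal{C}$ is essential; it forces $\omega_{0,1} = \omega_{0,1}|_{{\rm ref}}$ and $\omega_{0,2} = \omega_{0,2}|_{{\rm ref}}$, hence $\phi_{0,1} - \phi_{0,1}|_{{\rm ref}} = 0$ and $\phi_{0,2} - \phi_{0,2}|_{{\rm ref}} = 0$. All nontrivial $\Phi$-vertices therefore satisfy $2g - 2 + n > 0$, and the structure-of-$Z$ analysis of \S~\ref{sec:structureZ} together with the renormalization of \S~\ref{S37a} (Lemmas~\ref{renom} and~\ref{aru}) apply verbatim to show that only finitely many graphs contribute to each $(g,n)$. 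This is the only step I expect to require more than a routine check; once it is in place, equating the two expressions for $\ln Z$ gives the diagrammatic formula for $\omega_{g,n}$ announced in the lemma.
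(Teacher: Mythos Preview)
Your core argument is exactly the paper's: derive $Z = \exp\bigl(\sum_{g,n} [\widehat{\phi}_{g,n} - \widehat{\phi}_{g,n}|_{\rm ref}]\bigr)\,Z_{\rm ref}$ from commutativity of the constant-coefficient operators $\widehat{\phi}_{g,n}$, then read off the bipartite-graph expansion with $Z_{\rm ref}$ in the role of the KdV tau-function. The paper's proof is literally these two lines.

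One correction to your well-definedness paragraph: living on the same local spectral curve $\mathcal{C}$ does \emph{not} force $\omega_{0,1} = \omega_{0,1}|_{\rm ref}$ or $\omega_{0,2} = \omega_{0,2}|_{\rm ref}$. In the paper's setup (\S~\ref{localste}) the local spectral curve is only the branched-cover datum $x:\Sigma\to V$; the forms $y\,\dd x$ and $B$ are part of the \emph{solution}, not of $\mathcal{C}$, and two points of $\mathcal{M}_{\mathcal{C}}$ may differ already at $(0,1)$ and $(0,2)$. So in general $\phi_{0,1}-\phi_{0,1}|_{\rm ref}$ and $\phi_{0,2}-\phi_{0,2}|_{\rm ref}$ are nonzero, and your claim that all surviving $\Phi$-vertices are stable is unjustified. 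Finiteness of each coefficient still holds, but for the reason the paper gives in the previous lemma: the $\omega_{g,n}|_{\rm ref}$ have the same location and order of poles at the $p_i$ as the $\omega_{g,n}|_{\rm KdV}$, so the structure-of-$Z$ and renormalization arguments of \S~\ref{sec:structureZ} and \S~\ref{S37a} (in particular Lemmas~\ref{renom} and~\ref{aru}) transfer verbatim with $\omega|_{\rm ref}$ in place of $\omega|_{\rm KdV}$. Once you replace your incorrect vanishing claim with this pole-structure remark, the argument is complete and matches the paper.
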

\begin{proof} We write:
$$
Z_{{\rm ref}} = \exp\Big(\sum_{g \geq 0} \sum_{n \geq 1} \DP_{g,n}|_{{\rm ref}}\Big)\,\prod_{i = 1}^{s} Z_i,\qquad Z = \exp\Big(\sum_{g \geq 0} \sum_{n \geq 1} \DP_{g,n}\Big)\prod_{i = 1}^s Z_i
$$
Since the $\DP_{g,n}$ form a commutative algebra, we have:
$$
Z = \exp\Big(\sum_{g \geq 0} \sum_{n \geq 1} \big[\DP_{g,n} - \DP_{g,n}|_{{\rm ref}}\big]\Big) Z_{{\rm ref}}
$$
hence the diagrammatic representation.
\end{proof}

\section{Variational formulae}

\subsection{Variation of $\alpha_i$}
\label{var1}
We study the action of the flow:
$$
\phi_{0,1} \rightarrow \phi_{0,1} + t\,\zeta_i^2\dd\zeta_i,\qquad t \in \mathbb{R}
$$
on the correlators. Its infinitesimal generator is $\partial_{\alpha_i}$. 

\begin{proposition}
For any $g \geq 0$ and $n \geq 1$:
\beq
\label{unub}\partial_{\alpha_i} \omega_{g,n}(z_1,\ldots,z_n) = \Res_{z \rightarrow p_i} \frac{\zeta_i^3}{3}\,\omega_{g,n + 1}(z,z_1,\ldots,z_n)
\eeq
\end{proposition}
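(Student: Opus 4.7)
The plan is to use the graphical representation $\omega_{g,n}=\sum_{\Gamma\in\mathcal{G}_{g,n}}\omega_{g,n}^{\Gamma}$ of Theorem~\ref{thm:omega-graphs} and differentiate vertex by vertex. First I would observe that the $\Phi$-vertex weights do not depend on $\alpha_i$: for $(h,k)\neq(0,1)$ the integrand $\phi_{h,k}$ is not varied, and for $(h,k)=(0,1)$ the weight $\Phi_{0,1}(z)=\int^{z}(\phi_{0,1}-\alpha_{i_1}\zeta_{i,1}^2\,\dd\zeta_{i,1})$ in \eqref{dVphi} is by construction invariant under our flow, since the shift of $\phi_{0,1}$ is exactly cancelled by the shift of $\alpha_{i_1}$ when $i_1=i$. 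Hence $\partial_{\alpha_i}$ acts only on KdV-vertex weights.

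The central step will be the local dilaton identity
\[
\partial_{\alpha_i}\,\omega_{h,k}|_{\mathrm{KdV}}(z_1,\ldots,z_k)=\Res_{z\to p_i}\frac{\zeta_i(z)^3}{3}\,\omega_{h,k+1}|_{\mathrm{KdV}}(z,z_1,\ldots,z_k),
\]
for all $h\geq 0$, $k\geq 1$. In the stable range $2h-2+k>0$, substituting the explicit formula $\omega_{h,k}|_{\mathrm{KdV}}=(-\alpha_i)^{2-2h-k}\sum_{d_\ell}\int_{\oM_{h,k}}\prod_\ell\psi_\ell^{d_\ell}\bigotimes_\ell (2d_\ell+1)!!\,\dd\zeta_{i,\ell}/\zeta_{i,\ell}^{2d_\ell+2}$, the left-hand side produces the prefactor $-(2-2h-k)(-\alpha_i)^{1-2h-k}$, while the right-hand side uses $\Res_{\zeta\to 0}(\zeta^3/3)\cdot(2d+1)!!\,\dd\zeta/\zeta^{2d+2}=\delta_{d,1}$ to insert a factor $\psi_{k+1}^{1}$; the equality then follows from the dilaton equation $\int_{\oM_{h,k+1}}\psi_{k+1}\prod_\ell\psi_\ell^{d_\ell}=(2h-2+k)\int_{\oM_{h,k}}\prod_\ell\psi_\ell^{d_\ell}$. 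The unstable cases $(0,1)$ and $(0,2)$ are to be checked directly, using $\omega_{0,1}|_{\mathrm{KdV}}=\alpha_i\zeta_i^2\,\dd\zeta_i$ and the formal Laurent expansion of $\omega_{0,2}|_{\mathrm{KdV}}$ of Remark~\ref{rem:02conv}.

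Applying Leibniz and the identity above at each KdV-vertex $v$ of $\Gamma$ gives
\[
\partial_{\alpha_i}\omega_{g,n}^{\Gamma}=\sum_{v\in\mathrm{KdV}(\Gamma)}\Res_{z\to p_i}\tfrac{\zeta_i(z)^3}{3}\,\omega_{g,n}^{\Gamma^{v,+}}(z,z_1,\ldots,z_n),
\]
where $\Gamma^{v,+}$ is obtained from $\Gamma$ by attaching a bare extra leg carrying $z$ at $v$. Summing over $\Gamma\in\mathcal{G}_{g,n}$ and over its KdV-vertices, and identifying the pairs $(\Gamma,v)$ bijectively with graphs in $\mathcal{G}_{g,n+1}$ carrying the new leaf at $v$, the sum reassembles into $\Res_{z\to p_i}\zeta_i(z)^3/3\cdot\omega_{g,n+1}(z,I)$, yielding \eqref{unub}.

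The main obstacle will be to reconcile the ``bare leg at $v$'' in $\Gamma^{v,+}$ with the rule in $\mathcal{G}_{g,n+1}$ that every leaf must be mediated by a $(0,2)$-KdV vertex; this requires the auxiliary identity $\Res_{z\to p_i}\zeta_i(z)^3/3\cdot\omega_{0,2}|_{\mathrm{KdV}}(z,z_e)=\zeta_i(z_e)^2\,\dd\zeta_i(z_e)\,\mathbf{1}_{U_i}(z_e)$ together with careful bookkeeping of automorphism factors. A cleaner alternative would be to check directly that $\partial_{\alpha_i}Z=\partial_{t_{i,1}}Z$ on the partition function of Theorem~\ref{thm:DP}: this follows from the same dilaton equation together with the fact that $\alpha_i$ enters $Z$ only through the rescaling in $Z_i$ (not in any $\DP_{g,n}$, since the $d=1$ term is explicitly separated out from $\DP_{0,1}^{(i)}$). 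The substitution~\eqref{substim} then yields the odd-part version of~\eqref{unub} at once, and the full identity follows by invoking the even/odd decoupling of Proposition~\ref{decun}.
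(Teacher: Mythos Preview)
Your proposal is correct and takes essentially the same approach as the paper: differentiate the graph sum of Theorem~\ref{thm:omega-graphs}, observe that only KdV-vertex weights carry $\alpha_i$, invoke the dilaton equation at each KdV vertex, and reassemble the marked graphs into $\mathcal{G}_{g,n+1}$. The paper's proof is terser---it attaches the new leaf directly to the marked KdV vertex without pausing over the leaf-mediation rule you flag, and it neither spells out the $\Phi_{0,1}$ invariance nor offers the partition-function alternative $\partial_{\alpha_i}Z=\partial_{t_{i,1}}Z$.
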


\begin{proof}
The equation is obvious for $(g,n) = (0,1)$ and $(0,2)$, so we assume now $2g - 2 + n > 0$. The KdV vertices are homogeneous:
\beq
\label{eccehomo}\partial_{\alpha_i} \omega_{g,n}|_{{\rm KdV}}(z_1,\ldots,z_n) = (2 - 2g - n)\alpha_i^{-1}\,\omega_{g,n}|_{{\rm KdV}}(z_1,\ldots,z_n)
\eeq
for $z_1,\ldots,z_n \in U_i$, and $0$ otherwise. With the dilaton equation -- see \eqref{dilao} in Appendix -- the right-hand side can be transformed into:
$$
(2 - 2g - n) \omega_{g,n}|_{{\rm KdV}}(z_1,\ldots,z_n) = \Res_{z \rightarrow p_i} \frac{\alpha_i\zeta_{i}^3}{3}\,\omega_{g,n + 1}|_{{\rm KdV}}(z,z_1,\ldots,z_n)
$$
so \eqref{unub} is true in the pure KdV case. In general, $\omega_{g,n}$ is expressed as a sum over graphs in $\mathcal{G}_{g,n}$, and their weight depends on $\alpha_i$ only via KdV vertices. Applying $\partial_{\alpha_i}$ amounts to sum over graphs $\Gamma$ in $\mathcal{G}_{g,n}$ together with a marked KdV vertex, and replace the weight of this KdV vertex in $\omega_{g,n}^{\Gamma}$ by \eqref{eccehomo}. By attaching a new leaf labeled $n + 1$ to the marked KdV vertex, we obtain in this way all graphs $\overline{\Gamma}$ in $\mathcal{G}_{g,n + 1}$ exactly once, and represent $\partial_{\alpha_i}\omega_{g,n}$ as a sum over $\mathcal{G}_{g,n + 1}$. According to \eqref{eccehomo}, since the weights given to graph is a product of local weights, the weight given to $\overline{\Gamma}$ in this sum is:
$$
\Res_{z \rightarrow p_i} \frac{\zeta_i^3}{3}\,\omega_{g,n + 1}^{\overline{\Gamma}}
$$
hence the result by summing over $\overline{\Gamma}$.
\end{proof}

\subsection{General variations}

For any $h \geq 0$ and $k \geq 1$, we can consider the deformation of the initial data and the blobs by holomorphic $k$-forms. If $\kappa_{h,k} \in H^0(\Sigma^{k},K_{\Sigma}^{\boxtimes k})^{\mathfrak{S}_{k}}$, we define the infinitesimal generator $\delta[\kappa_{h,k}]$ of the flow:
\beq
\label{flow51}\phi_{h',k'} \rightarrow \phi_{h',k'} + t\,\delta_{h,h'}\delta_{k,k'}\,\kappa_{h,k},\qquad t \in \mathbb{R}
\eeq
In this paragraph, we restrict to $\kappa_{0,1}$ behaving like $O(\zeta_i^3\dd\zeta_i)$ in the open set $U_i$, so that the flow preserves the quadruple zero of $\Phi_{0,1}$ assumed in \eqref{dVphi}, and the diagrammatic representation of Section~\ref{S3} holds for any $t \in \mathbb{R}$. The flow changing the coefficient of the double zero of $\phi_{0,1}$ was studied separately in \S~\ref{var1}, but is at the end governed by the same formula. So we will state Theorem~\ref{vary} below in full generality.

Let us define:
\beq
\label{eq:Kapap}\mathcal{K}_{h,k}(z_1,\ldots,z_k) = \int^{z_1}\!\!\!\cdots\int^{z_k} \kappa_{h,k}
\eeq
where the integral up to $z_{\ell} \in U_{i_{\ell}}$ should starts from the point $p_{i_{\ell}}$. And conversely we have:
\beq
\label{eq:ff}\kappa_{h,k}(z_1,\ldots,z_k) = \Res_{z_1' \rightarrow z_1} \cdots \Res_{z_k' \rightarrow z_k} \Bigg[\bigotimes_{\ell = 1}^k \frac{\dd \zeta_{i_{\ell},\ell}\otimes\dd \zeta_{i_{\ell},\ell}'}{(\zeta_{i_{\ell},\ell} - \zeta_{i_{\ell},\ell}')^2}\,\,\mathcal{K}_{h,k}(z_1',\ldots,z_k')\Bigg]
\eeq
where $\zeta_{i,\ell}$ is a local coordinate in $U_{i}$ such that $x(z_{\ell}) = \zeta_{i,\ell}^2/2 + x(p_{i})$.

Let us compute the variation of $\omega_{g,n}$ using the diagrammatic representation of \eqref{re:ombox}. By construction, we always have:
$$
\delta[\kappa_{h,k}]\cdot\omega_{g,n}|_{{\rm KdV}} = 0
$$
For $(h,k) \neq (0,1)$, we deduce:
$$
\delta[\kappa_{h,k}]\cdot \omega_{g,n}|_{{\rm KdV}}^{\Box} = 0
$$
but since the $\Box$ is a renormalization by $\Phi_{0,1}$-vertices, it is affected by $(0,1)$ flows (see Figure~\ref{omphi9}). We have for $2g - 2 + n > 0$,
\beq
\label{eq:var01}\delta[\kappa_{0,1}] \cdot \omega_{g,n}|_{{\rm KdV}}^{\Box}(z_{[1,n]}) = \sum_{i} \Res_{w \rightarrow p_i} \mathcal{K}_{0,1}(w)\cdot \omega_{g,n + 1}|_{{\rm KdV}}(w,z_{[1,n]})
\eeq
where for a set $I$, we put $z_{I} = (z_i)_{i \in I}$. This formula is also valid for $(g,n) = (0,2)$ and $(0,1)$. Indeed, the $(0,2)$-KdV vertices are not renormalized so the left-hand side vanishes, and in the right-hand side, $\omega_{0,3}|_{{\rm KdV}}$ has a double pole, so its pairing with $\Phi_{0,1}$ which has atmost a quadruple zero gives zero. For $(g,n) = (0,1)$, \eqref{eq:var01} coincides with formula \eqref{eq:ff}. 

\begin{figure}[h!]
\begin{center}
\includegraphics[width=0.6\textwidth]{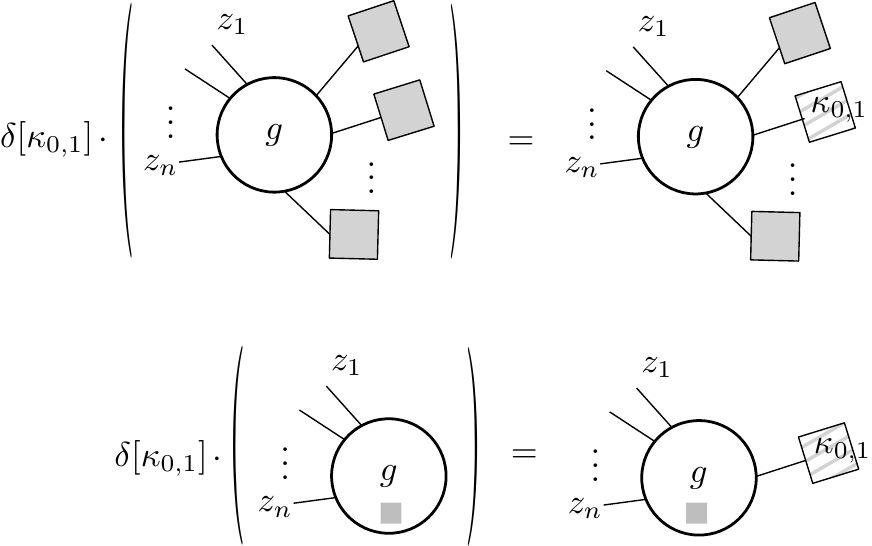}
\caption{\label{omphi9} If there are $m$ $\Phi_{0,1}$-vertices in the left-hand side, the graph comes with symmetry factor $1/m!$ ; since there are $m$ ways to replace one $\Phi_{0,1}$-vertex by a $\mathcal{K}_{0,1}$-vertex, it becomes a $1/(m - 1)!$ in the right-hand side, which is also the symmetry factor of the graph. This is summarized by the second graphical equation.}
\end{center}
\end{figure}

Let us define:
\beq
\label{eq:Edef}\mathcal{E}_{g,n;h,k}(z_{[1,n]};w_{[1,k]}) := \sum_{\substack{J \vdash [1,k] \\ I_1\sqcup\cdots \sqcup I_{[J]} = [1,n]}} \sum_{\substack{h_1,\ldots,h_{[L]} \geq 0 \\ g = h + (\sum_{i} h_i) + k - [J] \\ (h_i,|J_i| + |I_i|) \neq (0,1)}} \frac{1}{{\rm s}_{J,I,\mathbf{h}}} \bigotimes_{i = 1}^{[J]} \omega_{h_i,|I_i| + |J_i|}(z_{I_i},w_{J_i})
\eeq
In this sum, $J$ is a partition of $\llbracket 1,k \rrbracket$ into $[J]$ non-empty subsets, while $I$ is a partition of $\llbracket 1,n \rrbracket$ into possibly empty subsets. $s_{J,I,\mathbf{h}}$ is the symmetry factor of this data.

We remark that the sum itself is empty -- and the corresponding $\mathcal{E}$ equal to zero by convention -- if $2h - 2 + k > 2g - 2 + n$, or if $2h -2 + k = 2g - 2 + n$ but $h > g$. In the maximal case $(h,k) = (g,n)$, we have:
$$
\mathcal{E}_{g,n;g,n}(z_{[1,n]};w_{[1,n]}) = \bigotimes_{i = 1}^n \omega_{0,2}(z_i,w_i)
$$
so its pairing with $\mathcal{K}_{g,n}$ produces $\kappa_{g,n}$ itself, as expected. In the minimal case $(h,k) = (0,1)$, we have:
\beq
\label{E01}\mathcal{E}_{g,n;0,1}(z_{[1,n]};w) = \omega_{g,n + 1}(z_{[1,n]},w)
\eeq
$(h,k) = (0,2)$ is an example with non-trivial symmetry factor:
\bea
\mathcal{E}_{g,n;0,2}(z_{[1,n]};w_1,w_2) & = & \frac{1}{2}\Big[\omega_{g - 1,n + 2}(z_{[1,n]},w_1,w_2) \nonumber \\
& & + \sum_{\substack{J \subseteq [1,n] \\ 0 \leq g' \leq g \\ (J,g') \neq (\emptyset,0),(I,g)}} \omega_{g',|J| + 1}(z_{J},w_1) \otimes \omega_{g-g',n + 1 - |J|}(z_{I\setminus J},w_2)\Big] \nonumber
\eea

We find the variational formula:
\begin{theorem}
For $n \geq 1$ and $g \geq 0$:
\label{vary}\beq
\label{eq:var}\delta[\kappa_{h,k}]\cdot\omega_{g,n}(z_{[1,n]}) = \sum_{1 \leq i_1,\ldots,i_k \leq s} \Res_{w_1 \rightarrow p_{i_1}} \cdots \Res_{w_k \rightarrow p_{i_k}} \mathcal{E}_{g,n;h,k}(z_{[1,n]};w_{[1,k]})\,\mathcal{K}_{h,k}(w_{[1,k]})
\eeq
\end{theorem}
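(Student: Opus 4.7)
The plan is to differentiate the diagrammatic expansion $\omega_{g,n}=\sum_{\Gamma\in\mathcal{G}_{g,n}}\omega_{g,n}^{\Gamma}$ of Theorem~\ref{thm:omega-graphs} term by term. The assumption that $\kappa_{0,1}=O(\zeta_i^3\dd\zeta_i)$ on each $U_i$ keeps every $\alpha_i$ fixed under the flow, so the KdV-vertex weights $\omega_{h',k'}|_{{\rm KdV}}$ are inert. The complementary case in which $\alpha_i$ itself varies was settled in \S~\ref{var1}, and the resulting identity \eqref{unub} is precisely the $(h,k)=(0,1)$ specialization of \eqref{eq:var} by virtue of \eqref{E01}. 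Under the flow, each $\Phi_{h,k}$-vertex weight shifts as $\Phi_{h,k}\mapsto\Phi_{h,k}+t\,\mathcal{K}_{h,k}$, with $\mathcal{K}_{h,k}$ defined in \eqref{eq:Kapap}. The Leibniz rule then yields
\begin{equation*}
\delta[\kappa_{h,k}]\omega_{g,n}=\sum_{(\Gamma,v_0)}\frac{1}{|\Aut(\Gamma,v_0)|}\,\omega_{g,n}^{\Gamma}\Big|_{\Phi_{h,k}(v_0)\mapsto\mathcal{K}_{h,k}},
\end{equation*}
where the sum runs over pairs $(\Gamma,v_0)$ with $\Gamma\in\mathcal{G}_{g,n}$ and $v_0$ a distinguished $\Phi_{h,k}$-vertex.

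Next, I would perform the residues attached to the $k$ internal edges incident to $v_0$ before all others. This is licit because the variables $w_1,\ldots,w_k$ on these edges occur in the integrand only at $v_0$ (through $\mathcal{K}_{h,k}(w_{[1,k]})$) and at the KdV-vertex at the other end of each such edge. This pulls the prefactor $\sum_{i_1,\ldots,i_k}\Res_{w_1\to p_{i_1}}\cdots\Res_{w_k\to p_{i_k}}\mathcal{K}_{h,k}(w_{[1,k]})$ outside. What remains is the weight of the \emph{cut graph} $\widetilde{\Gamma}$ obtained from $\Gamma$ by removing $v_0$ and its $k$ incident edges, the cut ends becoming new leaves carrying $w_1,\ldots,w_k$. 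If $\widetilde{\Gamma}$ has $c=[J]$ connected components, the partition $J\vdash\llbracket 1,k\rrbracket$ records which new leaves share a component, and $I=(I_1,\ldots,I_{[J]})$ records the distribution of the original $n$ leaves. A direct Euler-characteristic computation (removing a genus-$h$ vertex and cutting $k$ edges decreases $b_1$ by $k-[J]$) reproduces the constraint $g=h+\sum_i h_i+k-[J]$ appearing in \eqref{eq:Edef}.

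The final step is to identify the residual sum with $\mathcal{E}_{g,n;h,k}$. For fixed $(J,I,\mathbf{h})$, the sum over cut graphs whose $i$-th connected component has genus $h_i$ and leaves $(I_i,J_i)$ reproduces $\omega_{h_i,|I_i|+|J_i|}(z_{I_i},w_{J_i})$ by Theorem~\ref{thm:omega-graphs} applied componentwise. The product over components and the decomposition of $|\Aut(\Gamma,v_0)|$ as $\mathsf{s}_{J,I,\mathbf{h}}$ times the automorphism factors of each component assembles into \eqref{eq:Edef}. The exclusion $(h_i,|I_i|+|J_i|)\neq(0,1)$ is automatic: a $(0,1)$-type component would require a $(0,1)$-KdV vertex, which is absent from $\mathcal{G}_{g,n}$ whenever $2g-2+n>0$ by the defining rules of $\mathcal{G}_{g,n}$ and Lemma~\ref{renom}. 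Collecting the three pieces delivers \eqref{eq:var}.

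The main technical difficulty is the bookkeeping of symmetry factors: one must verify that the automorphism group of the marked graph $(\Gamma,v_0)$ splits cleanly as a semi-direct product of permutations exchanging isomorphic components (which produces $\mathsf{s}_{J,I,\mathbf{h}}$) with the automorphisms internal to each component. A secondary subtlety arises for $(h,k)=(0,2)$, where cut components can reduce to a single $(0,2)$-KdV vertex: one must check that these degenerate components, together with larger components, reproduce both the $\omega_{g-1,n+2}$ term and the bilinear products appearing in the explicit form of $\mathcal{E}_{g,n;0,2}$.
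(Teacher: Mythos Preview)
Your proposal is correct and follows essentially the same route as the paper: differentiate the diagrammatic expansion, mark a $\Phi_{h,k}$-vertex and replace its weight by $\mathcal{K}_{h,k}$, excise that vertex, and recognize the connected components of the cut graph as the factors $\omega_{h_i,|I_i|+|J_i|}$ assembling into $\mathcal{E}_{g,n;h,k}$. The only organizational difference is that the paper passes to the reduced set $\mathcal{G}_{g,n}^{\Box}$ for the cutting step (so that $(0,1)$-vertices are absent by construction and the $(h,k)=(0,1)$ case is handled through the renormalized KdV vertices via \eqref{eq:var01}), whereas you work in $\mathcal{G}_{g,n}$ and invoke Lemma~\ref{renom} together with \S~\ref{var1} to cover the same ground; both choices lead to the same argument.
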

In Section~\ref{Freenerg}, we define $\omega_{g,0} := F_{g}$ in a way such that \eqref{eq:var} also holds for $n = 0$, see Corollary~\ref{varaF}.

\begin{proof}
In a graph $\Gamma \in \mathcal{G}_{g,n}$, the variation $\delta[\kappa_{h,k}]$ amounts to replacing one of the $\Phi_{h,k}$-vertices by a vertex with a weight $\mathcal{K}_{h,k}$, and sum over all possible ways to do this substitution. When working with reduced graphs $\Gamma \in \mathcal{G}_{g,n}^{\Box}$, one has to distinguish whether $(h,k) = (0,1)$ or not. For $(h,k) = (0,1)$, we have seen that the variation is equivalent to adding an edge to one of the renormalized KdV vertex, and pair it with a $\mathcal{K}_{0,1}$-vertex (one can check that the symmetry factors are automatically accounted for). So, we exactly get \eqref{eq:var} with \eqref{E01}. If $(h,k) \neq (0,1)$, let us consider the graph $\Gamma'$ obtained after removing of the vertex targeted by the substitution, and considering as new leaves the edges that we had to cut. This graph will in general have $r \geq 1$ connected components $\Gamma'_1,\ldots,\Gamma'_r$, which have their own genera $h_1,\ldots,h_r$. The initial leaves are distributed in possibly empty subsets $I_i$ belonging to $\Gamma_i'$, while the new leaves are distributing among non-empty subsets $J_1,\ldots,J_r$ of $\Gamma_i'$, for $i \in \llbracket 1,r \rrbracket$. All graphs of topology $(h_i,n_i)$ with $n_i = |I_i| + |J_i|$ may appear as $\Gamma_i'$. The only constraints is that $(h_i,n_i) \neq (0,1)$, since in the initial graph $\Gamma$, there were not $(0,1)$-KdV vertex. Translating this decomposition into weights, we see that the weight of $\delta[\kappa_{h,k}]\cdot\omega_{g,n}^{\Gamma,\Box}$ is the sum over all possible choice of a $\Phi_{h,k}$-vertex of $\bigotimes_{i = 1}^r \omega_{h_i,n_i}^{\Gamma'_i,\Box}$, paired a $\mathcal{K}_{h,k}$-vertex. Summing over all initial graphs $\Gamma \in \mathcal{G}_{g,n}$ and thus all $\Gamma_i'$, one recognizes \eqref{eq:var} with $\mathcal{E}$'s defined in \eqref{eq:Edef} (see Figure~\ref{omphi10}).
\end{proof}

\begin{figure}[h!]
\begin{center}
\includegraphics[width=0.8\textwidth]{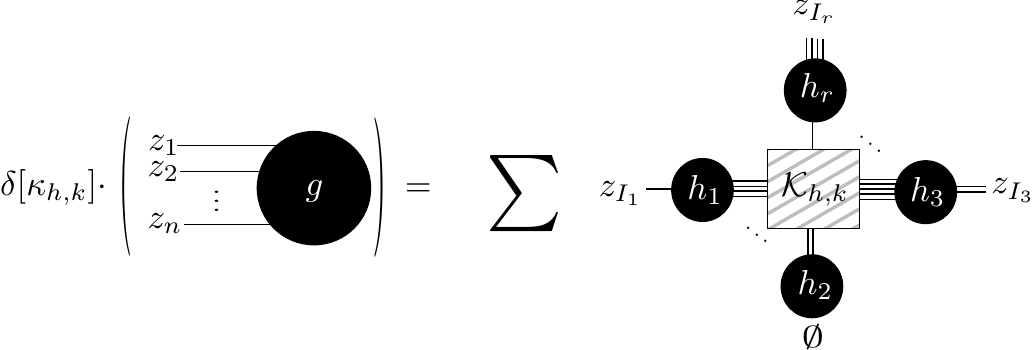}
\caption{\label{omphi10} The variational equation for $(h,k) \neq (0,1)$.}
\end{center}
\end{figure}

\section{Free energies}
\label{Freenerg}
\subsection{Definition of $F_g$}

The graphical formalism developed in \S~\ref{S3} allows also to define numbers $(F_g)_{g \geq 0}$, called the ``free energies", as was done for the usual topological recursion in \cite{EOFg}. Indeed, we can extend the definition of the weight of the graph given by Equation~\eqref{resGamma} to graphs without leaves. At this occasion we also waive the condition that vertices have positive valency. We should then complete the definition by giving weight to $(g,0)$ KdV and $\Phi$-vertices. For the KdV vertices, a natural convention is to take the orbifold Euler characteristic of $\mathcal{M}_{g,0}$ \cite{HarerZagier}
$$
F_{g}|_{{\rm KdV}} := \omega_{g,0}|_{{\rm KdV}} := (-\alpha_i)^{2 - 2g}\,\chi_{{\rm orb}}(\mathcal{M}_{g,0}) = \frac{(-\alpha_i)^{2 - 2g}\,B_{2g}}{2g(2g - 2)}
$$
in terms of the Bernoulli numbers $B_0 = 1$, $B_2 = -1/6$, $B_4 = 1/30$, etc. And for $\Phi$-vertices, we choose arbitrary numbers $\Phi_{g,0}$. Then, we extend Equation~\eqref{Gngi} for $n = 0$:
\begin{definition}
\begin{equation}
\label{Fgdef} F_g:= \omega_{g,0} := \sum_{\Gamma\in\mathcal{G}_{g,0}} \omega_{g,0}^\Gamma = \sum_{\Gamma \in \mathcal{G}_{g,0}^{\Box}} \omega_{g,0}^{\Gamma,\Box}
\end{equation}
\end{definition}

With this definition, $F_g$ always contains the two terms $F_{g}|_{{\rm KdV}} + \Phi_{g,0}$, and this is the only contribution where $0$-valent vertices are involved. The combinatorial relations \eqref{combre} imply that $F_0$ and $F_1$ contain no other terms. The first non trivial graphs appear for $g = 2$ (see Figure~\ref{f2graphl}).  The combinatorial arguments leading to the variational formula of Theorem~\ref{vary} work in the same way for $F_{g}$:
\begin{theorem}
\label{varaF} For any $g \geq 0$
$$
\delta[\kappa_{h,k}]\cdot F_{g} = \sum_{1 \leq i_1,\ldots,i_k \leq s}\Res_{z_1 \rightarrow p_{i_1}} \cdots \Res_{z_k \rightarrow p_{i_k}} \int^{z_1}\cdots\int^{z_1}\mathcal{E}_{g,0;h,k}(z_1,\ldots,z_k)\,\mathcal{K}_{h,k}(z_1,\ldots,z_k)
$$
where $\dd_{1}\cdots\dd_{k}\mathcal{K}_{h,k} = \kappa_{h,k}$ as in \eqref{eq:Kapap}.
\end{theorem}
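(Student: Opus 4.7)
The strategy parallels that of Theorem~\ref{vary}, now applied with $n = 0$. Starting from the diagrammatic expression~\eqref{Fgdef}, namely $F_g = \sum_{\Gamma \in \mathcal{G}_{g,0}} \omega_{g,0}^{\Gamma}$ (equivalently, over the reduced set $\mathcal{G}_{g,0}^{\Box}$), the variation $\delta[\kappa_{h,k}]$ acts locally on the weights carried by $(h,k)$-$\Phi$-vertices, so by the Leibniz rule it produces a sum over pairs $(\Gamma, v)$ where $v$ is either a marked $\Phi_{h,k}$-vertex (when $(h,k) \neq (0,1)$), or, in the reduced picture, a marked renormalized KdV-vertex to which a new edge carrying a $\mathcal{K}_{0,1}$-leg is grafted (when $(h,k) = (0,1)$, via the $\Box$-renormalization mechanism already exploited in~\eqref{eq:var01} and Figure~\ref{omphi9}).

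\emph{Case $(h,k) \neq (0,1)$.} I would remove the marked vertex and cut its $k$ incident edges, obtaining a graph $\Gamma'$ with $k$ new leaves labelled $w_1,\ldots,w_k$. Let $\Gamma'_1,\ldots,\Gamma'_r$ denote its connected components, with genera $h_1,\ldots,h_r$ and with the new leaves distributed into necessarily non-empty subsets $J_1,\ldots,J_r$. Additivity of the Euler characteristic gives the constraint $h + \sum_{i} h_i + k - r = g$, and the absence of external variables forces $I_1 = \cdots = I_r = \emptyset$ in the notation of~\eqref{eq:Edef}. Conversely, any configuration meeting these constraints arises from a unique cut (up to isomorphism). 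Summing the cut contributions and applying Theorem~\ref{thm:omega-graphs} component-wise reconstitutes $\bigotimes_{i} \omega_{h_i,|J_i|}(w_{J_i})$; the resulting weighted sum is precisely $\mathcal{E}_{g,0;h,k}(w_{[1,k]})$, and its pairing with the $\mathcal{K}_{h,k}$-weight of the removed vertex via residues at the ramification points yields the right-hand side.

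\emph{Case $(h,k) = (0,1)$ and the valency-zero vertices.} For $(h,k) = (0,1)$, the renormalization argument of~\eqref{eq:var01} carries over verbatim: appending a $\mathcal{K}_{0,1}$-leg to a renormalized KdV-vertex of a graph in $\mathcal{G}_{g,0}^{\Box}$ produces exactly a graph in $\mathcal{G}_{g,1}^{\Box}$, whose total weight is $\omega_{g,1}(w) = \mathcal{E}_{g,0;0,1}(w)$ by~\eqref{E01}. The valency-zero KdV contribution $F_g|_{{\rm KdV}}$ requires separate treatment since it is not accessible by a cut argument: its dependence on $\kappa_{0,1}$ enters only through the coefficient $\alpha_i$, and the KdV dilaton equation used in \S~\ref{var1} reproduces precisely the missing $\omega_{g,1}|_{{\rm KdV}}$ contribution to $\mathcal{E}_{g,0;0,1}$. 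The $\Phi_{g,0}$ integration constants are $\kappa_{h,k}$-independent for any $k \geq 1$, so they contribute nothing.

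\emph{Main obstacle.} The chief difficulty is the bookkeeping of automorphism factors: $1/|\mathrm{Aut}(\Gamma)|$ has to combine with the choice of marked vertex and with the symmetries among isomorphic components $\Gamma'_i$ to yield $\prod_{i} 1/|\mathrm{Aut}(\Gamma'_i)|$ together with the factor $1/{\rm s}_{J,I,\mathbf{h}}$ appearing in~\eqref{eq:Edef}. Compared with Theorem~\ref{vary}, the absence of external $z$-variables actually simplifies matters since the $I$-partition degenerates; the core combinatorial identity is otherwise identical to its $n \geq 1$ counterpart and was already implicit in the proof of Theorem~\ref{vary}.
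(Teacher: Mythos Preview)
Your approach is correct and essentially identical to the paper's: the paper merely asserts that ``the combinatorial arguments leading to the variational formula of Theorem~\ref{vary} work in the same way for $F_g$'', and you have spelled out precisely that argument---the case split $(h,k)=(0,1)$ versus $(h,k)\neq(0,1)$, the cut-and-reassemble construction of $\mathcal{E}_{g,0;h,k}$, and the automorphism bookkeeping. One small streamlining: for $(h,k)=(0,1)$ you need not invoke \S~\ref{var1} separately for the isolated $(g,0)$-KdV vertex; in the $\Box$-picture its renormalized weight already absorbs the $\Phi_{0,1}$-legs, so the mechanism of~\eqref{eq:var01} and Figure~\ref{omphi9} applies uniformly and produces the $\omega_{g,1}|_{\rm KdV}^{\Box}$ piece of $\omega_{g,1}$ directly, without appeal to the dilaton equation.
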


To summarize, to any solution $(\omega_{g,n})_{g,n}$ of abstract loop equations, and any sequence of integration constants $\Phi_{g,0} \in \mathbb{C}$, we have defined numbers $\omega_{g,0}:=F_{g}$ so that the variational formula of Theorem~\ref{vary} is valid for any $g,n$.

\begin{figure}[h!]
\includegraphics[width=0.9\textwidth]{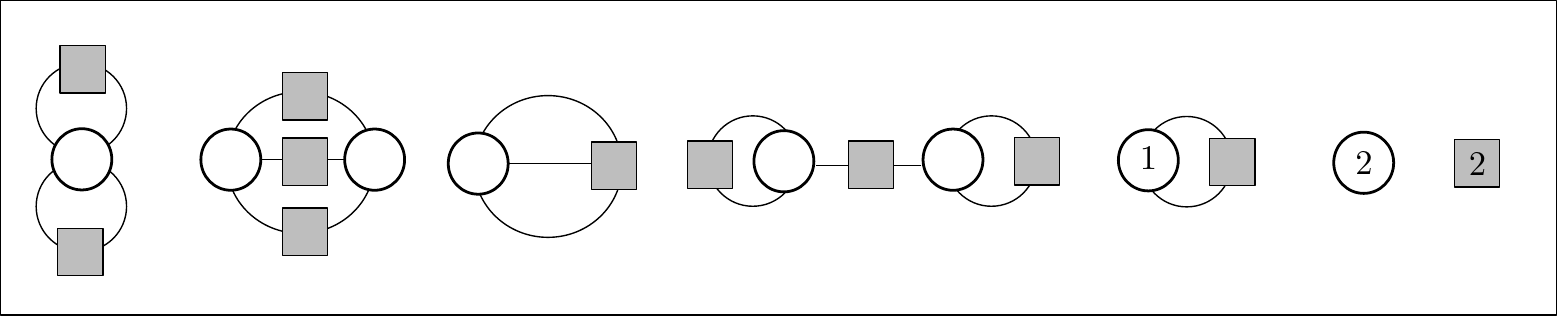}
\caption{\label{f2graphl} The graphs in $\mathcal{G}_{g = 2,0}^{\Box}$.}
\end{figure}

\subsection{Expression in terms of graphs with leaves}

We are going to prove another representation of the $F_g$'s
\begin{theorem}
\label{popore} For $g \geq 2$:
\bea
F_{g} & = & \!\!\!\Phi_{g,0} + F_{g}|_{{\rm KdV}} \nonumber \\
& & \!\!\!+ \frac{1}{2 - 2g}\Bigg\{\sum_{\substack{h \geq 0 \\ k \geq 1}} (2 - 2h - k) \!\!\!\!\!\!\sum_{1 \leq i_1,\ldots,i_k \leq s}\! \Big[\prod_{\ell = 1}^k \Res_{z_{\ell} \rightarrow p_{i_{\ell}}}\Big] \Big(\int^{z_1}\!\!\!\!\cdots\int^{z_k} \!\!\!\phi_{h,k}\Big) \mathcal{E}_{g,0;h,k}(z_1,\ldots,z_k)\Bigg\} \nonumber
\eea
\end{theorem}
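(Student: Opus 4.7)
The plan is to derive the formula by combining the variational formula (Theorem~\ref{varaF}) with a purely combinatorial Euler characteristic identity for the graphs in $\mathcal{G}_{g,0}$.

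First, I would record the identity $\sum_{v \in V(\Gamma)} (2 - 2h(v) - d(v)) = 2 - 2g$, valid for every $\Gamma \in \mathcal{G}_{g,0}$ (including the two single-vertex graphs $(g,0)$-KdV and $(g,0)$-$\Phi$), as a direct consequence of the standard counts $\sum_v d(v) = 2E(\Gamma)$, $E(\Gamma) - V(\Gamma) + 1 = b_1(\Gamma)$, $b_1(\Gamma) + \sum_v h(v) = g$, together with the absence of leaves. Multiplying by $\omega_{g,0}^\Gamma$ and summing over $\Gamma$ gives
\[
(2-2g)\,F_g \;=\; \sum_\Gamma \sum_{v\in V(\Gamma)} \bigl(2-2h(v)-d(v)\bigr)\,\omega_{g,0}^\Gamma,
\]
which I would split according to whether the distinguished vertex $v$ is of $\Phi$-type or of KdV-type.

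The $\Phi$-vertex contribution is the cleanly variational one: swapping the order of summation, the quantity $\sum_\Gamma N_{h,k}(\Gamma)\,\omega_{g,0}^\Gamma$, where $N_{h,k}(\Gamma)$ counts the $\Phi_{h,k}$-vertices of $\Gamma$, equals the variational derivative $\delta[\phi_{h,k}] F_g$ by the same graph-level mechanism used in the proof of Theorem~\ref{vary} (distinguishing a $\Phi_{h,k}$-vertex and summing over positions is precisely the action of $\delta[\phi_{h,k}]$). Theorem~\ref{varaF} then rewrites $\delta[\phi_{h,k}] F_g$ as the residue expression appearing in the statement. Isolating the single-vertex $(g,0)$-$\Phi$ graph, which contributes $(2-2g)\Phi_{g,0}$, the $\Phi$-sum reads
\[
(2-2g)\Phi_{g,0} \;+\; \sum_{h\ge 0,\,k\ge 1}(2-2h-k)\,\delta[\phi_{h,k}] F_g.
\]

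The KdV-vertex contribution is the delicate step and is handled via the dilaton equation for the Kontsevich-Witten intersection indices on $\overline{\mathcal{M}}_{h,d+1}$, namely
\[
(2-2h-d)\,\omega_{h,d}|_{{\rm KdV}}(z_{[d]}) \;=\; \sum_{i=1}^{s}\Res_{w\to p_i}\frac{\alpha_i\,\zeta_{i,w}^{3}}{3}\,\omega_{h,d+1}|_{{\rm KdV}}(w,z_{[d]}),
\]
which follows from Equation~\eqref{unub} combined with the homogeneity of the KdV correlators in $\alpha_i$. Applied termwise, each KdV-vertex of type $(h,d)$ in a multi-vertex graph is converted into one of type $(h,d+1)$ linked by a new edge to a virtual vertex of weight $\alpha_i \zeta_w^3/3 = \int^w_{p_i} \omega_{0,1}|_{{\rm KdV}}$. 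Using the decomposition $\int^w \omega_{0,1}|_{{\rm KdV}} = \int^w \phi_{0,1} - \Phi_{0,1}(w)$, the virtual vertex splits into an $\int^w\phi_{0,1}$-insertion, which combines with the $(h,k)=(0,1)$ term of the variational sum above (via $\mathcal{E}_{g,0;0,1}(w) = \omega_{g,1}(w)$, cf.~\eqref{E01}), and a $-\Phi_{0,1}(w)$-insertion, whose contribution cancels the $\Phi_{0,1}$-vertex piece of the $\Phi$-sum on the other side. What survives uncancelled is only the single-vertex $(g,0)$-KdV graph, producing $(2-2g)\,F_g|_{{\rm KdV}}$. Assembling the three pieces and dividing by $2-2g\ne 0$ --- which is where the hypothesis $g\ge 2$ enters --- yields the theorem. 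The main obstacle is precisely this KdV-bookkeeping step: verifying that, after the dilaton identity and the decomposition of $\omega_{0,1}|_{{\rm KdV}}$, the two families of enlarged graphs reassemble correctly against the $(0,1)$-piece of the $\Phi$-sum, leaving only the single-vertex $(g,0)$-KdV graph as an isolated residue.
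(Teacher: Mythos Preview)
Your overall strategy is the same as the paper's: multiply the Euler-characteristic identity $\sum_{v}(2-2h(v)-d(v))=2-2g$ by the graph weights, split by vertex type, identify the $\Phi$-contributions with the variational derivatives of Theorem~\ref{varaF}, and convert the KdV-contribution into the $(0,1)$-term via the dilaton equation. The paper packages the last step as the renormalised dilaton identity for $\omega|_{\rm KdV}^{\Box}$ (Lemma~\ref{22gn}), working in $\mathcal{G}_{g,0}^{\Box}$ where there are no $(0,1)$-$\Phi$-vertices at all; the KdV side then produces $\sum_i \Res_{z\to p_i}\big(\int^{z}\phi_{0,1}\big)\omega_{g,1}(z)$ directly, with no cancellation needed.

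Your version unpacks this into the unreduced picture, which is fine in principle, but one statement is not right and muddles the $(0,1)$-bookkeeping. You write that $\sum_{\Gamma} N_{h,k}(\Gamma)\,\omega_{g,0}^{\Gamma}=\delta[\phi_{h,k}]F_g$ ``by the same graph-level mechanism'' for every $(h,k)$; this fails for $(h,k)=(0,1)$. The flow $\phi_{0,1}\mapsto\phi_{0,1}+t\,\phi_{0,1}$ also rescales $\alpha_i$ and hence acts on all KdV-vertices, so $\delta[\phi_{0,1}]F_g$ is not the $\Phi_{0,1}$-vertex count. The correct identification is
\[
\sum_{\Gamma} N_{0,1}(\Gamma)\,\omega_{g,0}^{\Gamma}
=\sum_{i}\Res_{w\to p_i}\Phi_{0,1}(w)\,\omega_{g,1}(w),
\]
i.e.\ the variation by the \emph{shifted} data $\phi_{0,1}-\alpha_i\zeta_i^{2}\dd\zeta_i$, which is $O(\zeta_i^{4}\dd\zeta_i)$ and therefore governed by the ``general'' branch of Theorem~\ref{vary}. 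With this correction your cancellation against the $-\Phi_{0,1}(w)$-insertion from the KdV side goes through exactly, and the $\int^{w}\phi_{0,1}$-insertion \emph{is} the $(0,1)$-term of the final formula (it does not ``combine with'' a pre-existing $(0,1)$-term of the $\Phi$-sum, since once the confusion above is fixed there is no such term). The single-vertex $(g,0)$-KdV graph indeed has to be set aside, because the pure dilaton $(2-2g)\,F_g|_{\rm KdV}=\sum_i\Res(\alpha_i\zeta^3/3)\,\omega_{g,1}|_{\rm KdV}$ fails for $g\geq 2$ (the right side vanishes by dimension), which is exactly why this graph contributes the isolated summand $(2-2g)\,F_g|_{\rm KdV}$.
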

The expressions for $\mathcal{E}$'s are given in Equation~\eqref{eq:Edef}. The proof shows that the sum truncates at $h \leq g$ and involves a finite number of terms. In particular we have $\mathcal{E}_{g,1;0,1}(z) = \omega_{g,1}(z)$. For the usual topological recursion, $\phi_{h,k} \neq 0$ for $(h,k) \neq (0,1)$ and $(0,2)$, and we retrieve the definition of $F_g$'s given in \cite{EOFg}:
\beq
\label{usualin} F_{g} = \Phi_{g,0} + \frac{1}{2 - 2g} \sum_{i = 1}^s \Res_{z \rightarrow p_i} \Big(\int^{z} \phi_{0,1}\Big) \omega_{g,1}(z)
\eeq

The starting point for the proof is to establish a dilaton-type equation for renormalized KdV vertices:

\begin{lemma}
\label{22gn}For any $g \geq 0$ and $n \geq 1$:
$$
(2 - 2g - n)\omega_{g,n}|_{{\rm KdV}}^{\Box}(z_1,\ldots,z_n) = \sum_{i = 1}^{s} \Res_{z \rightarrow p_i} \Big(\int^{z} \phi_{0,1}\Big)\omega_{g,1}|_{{\rm KdV}}^{\Box}(z_1,\ldots,z_n,z)
$$
\end{lemma}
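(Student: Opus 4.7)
The plan is to expand both sides in the local coordinate $\zeta_i$ at a ramification point, and to recognize the identity as a repackaging of the classical dilaton equation $\int_{\oM_{g,n+1}}\psi_{n+1}\prod_\ell \psi_\ell^{d_\ell} = (2g-2+n)\int_{\oM_{g,n}}\prod_\ell \psi_\ell^{d_\ell}$ on the moduli spaces of curves. I will assume that the argument of $\omega_{g,1}|^{\Box}_{{\rm KdV}}$ in the statement is a typo for $\omega_{g,n+1}|^{\Box}_{{\rm KdV}}(z_1,\ldots,z_n,z)$, and I will concentrate on the stable range $2g-2+n>0$; the unstable cases $(0,1)$ and $(0,2)$ can be handled by direct residue computation once I note that $\omega_{0,2}|^{\Box}_{{\rm KdV}}$ and $\omega_{0,3}|^{\Box}_{{\rm KdV}}$ coincide with their non-renormalized versions (Remark~\ref{nonono}).

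First I would observe that, by the vanishing property stated below \eqref{eq:nontrivial01}, both sides are nonzero only when all $z_1,\ldots,z_n$ lie in the same open set $U_i$, and only the diagonal $i$-term contributes to the residue sum. After fixing $i$, I would expand $\omega_{g,n}|^{\Box}_{{\rm KdV}} = \sum_{m\geq 0} A_m$ using~\eqref{eq:nontrivial01}, where the summand $A_m$ carries the prefactor $(-\alpha_i)^{2-2g-n-m}/m!$ and encodes $m$ ``hidden'' $\Phi_{0,1}$-attachments via $\psi_{n+j}^{a_j}$-insertions with $a_j\geq 2$.

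Next I would expand $\int_{p_i}^z \phi_{0,1} = \alpha_i\,\zeta_i^3/3 + \sum_{b\geq 2}\phi_{0,1}\left[\begin{smallmatrix} i \\ 2b\end{smallmatrix}\right]\zeta_i^{2b+1}/(2b+1)$ and insert the analogous expansion of $\omega_{g,n+1}|^{\Box}_{{\rm KdV}}(z_1,\ldots,z_n,z)$, with the extra leaf $z$ carrying index $d_{n+1}$. The residue at $p_i$ picks out exactly two ranges of $d_{n+1}$: the value $d_{n+1}=1$ pairs with the $\alpha_i\zeta_i^3/3$ head, producing a genuine $\psi_{n+1}$-insertion with scalar coefficient $\alpha_i$; the values $d_{n+1}=b\geq 2$ pair with the tail of $\int^z\phi_{0,1}$, producing a $\psi_{n+1}^{b}$-insertion with coefficient $\phi_{0,1}\left[\begin{smallmatrix} i \\ 2b\end{smallmatrix}\right](2b-1)!!$. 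Applying the dilaton equation to the first contribution converts the $\psi_{n+1}$-insertion into a factor $(2g-2+n+m)$, and the combined scalar adjustment $(-\alpha_i)^{1-2g-n-m}\cdot \alpha_i = -(-\alpha_i)^{2-2g-n-m}$ yields $-\sum_m (2g-2+n+m)\,A_m$.

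The second contribution is the combinatorial heart of the argument: the $\psi_{n+1}^b$-insertion plus the coefficient $\phi_{0,1}\left[\begin{smallmatrix} i \\ 2b\end{smallmatrix}\right](2b-1)!!$ are precisely the data of an additional $\Phi_{0,1}$-attachment in \eqref{eq:nontrivial01}. Relabelling with $m'=m+1$ and $b=a_{m'}$, the sum over $b\geq 2$ reconstructs $A_{m'}$ up to the overcount by the $m'$ choices of which attachment is the ``new'' one, and the mismatch $1/(m'-1)!$ vs.\ $1/m'!$ produces precisely a factor $m'$; hence this contribution equals $+\sum_{m}m\,A_m$. Summing the two contributions gives $\sum_m\bigl[-(2g-2+n+m)+m\bigr]A_m = (2-2g-n)\,\omega_{g,n}|^{\Box}_{{\rm KdV}}$, as required.

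The main obstacle is this last combinatorial matching: one must check that the numerical factor $(2b-1)!!$ produced by the residue pairing is exactly the weight assigned to a $\Phi_{0,1}$-attachment in \eqref{eq:nontrivial01}, and that the symmetrization over the $m'$ attached $\Phi_{0,1}$-vertices turns $1/m'!$ into $1/(m'-1)!$ so that the cancellation $(2g-2+n+m)-m=2g-2+n$ is uniform in $m$. Everything else is bookkeeping plus the single use of the dilaton equation in intersection theory.
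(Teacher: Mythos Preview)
Your proposal is correct and follows essentially the same route as the paper: split $\int^z\phi_{0,1}$ into its leading term $\alpha_i\zeta_i^3/3$ and the tail $\Phi_{0,1}$, apply the dilaton equation to the leading piece, and then check that the tail contribution promotes the $m$-th term of the renormalized expansion to the $(m+1)$-st with the right $1/m!$ versus $1/(m+1)!$ combinatorics. The paper organizes the bookkeeping slightly differently (four terms with an explicit cancellation of three of them) but the content is identical, including the handling of the unstable cases via Remark~\ref{nonono}.
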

\begin{proof} This is Theorem 4.7 in \cite{EOFg}, but to be self-contained, we give a proof here. For $(g,n) = (0,2)$ and $(0,1)$, this follows directly from the expression \eqref{eq:03KdV} of $\omega_{0,3}|^{\Box}_{{\rm KdV}} = \omega_{0,3}|_{{\rm KdV}}$. Now we assume $2g - 2 + n > 0$. We also assume all variables $z_1,\ldots,z_n$ to be in the same $U_i$, otherwise the result is trivial since both sides are $0$. In the pure KdV case, $\int^{z}_{p_i} \phi_{0,1} = \alpha_i \zeta_i^3/3$, and the equation is a consequence of the dilaton equation -- see \eqref{dilao} in Appendix. In general, we have by definition $\int^{z}_{p_i} \phi_{0,1}(z) = \alpha_i\zeta_i^3/3 + \Phi_{0,1}(z)$ and
$$
\omega_{g,n + 1}|_{{\rm KdV}}^{\Box}(z,I) = \omega_{g,n}|_{{\rm KdV}}(z,I) + \sum_{m \geq 1} \frac{1}{m!} \Big[\prod_{\ell = 1}^m \Res_{z_{\ell} \rightarrow p_i} \Phi_{0,1}(z_{\ell})\Big]\omega_{g,n + 1 + m}|_{{\rm KdV}}(I,z,z_1,\ldots,z_m)
$$
Then:
\bea
& & \Res_{z \rightarrow p_i} \Big(\int^{z}\phi_{0,1}\Big)\omega_{g,n + 1}|_{{\rm KdV}}^{\Box}(z,I) \nonumber \\
& & = (2 - 2g - n)\omega_{g,n}|_{{\rm KdV}}(I) + \Res_{z \rightarrow p_i} \Phi_{0,1}(z) \omega_{g,n + 1}|_{{\rm KdV}}(z,I) \nonumber \\
& & + \sum_{m \geq 1} \frac{(2 - 2g - n - m)}{m!} \prod_{\ell = 1}^m \Big[\Res_{z_{\ell} \rightarrow p_i} \Phi_{0,1}(z_{\ell})\Big]\,\omega_{g,n + m}(I,z_1,\ldots,z_{m}) \nonumber \\
& & + \sum_{m \geq 1} \frac{1}{m!} \Big[\prod_{\ell = 1}^{m + 1} \Res_{z_{\ell} \rightarrow p_i} \Phi_{0,1}(z_{\ell})\Big]\,\omega_{g,n + m + 1}|_{{\rm KdV}}(z_{m + 1},z_1,\ldots,z_{m}) \nonumber
\eea
The second, third term coupled to $-m/m!$ and fourth term cancel each other, and we find:
\bea
&& \Res_{z \rightarrow p_i} \Big(\int^{z}\phi_{0,1}\Big)\omega_{g,n + 1}|_{{\rm KdV}}^{\Box}(z,I) \nonumber \\
&& = (2 - 2g - n)\bigg\{\omega_{g,n}|_{{\rm KdV}}(I) + \sum_{m \geq 1} \frac{1}{m!} \Big[\prod_{\ell = 1}^m \Res_{z_{\ell} \rightarrow p_i} \Phi_{0,1}(z_{\ell}) \Big]\omega_{g,n + m}|_{{\rm KdV}}(I,z_1,\ldots,z_{m})\bigg\} \nonumber
\eea
which proves the result.
\end{proof}

\begin{notation}
If $\Gamma \in \mathcal{G}_{g,0}$, we denote $\chi_{{\rm KdV}}(\Gamma)$ the sum of $\sum_{a} (2 - 2h_{a} - k_{a})$ running over all KdV vertices -- $h_a$ and $k_a$ denote their genus and valency. $\chi_{\Phi}(\Gamma)$ is defined similarly taking into account only $\Phi$-vertices with type $(h_a,k_a) \neq (0,1)$. By consistency:
$$
\forall \Gamma \in \mathcal{G}_{g,0},\qquad 2g - 2 + \chi_{\rm KdV}(\Gamma) + \chi_{\Phi}(\Gamma) = 0
$$
\end{notation}

To continue the proof of Theorem~\ref{popore}, we observe that applying the operator:
$$\widehat{E}_{{\rm KdV}} := \sum_{i = 1}^s \alpha_i\partial_{\alpha_i}
$$
amounts to marking a KdV-vertex and count it with an extra weight $(2 - 2h_a - k_a)$ if it has type $(h_a,k_a)$.  Therefore, it can be computed by summing of graphs $\overline{\Gamma}$ with one leaf, and replacing in $\omega_{g,n}^{\overline{\Gamma}}$ the local weight $\omega_{h_a,k_a + 1}|_{{\rm KdV}}(z,I)$ attached to the marked KdV-vertex by $(2 - 2h_{a} - k_{a})\omega_{h_a,k_a}|_{{\rm KdV}}(I)$. After Lemma~\ref{22gn}, this operation can be written in terms of residues. So:
\beq
\label{EKdVvar0}\widehat{E}_{{\rm KdV}}\cdot F_{g} = \sum_{i = 1}^s \Res_{z \rightarrow p_i}\Big(\int^{z}\phi_{0,1}\Big)\omega_{g,1}(z)
\eeq
Similarly, the operator:
\beq
\label{EPhivar0} \widehat{E}_{\Phi} = \sum_{\substack{h \geq 0 \\ k \geq 1}} (2 - 2h - k)\,\delta[\phi_{h,k}]
\eeq
amounts to marking a $\Phi_{h,k}$-vertex and counting it with an extra weight $(2 - 2h - k)$. By the notation $\delta[\phi_{h,k}]$, we mean the flow \eqref{flow51} obtained by choosing $\kappa_{h,k} = \phi_{h,k}$. It is computed by Proposition~\ref{varaF}:
\bea
& &\widehat{E}_{\Phi}\cdot F_{g} = \!\!\!\!\!\sum_{\substack{h \geq 0 \\ k \geq 1 \\ 2 - 2h - k > 0}} \!\!\!\!(2 - 2h - k)\!\!\! \sum_{1 \leq i_1,\ldots,i_k \leq s} \Big[\prod_{\ell = 1}^k \Res_{z_{\ell} \rightarrow p_{i_{\ell}}}\Big] \Big(\int^{z_1}\!\!\!\!\cdots\int^{z_k} \phi_{h,k}\Big)\mathcal{E}_{g,0;h,k}(z_1,\ldots,z_k) \nonumber \\
 \label{phisum}
\eea
The terms $(h,k) = (0,1)$ are excluded from the sum, and $(0,2)$ does not appear because of the Euler characteristic in prefactor. By the  argument already used in Lemma~\ref{aru}, we know that for a given $g$, the genus and valency of the $\Phi$ vertices involved in a graph of $\mathcal{G}_{g,0}$ is bounded, the sum in \eqref{phisum} has only finitely many terms.

On the other hand, since all KdV vertices of a given graph $\Gamma \in \mathcal{G}_{g,0}$ are marked in the expression $\widehat{E}_{{\rm KdV}}\cdot F_{g}$, we also have:
\beq
\label{EKdVvar}\widehat{E}_{{\rm KdV}}\cdot F_{g} = \sum_{\Gamma \in \mathcal{G}_{g,0}'} \chi_{{\rm KdV}}(\Gamma)\,\omega_{g,0}^{\Gamma} = \sum_{\Gamma \in \mathcal{G}_{g,0}'} \big(2 - 2g - \chi_{\Phi}(\Gamma)\big)\,\omega_{g,0}^{\Gamma}
\eeq
where $\mathcal{G}_{g,0}'$ is $\mathcal{G}_{g,0}$ minus the graph with a $0$-valent $\Phi$-vertex and no KdV vertices. Likewise, since $\widehat{E}_{\Phi}$ is just marking all $\Phi$-vertices:
\beq
\label{Ephivar} \widehat{E}_{\Phi}\cdot F_{g} = \sum_{\Gamma \in \mathcal{G}_{g,0}} \chi_{{\rm \Phi}}(\Gamma)\,\omega_{g,0}^{\Gamma}
\eeq
Putting together \eqref{EKdVvar} and \eqref{Ephivar}:
$$
\big(\widehat{E}_{{\rm KdV}} + \widehat{E}_{{\rm KdV}}\big)\cdot F_{g} = (2 - 2g)F_{g}
$$
Since we have computed the left-hand side independently in \eqref{EKdVvar0} and \eqref{EPhivar0}, we obtain a formula for $F_g$ if $g \neq 2$ (the case $g = 0$ is trivial). To get the claim of Theorem~\ref{popore}, we remark that \eqref{EKdVvar0} is nothing but the $(0,1)$ term missing in \eqref{EPhivar0}. \hfill $\Box$

\begin{remark} For the usual topological recursion, the proof stops at \eqref{EKdVvar}: only $(0,1)$- and $(0,2)$-$\Phi$ vertices can occur, so $\chi_{\Phi}(\Gamma)$ is always $0$, and by comparison with \eqref{EKdVvar0} we find \eqref{usualin}.
\end{remark}



\section{Multi-trace matrix model}
\label{motiv}

\subsection{Definition}

Consider the partition function of the one hermitian matrix model
\beq
\label{eq0}Z_{\hbar}(\mathbf{t}) = \int_{\mathcal{H}_N} \dd M\,\exp\Big(\sum_{k \geq 1} \frac{1}{k!}\,\mathrm{Tr}\,T_k(M^{(1)},\ldots,M^{(k)})\Big)
\eeq
$\mathcal{H}_{N}$ is the space of hermitian matrices of size $N$, and $\dd M$ its canonical Lebesgue measure:
$$
\dd M = \prod_{i = 1}^N \dd M_{i,i} \prod_{1 \leq i < j \leq N} \dd\mathrm{Re}\,M_{i,j}\cdot\dd\mathrm{Im}\,M_{i,j}
$$
The $k$-th linear potential is given by:
$$
T_k(x_1,\ldots,x_k) = \sum_{p_1,\ldots,p_k \geq 1} \frac{t_{p_1,\ldots,p_k}}{p_1\cdots p_k}\,x_1^{p_1}\cdots x_k^{p_k}
$$
$M^{(i)}$ is a $k$-th tensor product consisting of identity matrices, except for the $i$-th factor which is a matrix $M$. In other words, in terms of the eigenvalues $\mu_1,\ldots,\mu_N$ of $M$:
$$
\mathrm{Tr}\,T_k(M^{(1)},\ldots,M^{(k)}) = \sum_{1 \leq i_1,\ldots,i_k \leq N} T_k(\mu_{i_1},\ldots,\mu_{i_k})
$$

Let us introduce the disconnected correlators:
\bea
\overline{W}_n(x_1,\ldots,x_n) & = & \Big\langle\prod_{i = 1}^n \Tr\,\frac{1}{x_i - M} \Big\rangle \nonumber \\
& = & \frac{1}{Z_{\hbar}(\mathbf{t})} \sum_{\ell_1,\ldots,\ell_n \geq 0} \hbar^n\,\ell_1\cdots\ell_n \frac{\partial^n Z_{\hbar}(\mathbf{t})}{\partial t_{\ell_1}\cdots \partial t_{\ell_n}}\,\prod_{i = 1}^n x_i^{-(\ell_i + 1)} \nonumber
\eea
Again, we replace by convention any factor $\ell_i = 0$ by $1$ in this formula. The connected correlators are defined as:
\bea
W_n(x_1,\ldots,x_n) & = & {\rm Cum}_{n}\Big(\mathrm{Tr}\,\frac{1}{x_1 - M},\ldots,{\rm Tr}\,\frac{1}{x_n - M}\Big) \nonumber \\
& = & \sum_{\ell_1,\ldots,\ell_n \geq 0} \hbar^{n}\,\ell_1\ldots \ell_n \frac{\partial^n \ln Z_{\hbar}(\mathbf{t})}{\partial t_{\ell_1}\cdots \partial t_{\ell_n}}\,\prod_{i = 1}^n x_i^{-(\ell_i + 1)} \nonumber
\eea
${\rm Cum}_{n}(O_1,\ldots,O_n)$ is the cumulant expectation value of the observables $O_1,\ldots,O_n$.

\subsection{$\hbar$ expansions} 
\label{fomoim}
We keep the product $N\hbar = u$ fixed. The formula \eqref{eq0} can be considered either:
\begin{itemize}
\item[$(i)$] as a convergent matrix integral, producing a function $Z_{\hbar}(u,\mathbf{t})$. This puts restrictions on the choice of $T_k$ so that the integral converges for all sizes $N$.
\item[$(ii)$] as a formal matrix integral near a convergent data. We set $\mathbf{t} = \mathbf{t}^{\mathrm{ini}} + \bs{\tau}$ where $\mathbf{t}^{\mathrm{ini}}$ is chosen such that the matrix integral for $\mathbf{t}^{\mathrm{ini}}$ is convergent. Then, we consider $Z_{\hbar}(\mathbf{t})/Z_{\hbar}(\mathbf{t}^{{\rm ini}})$ as a formal series in $\bs{\tau}$: we expand the exponential as a power series in $\bs{\tau}$, and exchange the sum with the integral over $\mathcal{H}_{N}$. We obtain a formal series in $\bs{\tau}$ whose coefficients are proportional to moments for the probability measure with $k$-linear potentials $T_{k}^{{\rm ini}}$ on $\mathcal{H}_{N}$.
\end{itemize}
A particular case $(ii)$-G occurs when $t^{\mathrm{ini}}_{2}$ is the only non-zero time, i.e. we expand around a Gaussian measure on $\mathcal{H}_{N}$. In cases $(ii)$, we choose in general $\mathbf{t}^{{\rm ini}}$ independent of $\hbar$, but $\bs{\tau}$ itself could depend on $\hbar$, i.e. we rather introduce a collection $(\bs{\tau}^{(h)})_{h \geq 0}$ of formal variables, such that in total we have an equality of formal series:
$$
T_k = \sum_{h \geq 0} \hbar^{2h - 2 + k}\,T_{h,k}
$$
This allows a combinatorial interpretation of the model in terms of maps (discrete surfaces), where $\hbar$ is coupled to minus their Euler characteristics, and $u$ coupled to the number of vertices, see \S~\ref{revi}.

\begin{definition}
We say that the correlators have an expansion of topological type (TT property) if:
\beq
\label{TTexp}W_n =  \sum_{g \geq 0} \hbar^{2g - 2 + n}\,W_{g,n}
\eeq
\end{definition}

In case $(i)$, under a few extra assumptions, some non-trivial analysis is necessary to study the $\hbar \rightarrow 0$ all-order asymptotic expansion of the partition function. It is proved in \cite{BG11,BGK} in the off-critical, one-cut case, the TT property holds and \eqref{TTexp} is an asymptotic (in general non-convergent) series. In the multi-cut case, the TT property does not hold, for instance one has $W_n \in O(1)$ when $\hbar \rightarrow 0$ for any $n \geq 2$. This case is rather interesting and also related to abstract loop equations, but we defer to future work its study in light of the present article. $(ii)$ is basically a perturbation theory around $Z_{\hbar}(\mathbf{t}^{\mathrm{ini}})$, and the case $(ii)$-G where we perturb around the Gaussian weight is the most commonly studied. As reviewed in \cite{Bstuff}, in the case $(ii)$-G, the correlators $W_{n}$ can a priori defined as elements of $R:= \hbar^{-1}\cdot\mathbb{Q}[\bs{\tau}][[u]][[x_1^{-1},\ldots,x_n^{-1}]][[\hbar]]$, and there exists formal series $W_{g,n} \in \mathbb{Q}[\bs{\tau}][[u]][[x_1^{-1},\ldots,x_n^{-1}]] \subseteq R$ such that \eqref{TTexp} holds as an equality in $R$. It is indeed a fact following from Euler characteristic counting that, to a given order in $\bs{\tau}$, only finitely many powers of $\hbar$ contribute.
\vspace{0.2cm}

\begin{remark} In case $(i)$, the TT property for the partition function should be formulated as follows: there is an expansion of the form $\ln Z_{\hbar}(\mathbf{t}) = C_{\hbar} + \sum_{g \geq 0} \hbar^{2g - 2}\,F_{g}(\mathbf{t})$ where $C_{\hbar}$ is a locally constant function of $\mathbf{t}$ in the domain where the asymptotic expansion holds.
\end{remark}

\subsection{Virasoro constraints}

The first Schwinger-Dyson equation (ignoring the boundary terms) for this model is
\beq
\label{SD1}\Big\langle \Big(\Tr\,\frac{1}{x - M}\Big)^2 + \sum_{k \geq 1} \frac{\hbar^{k - 2}}{(k - 1)!} \mathrm{Tr}\,\frac{\partial_{1} T_k(M^{(1)},\ldots,M^{(k)})}{(k - 1)!\,(x - M^{(1)})}\Big\rangle = 0
\eeq
It can be proved by integration by parts, see e.g. \cite{Bstuff}. In Laurent expansion at $x \rightarrow \infty$, if we collect the terms of order $x^{-(m + 2)}$ for $m \geq -1$, we find that:
$$
\forall m \geq -1,\qquad L_m\cdot Z = 0
$$
with:
\bea
L_m & = & L_m^{(0)} +  \sum_{k \geq 2} \sum_{p_1,p_2,\ldots,p_k \geq 0} \frac{(p_1 + m)\,t_{p_1,\ldots,p_k}}{(k - 2)!}\,\frac{\partial^k}{\partial t_{p_1 + m}\partial t_{p_2}\cdots \partial t_{p_k}} \\
L_m^{(0)} & = & \hbar^2\,\sum_{p = 0}^{m} p(m - p)\frac{\partial^2}{\partial t_{p}\partial t_{m - p}} + \hbar^{-2}\sum_{p_1 \geq 0} (p_1 + m)\,t_{p_1}\,\frac{\partial}{\partial t_{p_1 + m}} 
\eea
By convention, we set $\partial t_{l} = 0$ for $l < 0$. $L_{m}^{(0)}$ are the usual representation Virasoro operators in the context of matrix models, and satisfy the commutation relations:
$$
[L_{m}^{(0)},L_{n}^{(0)}] = (m - n)L_{m + n}^{(0)}
$$
\begin{lemma}
$(L_m)_{m \geq 1}$ also form a representation of the Virasoro commutation relations: $[L_m,L_n] = (m - n)L_{m + n}$.
\end{lemma}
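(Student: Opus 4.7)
The plan is to split
\[
L_m = L_m^{(0)} + \sum_{k \geq 2} C_m^{(k)},\qquad C_m^{(k)} := \frac{1}{(k-2)!}\sum_{p_1,\ldots,p_k \geq 0}(p_1+m)\,t_{p_1,\ldots,p_k}\,\frac{\partial^k}{\partial t_{p_1+m}\partial t_{p_2}\cdots\partial t_{p_k}},
\]
and to expand
\[
[L_m,L_n] = [L_m^{(0)},L_n^{(0)}] + \sum_{k\geq 2}\bigl([L_m^{(0)},C_n^{(k)}] - [L_n^{(0)},C_m^{(k)}]\bigr) + \sum_{k,\ell\geq 2}[C_m^{(k)},C_n^{(\ell)}].
\]
The first block is the classical one-matrix Virasoro identity $[L_m^{(0)},L_n^{(0)}] = (m-n)L_{m+n}^{(0)}$, which I would rederive by decomposing $L_m^{(0)}$ into its second-order part $A_m = \hbar^{2}\sum p(m-p)\partial_p\partial_{m-p}$ and its linear part $B_m = \hbar^{-2}\sum_{p}(p+m)\,t_p\,\partial_{p+m}$ and applying $[\partial/\partial t_p,\,t_q] = \delta_{pq}$ in the standard way.

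The structural observation for the remaining blocks is that for $k \geq 2$ the coupling $t_{p_1,\ldots,p_k}$ is a multi-indexed symbol algebraically independent of every single-indexed $t_q$, while \emph{every} differentiation appearing in $L_m^{(0)}$ or in any $C_m^{(k)}$ is single-indexed. Two consequences follow at once: single-indexed derivations annihilate all multi-indexed times, and all derivations commute. The first point kills the last sum entirely, $[C_m^{(k)},C_n^{(\ell)}]=0$ for every $k,\ell \geq 2$, and it also gives $[A_m,C_n^{(k)}]=0$ since $A_m$ contains no $t$ at all. Hence the middle block collapses to $\sum_{k\geq 2}\bigl([B_m,C_n^{(k)}] - [B_n,C_m^{(k)}]\bigr)$.

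To evaluate this antisymmetric combination I would apply the Leibniz rule for higher derivations,
\[
\Bigl[\,\frac{\partial^N}{\partial t_{a_1}\cdots\partial t_{a_N}},\,t_p\,\Bigr] = \sum_{j=1}^{N}\delta_{a_j,p}\,\frac{\partial^{N-1}}{\partial t_{a_1}\cdots\widehat{\partial t_{a_j}}\cdots\partial t_{a_N}},
\]
with $(a_1,\ldots,a_N) = (q_1+n,q_2,\ldots,q_k)$. This expresses $[B_m,C_n^{(k)}]$ as a ``diagonal'' piece (in which the distinguished derivation $\partial/\partial t_{q_1+n}$ hits $t_p$) plus $k-1$ ``off-diagonal'' pieces indexed by $j\in\{2,\ldots,k\}$ (in which $\partial/\partial t_{q_j}$ hits $t_p$).

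The hard part is to see that the off-diagonal contributions cancel pair by pair in $[B_m,C_n^{(k)}] - [B_n,C_m^{(k)}]$. This rests on the fact that $t_{p_1,\ldots,p_k}$ may be taken symmetric in its $k$ arguments, since only the symmetric invariants $\mathrm{Tr}\,M^{p_1}\cdots\mathrm{Tr}\,M^{p_k}$ couple to it in the action. Relabelling $q_1 \leftrightarrow q_j$ in the $j$-th off-diagonal term of $[B_n,C_m^{(k)}]$ and reordering the mutually commuting derivations transforms it into the $j$-th off-diagonal term of $[B_m,C_n^{(k)}]$, so these cancel. The diagonal contributions combine through the elementary telescoping $(q_1+n)-(q_1+m) = n-m$ and reassemble as $(m-n)\,C_{m+n}^{(k)}$. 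Summing over $k\geq 2$ and adding back the $(m-n)L_{m+n}^{(0)}$ piece yields the claimed identity $[L_m,L_n] = (m-n)L_{m+n}$.
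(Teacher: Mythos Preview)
Your proof is correct and follows essentially the same route as the paper's: split $L_m=L_m^{(0)}+\sum_{k\ge 2}C_m^{(k)}$, use that single-indexed derivations kill multi-indexed times (so only $[B_m,C_n^{(k)}]-[B_n,C_m^{(k)}]$ survives), and then observe that the off-diagonal contributions are symmetric under $m\leftrightarrow n$ by the symmetry of $t_{p_1,\ldots,p_k}$ while the diagonal ones telescope to $(m-n)C_{m+n}^{(k)}$. The only difference is cosmetic: you make explicit the vanishing of $[A_m,C_n^{(k)}]$ and $[C_m^{(k)},C_n^{(\ell)}]$, which the paper passes over silently.
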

\begin{proof}
Let us compute the full commutation relations:
\bea
& & [L_m,L_n] \nonumber \\
& = & (m - n)L_{m + n}^{(0)} \nonumber \\
& & + \hbar^{-2}\sum_{p_0 \geq 0} \sum_{k \geq 2} \sum_{p_1,\ldots,p_k \geq 0} \frac{(p_0 + m)(p_1 + n)\,t_{p_1,\ldots,t_{p_k}}}{(k - 2)!} \Bigg[t_{p_0}\,\frac{\partial}{\partial t_{p_0 + m}},\frac{\partial^k}{\partial t_{p_1 + n}\cdots \partial t_{p_k}}\Bigg] \nonumber \\
& & - (m \leftrightarrow n) \nonumber \\
& = & (m - n)L_{m + n}^{(0)} \nonumber \\
& & + \Bigg(- \sum_{k \geq 2} \sum_{p_1,\ldots,p_k \geq 0} \frac{(p_1 + m + n)(p_1 + n)\,t_{p_1,\ldots,p_k}}{(k - 1)!}\,\frac{\partial^{k}}{\partial t_{p_1 + m + n}\partial t_{p_2}\cdots \partial t_{p_k}} \nonumber \\
& & - \sum_{k \geq 2} \sum_{p_1,\ldots,p_k \geq 0} \frac{(p_1 + m)(p_2 + n)\,t_{p_1,\ldots,p_k}}{(k - 2)!}\,\frac{\partial^k}{\partial t_{p_1 + m} \partial t_{p_2 + n} \partial t_{p_3} \cdots \partial t_{p_k}}\Bigg) \nonumber \\
& & - (m \leftrightarrow n) \nonumber
\eea
We observe that the third line is symmetric in $m \leftrightarrow n$, hence does not contribute to the commutator. And the second line can be combined with the first to find $[L_m,L_n] = (m - n)L_{m + n}$.
\end{proof}

Schwinger-Dyson equations involving $n \geq 2$ variables can be derived from \eqref{SD1} by infinitesimal deformations of the potential:
\bea
& & {\rm Cum}_{n}\bigg[\Big\{\Big(\mathrm{Tr}\,\frac{1}{x - M}\Big)^2 + \sum_{k \geq 1} \hbar^{k - 2} \mathrm{Tr}\,\frac{\partial_{1} T_k(M^{(1)},\ldots,M^{(k)})}{(k - 1)!\,(x - M^{(1)})}\Big\},O_{x_2}(M),\ldots,O_{x_n}(M)\bigg]\nonumber \\ 
& & + \sum_{2 \leq i \leq n} {\rm Cum}_{n - 1}\bigg[\Big\{\mathrm{Tr}\,\frac{1}{(x - M)(x_i - M)^2}\Big\}, O_{x_2}(M),\ldots,\hat{i},\ldots, O_{x_2}(M)\bigg] = 0 \nonumber
\eea
where $O_{x_i}(M) = \mathrm{Tr}\,\frac{1}{x_i - M}$, and $\hat{i}$ means omitting the factor with label $i$.

\subsection{Spectral curve}

In case $(i)$ and $(ii)$-G, the assumptions considered respectively in \cite{BGK} and \cite{Bstuff} imply that:
\begin{itemize}
\item[$\bullet$] $W_{g,n}(x_1,\ldots,x_n)$ exists as a holomorphic function on $(\mathbb{C}\setminus\Gamma)^n$ for some segment $\Gamma = [a,b] \subseteq \mathbb{R}$ determined by the model
\item[$\bullet$] $T_{h,k}(x_1,\ldots,x_n)$ exists as a holomorphic function in an open neighborhood $\underline{V}$ of $\Gamma^{k}$.
\end{itemize}
$\mathbb{C}\setminus\Gamma$ can be mapped conformally to the exterior of the unit disk with:
$$
x(z) = \frac{a + b}{2} + \frac{a - b}{4}\Big(z + \frac{1}{z}\Big)
$$
Then, it is known that $y(z) := W_{0,1}(x(z))$ can be analytically continued in some neighborhood of $\{|z| = 1\}$ inside the unit disk. We can take as spectral curve of the model the domain $\underline{\Sigma}$ including the point at $\infty$ (Figure~\ref{Fspcurv}) and as morphism of Riemann surfaces $x\,:\,\Sigma \rightarrow \widehat{\mathbb{C}}$. The involution is $\sigma(z) = 1/z$. There are two simple ramification points $z = \pm 1$, corresponding to the simple branchpoints $x = a$ and $b$. This spectral curve has the topology of a disk with $2$ marked points.

\begin{figure}[h!]
\includegraphics[width=0.55\textwidth]{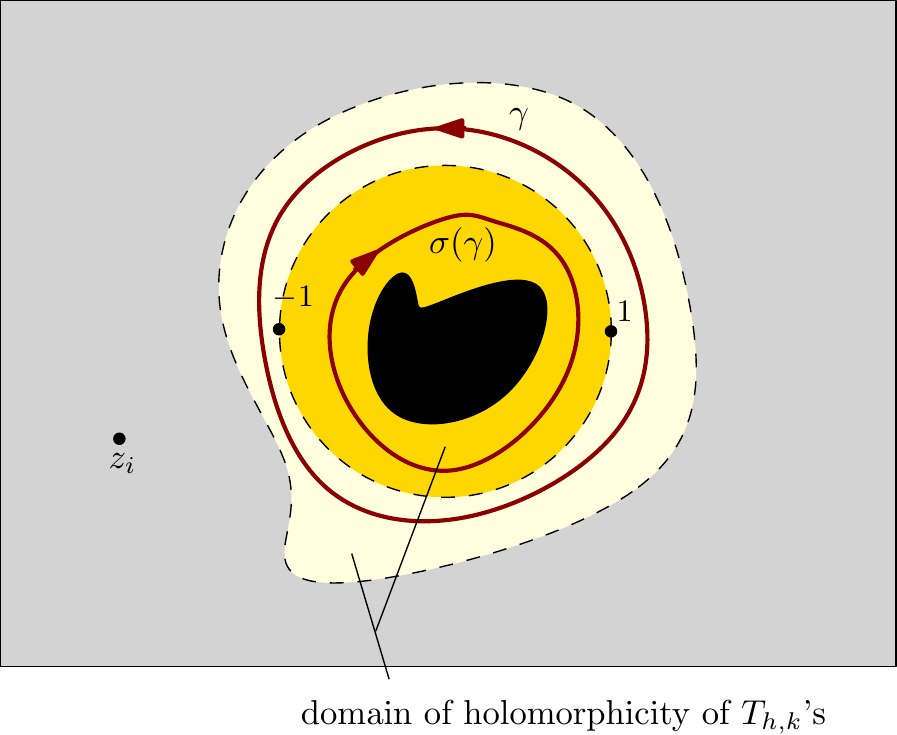}
\caption{\label{Fspcurv} Spectral curve of the matrix model, uniformized by the $z$-coordinate. $\underline{\Sigma}$ is the Riemann sphere minus the black region. The involution $\sigma(z) = 1/z$ exchanges the light-gold and gold colored domains.}
\end{figure}

\subsection{Blobbed topological recursion}
\label{btrstuff}
Let us summarize the results of \cite{Bstuff}. The differential forms:
\beq
\label{woewo}\omega_{g,n}(z_1,\ldots,z_n) = \bigg(W_{g,n}(x_1,\ldots,x_n) + \frac{\delta_{g,0}\delta_{n,2}}{(x(z_1) - x(z_2))^2}\bigg)\dd x(z_1)\cdots\dd x(z_n)
\eeq
are initially defined for $z_i$ outside the unit disk. They actually extend to meromorphic forms on $\underline{\Sigma}$, with poles only at the ramification points $z = \pm 1$ -- and a simple pole at $\infty$ for $\omega_{0,1}$, which satisfy abstract loop equations. This is proved as a consequence of the Schwinger-Dyson equations.

The term we identify with $\mathcal{H}_{1}\omega_{g,n}$ in light of Section~\ref{S2}, is computed in \cite[Equations 5.5 and 4.13]{Bstuff} in terms of the $k$-linear potentials\footnote{We underline that $\sum'$ in \cite[Equation 4.13]{Bstuff} excludes the term $T_{0,2}$, so only $T_{h,k}$ with $2h - 2 + k > 0$ are involved. In this equation, there is a misprint in the condition on genera, which should read $h + \sum f_i + k - 1 - [K] = g$. We also point that the prefactor $1/4{\rm i}\pi$ in \cite[Equation 5.5]{Bstuff} is erroneous and should be replaced with $1/2{\rm i}\pi$.}. It takes the form of a sum over graphs, described hereafter:
\beq
\label{HUHT}\mathcal{H}_{1}\omega_{g,n}(z_1,z_2,\ldots,z_n) = \sum_{\Gamma \in {\rm Plant}_{g,n}^{T}(1)} \varpi_{\Gamma}^{T}(z_1;z_2,\ldots,z_n)
\eeq
On the other hand, we know from Theorem~\ref{bllb} that $\omega_{g,n}$ is determined in terms of its purely holomorphic part $\varphi_{g,n} = \mathcal{H}_{1}\cdots\mathcal{H}_{n}\omega_{g,n}$. To compute it in terms of the $k$-linear potentials, we need to project \eqref{HUHT} to the holomorphic part in the variables $z_2,\ldots,z_n$. The final result is:

\begin{proposition}
\label{mimfsu} For $2g - 2 + n > 0$, we have:
$$
\varphi_{g,n}(z_1,\ldots,z_n) = \sum_{\Gamma \in {\rm Bip}^{\tau}_{g,n}} \varpi^{\tau}_{\Gamma}(z_1,\ldots,z_n)
$$
\end{proposition}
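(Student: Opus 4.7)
The plan is to apply the holomorphic projectors $\mathcal{H}_2,\ldots,\mathcal{H}_n$ to both sides of the identity~\eqref{HUHT} for $\mathcal{H}_1\omega_{g,n}$, and exploit the fact that the normalized correlators $\omega^0_{h,k}$ (with $2h-2+k>0$) are by definition in the image of $\mathcal{P}_1\cdots\mathcal{P}_k$, so they are annihilated by any $\mathcal{H}_j$ acting on one of their arguments.

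Since $2g-2+n>0$, the form $\omega_{g,n}$ has no poles on the diagonals of $\Sigma^n$, so the projectors $\mathcal{H}_1,\ldots,\mathcal{H}_n$ mutually commute (cf.~\S~\ref{sec:PolarPart}); in particular
\[
\varphi_{g,n} \;=\; \mathcal{H}_n\cdots\mathcal{H}_2\bigl(\mathcal{H}_1\omega_{g,n}\bigr).
\]
I would then substitute the right-hand side of~\eqref{HUHT}. Each graph $\Gamma\in{\rm Plant}^T_{g,n}(1)$ contributes a weight $\varpi^T_\Gamma$ built from two kinds of local pieces: $T$-potential vertices $T_{h,k}$ coming from the multi-trace interactions, and normalized correlator vertices $\omega^0_{h,k}$ coming from the single-trace topological recursion on the spectral curve of Figure~\ref{Fspcurv}. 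These pieces are glued by the usual residue pairings along internal edges. Each external variable $z_j$ with $j\geq 2$ is attached to exactly one vertex of $\Gamma$, which is either of $T$-type or of $\omega^0$-type.

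The key step is the successive application of $\mathcal{H}_j$ for $j=2,\ldots,n$. Because the $z_j$-dependence of $\varpi^T_\Gamma$ factors through the unique vertex incident to the $j$-th leaf (all other factors and all residues are holomorphic in $z_j$ near $P$), the projector $\mathcal{H}_j$ commutes with the remaining residues and acts directly on the local weight at that vertex. If this vertex is an $\omega^0$-vertex of stable type, normalization gives $\mathcal{H}_j\omega^0_{h,k}=0$ and the graph is killed. Only the graphs in which every $z_j$ is attached to a $T$-vertex survive; the $\omega^0$-vertices of $\Gamma$ are then necessarily internal. This is precisely the defining property of the class ${\rm Bip}^\tau_{g,n}$, and the restriction of $\varpi^T_\Gamma$ to this class is what we name $\varpi^\tau_\Gamma$.

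The main technical obstacles are bookkeeping rather than conceptual: (i)~checking that for each surviving graph the $z_j$-dependence really does factor through a single vertex, which relies on the holomorphicity of $\omega^0$ off the ramification divisor and on the fact that all edge pairings are residues at points of $P$ (independent of $z_j$); (ii)~verifying that no cancellation across distinct graphs occurs, which follows from the idempotence $\mathcal{H}_j^2=\mathcal{H}_j$ and the decomposition $\mathrm{id}=\mathcal{H}_j\oplus\mathcal{P}_j$; and (iii)~checking \emph{a posteriori} that the resulting sum is symmetric in $z_1,\ldots,z_n$, which is automatic from the commutativity of the $\mathcal{H}_j$'s but provides a useful consistency check and justifies the fact that the set ${\rm Bip}^\tau_{g,n}$ need not single out the first variable.
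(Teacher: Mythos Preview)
Your argument rests on a misreading of the graphs in ${\rm Plant}^T_{g,n}(1)$. The non-root vertices there carry the \emph{full} correlators $\omega_{h,k}$ of the model, not the normalized $\omega^0_{h,k}$ of the usual topological recursion; moreover there is a single root vertex (of type $T$), and all leaves are attached to $\omega$-vertices, not to $T$-vertices. The edge pairing is the contour integral $\frac{1}{2\mathrm{i}\pi}\oint_\gamma$ rather than the residue pairing~\eqref{pairing0}. Consequently your key step---``$\mathcal{H}_j\omega^0_{h,k}=0$ kills the graph''---does not apply: $\mathcal{H}_j\omega_{h,k}$ is precisely the nontrivial object computed by~\eqref{HAPB} in terms of lower $\varphi$'s, and that is the whole point of the blobbed theory.

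The paper's proof is accordingly more involved. After dressing the root to $T^{\bullet}$ (absorbing $\omega$-vertices with no leaves), applying $\mathcal{H}_{A(v)}$ to each $\omega$-vertex produces, via~\eqref{HAPB} and Proposition~\ref{comuadfd}, sums over bipartite subgraphs containing $\varphi$'s and $\omega^{\mathcal{P}}$'s of strictly smaller Euler characteristic. This sets up a recursion whose solution is the set ${\rm PreBip}^{T^{\bullet}}_{g,n}$, where the plain edges carry the residue pairing and dashed edges the contour integral. A final resummation of chains of $T^{\bullet}$-vertices joined by internal $\omega_{0,2}$'s collapses the $T^{\bullet}$-clusters into $\tau$-vertices, yielding ${\rm Bip}^{\tau}_{g,n}$, whose vertices are of type $\omega$ (full, stable) and $\tau$. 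So both the recursive use of~\eqref{HAPB} and the passage $T\to T^{\bullet}\to\tau$ are essential; neither is present in your outline.
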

\noindent ${\rm Bip}_{g,n}^{\tau}$ is the set of bipartite graphs $\Gamma$ with the following properties (Figure~\ref{PhiT-1tau}):
\begin{itemize}
\item[$\bullet$] vertices $v$ are of type $\omega$ or $\tau$, and carry a genus $h(v)$, such that $2h(v) - 2 + d(v) > 0$.
\item[$\bullet$] dashed edges can only connect $\omega$ vertices to $\tau$-vertices.
\item[$\bullet$] $\Gamma$ is connected and $b_1(\Gamma) + \sum_{v} h(v) = g$.
\end{itemize}
According to its type, a vertex $v$ has local weight $\omega_{h(v),d(v)}$, or $\tau_{h(v),d(v)}$ given by \eqref{TbulletN}-\eqref{taudefN} in terms of the matrix potentials and the $\omega_{h',k'}$ with $2h' - 2 + k' < 2h(v) - 2 + d(v)$. The internal edge variables are integrated out with the pairing \eqref{pairing0}.

\begin{proof} ${\rm Plant}^{T}_{g,n}(1)$ is a set of bipartite graphs $\Gamma$ with the following properties:
\begin{itemize}
\item[$\bullet$] the set of vertices consists of one root vertex, and a set of $\omega$ vertices. Each vertex $v$ carries a genus $h(v)$. The root vertex must have $2h(v) - 2 + d(v) > 0$, but we do not impose such conditions for $\omega$ vertices.
\item[$\bullet$] Edges are dashed, and can only connect the root vertex to $\omega$ vertices.
\item[$\bullet$] There are $n$ leaves labeled from $1$ to $n$, and they must be incident to $\omega$ vertices. Besides, the leaf labeled $1$ is incident to a $\omega_{0,2}$-vertex, which is itself incident to the root vertex.
\item[$\bullet$] The leaf labeled $1$ is incident to a $\omega_{0,2}$-vertex, which is incident to the root vertex.
\item[$\bullet$] There are $n - 1$ other leaves labeled from $2$ to $n$. The leaves with label $a \in A$ must be incident to the root vertex, and the leaves with label $b \in B$ must be incident to a $\omega$-vertex. There is no restriction for leaves labeled by $c \notin A \sqcup B$.
\item[$\bullet$] All $\omega$ vertices must be incident to at least one leaf.
\item[$\bullet$] $\Gamma$ is connected and $b_1(\Gamma) + \sum_{v} h(v) = g$.
\end{itemize}
We attach leaf variables $z_1,\ldots,z_n$ outside $\gamma$, and integration variables $z_{e}$. To a vertex $v$ with set of incident variables $Z(v)$, we assign a local weight $T_{h(v),d(v)}(Z(v))$ if it is the root, and $\omega_{h(v),d(v)}(Z(v))$ otherwise. $\varpi_{\Gamma}^{T}$ is computed by multiplying the local weights, and integrating the dashed edge variable $z_{e}$ on $\frac{1}{2{\rm i}\pi}\oint_{\gamma}$.

The root vertex can be attached to $\omega$-vertices with no external legs. It is convenient to resum all these contributions by defining a ``dressed" root vertex of genus $h$ and valency $k$, with local weight:
\bea
& & \!\!\!\!\!\! T^{\bullet}_{h,k}(z_1,\ldots,z_k) \nonumber \\
& & \!\!\!\!\!\! = \!\!\!\sum_{\substack{h_0 \geq 0,\,\,r \geq 0 \\ 2h_0 - 2 + k > 0}} \frac{1}{r!} \!\!\!\!\!\!\!\! \sum_{\substack{h_1,\ldots,h_r \geq 0 \\ \ell_1,\ldots,\ell_r \geq 1 \\ h_0 + \sum_{j} (h_j + \ell_j - 1) = h}}\!\!\!\!\!\!\!\! \prod_{\substack{1 \leq j \leq r \\ 1 \leq m \leq \ell_j}} \oint_{\gamma} \frac{\dd z'_{j,m}}{2{\rm i}\pi}\cdot\frac{\omega_{h_j,\ell_j}(Z'_{j})}{\ell_j!} T_{h_0,(k + \sum_{j} \ell_j)}(z_1,\ldots,z_k,z'_{1,1},\ldots,z'_{r,\ell_{r}}) \nonumber \\
\label{TbulletN} & &
\eea
Here, $Z'_j:= \{z'_{j,m},\,\, m \in \llbracket 1,\ell_j \rrbracket\}$, the $z_1,\ldots,z_k$ are outside the contour, and to define properly the integral in case some $\omega_{0,2}$ appear, we choose the contours such that  $(|z'_{q,i}|)_{q,i}$ is decreasing with lexicographic order on $(q,i)$. By symmetry of $\omega_{0,2}$ and $T$, the result does not depend on this order. Remark that, for a given $h_0$, we can attach an arbitrary number of $\omega_{0,1}$'s to the $T$ without changing the topology. For instance, in genus $0$ we have for $k \geq 3$:
$$
T^{\bullet}_{0,k}(z_1,\ldots,z_k) = \sum_{r \geq 0} \frac{1}{r!} \,\,\oint_{\gamma^{r}} T_{0,k + r}(z_1,\ldots,z_k,z'_1,\ldots,z'_r) \prod_{j = 1}^{r} \omega_{0,1}(z_{j}')
$$

\begin{figure}[h!]
\includegraphics[width=0.85\textwidth]{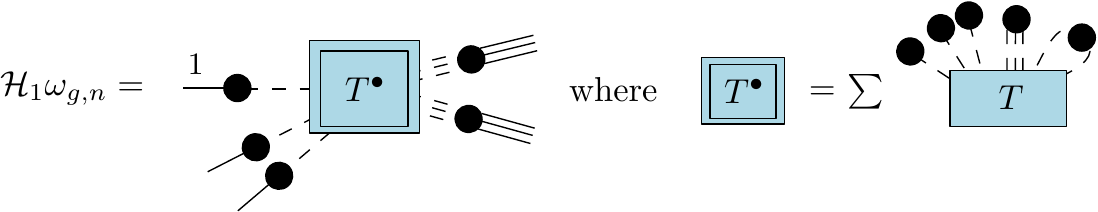}
\caption{\label{FPhiT-0} $\mathcal{H}_{1}\omega_{g,n}$ according to \cite{Bstuff}, in terms of the dressed $T$-vertices. The black vertices are $\omega$'s.}
\end{figure}

$\mathcal{H}_{1}\omega_{g,n}$ is then represented as the sum over the graphs $\Gamma$ in  ${\rm Plant}^{T}_{g,n}(1)$ such that all $\omega$ vertices are incident to at least a leaf. In this new sum, the weight of $\Gamma$ is produced like $\varpi_{\Gamma}^{T}$, but assigning $T^{\bullet}$ as local weight for the root vertex. In particular, the first condition ensure that no $\omega_{0,1}$-vertex, and thus there are finitely many graphs. We now compute $\varphi_{g,n}$ by projecting to the holomorphic part the variables carried by the leaves $2,\ldots,n$. It amounts to replacing the weight of a $\omega$-vertex $v$ by $\mathcal{H}_{A(v)}\omega_{h(v),d(v)}$, where $A(v)$ is its (non-empty) set of incident leaves. We can write:
$$
\mathcal{H}_{A}\omega_{h,d} = \sum_{A' \sqcup B = \llbracket 1,d \rrbracket \setminus B} \mathcal{H}_{A\sqcup A'}\mathcal{P}_{B}\omega_{h,d}
$$
and we compute the summands with \eqref{WPgn}. They only involve $\varphi_{h',d'}$'s and $\omega_{h',d'}^{\mathcal{P}}$'s with $0 < 2h' - 2 + d' < 2g - 2 + n$ (Figure~\ref{FPhiT-1}), and the latter can be replaced with its definition (Figure~\ref{OmP}):
$$
\omega_{h',d'}^{\mathcal{P}}(z_1,\ldots,z_n) = \sum_{1 \leq i_1,\ldots,i_{d'} \leq s} \Res_{z_1' \rightarrow p_{i_1}} \cdots \Res_{z_{d'}' \rightarrow p_{i_{d'}}} \omega_{h',d'}(z_{1}',\ldots,z'_{d'})\,\prod_{j = 1}^{d'} \int_{p_{i_j}}^{z_{j}'} \omega_{0,2}(\cdot,z_j)
$$
This results in a graphical recursion for $\varphi_{g,n}$, involving lower $\varphi$'s and $\omega$'s. Solving this recursion, we arrive to a first representation:
$$
\varphi_{g,n}(z_1,\ldots,z_n) = \sum_{\Gamma \in {\rm PreBip}^{T^{\bullet}}_{g,n}} \varpi^{T^{\bullet}}_{\Gamma}(z_1,\ldots,z_n)
$$
${\rm PreBip}^{T^{\bullet}}_{g,n}$ is the set of graphs $\Gamma$ with the properties:
\begin{itemize}
\item[$\bullet$] Vertices $v$ are of type $\omega$ or $T^{\bullet}$, carry a genus $h(v)$, and the valency $d(v)$ should satisfy $2h(v) - 2 + d(v) > 0$ for type $T^{\bullet}$, and $2h(v) - 2 + d(v) \geq 0$ for type $\omega$.
\item[$\bullet$] Edges are either dashed or plain. Dashed edges can connect $\omega$ and $T^{\bullet}$ vertices, while plain edges can only connect a $\omega_{0,2}$-vertex to a $\omega_{h,k}$-vertex with $2h - 2 + k > 0$.
\item[$\bullet$] Cutting a plain edge cannot disconnect the graph.
\item[$\bullet$] There are $n$ labeled leaves: they must be incident to a $\omega_{0,2}$-vertex which is itself incident to a $T^{\bullet}$ vertex.
\item[$\bullet$] $\Gamma$ is connected and $b_1(\Gamma) + \sum_{v} h(v) = g$.
\end{itemize}
The weight $\varpi^{T^{\bullet}}_{\Gamma}$ is computed by integrating out the edge variables with $\frac{1}{2{\rm i}\pi}\oint_{\gamma}$ for dashed edges, and with the pairing \eqref{pairing0} for plain edges, i.e. $$\sum_{i = 1}^{s} \Res_{z \rightarrow p_i} \big(\int^{z}_{p_i} \omega_{0,2}(z',\cdots)\big) \omega_{h,k}(z,\cdots).$$

The last step is to get rid of the internal $\omega_{0,2}$-vertices. They are incident either to a $\omega_{h,k}$-vertex with $2h - 2 + k > 0$ and a $T^{\bullet}$-vertex, or to two $T^{\bullet}$-vertices. If we remove all $\omega$ vertices but the $(0,2)$, and contract the internal $(0,2)$, we get connected components which are can be arbitrary graphs formed on $T^{\bullet}$-vertices. Therefore, it suggests to define:
\beq
\label{taudefN}\tau_{g,n}(z_1,\ldots,z_k) = \sum_{\mathcal{T} \in {\rm Graph}_{g,n}} \Big\langle\prod_{\ell = 1}^{n} \omega_{0,2}(z_{\ell},z_{\ell}') \prod_{v = {\rm vertex}} T_{h(v),d(v)}(Z'(v)) \prod_{\substack{e = \{v,w\} \\ {\rm edge}}} \omega_{0,2}(z'_{v},z'_{w})\Big\rangle_{\gamma}
\eeq
\begin{itemize}
\item[$\bullet$] $\mathcal{G}$ is a graph made of $T^{\bullet}$-vertices with a genus $h(v)$ and a valency $d(v)$ such that $2h(v) - 2 + d(v) > 0$, and of $n$ labeled univalent vertices (by convention, they have genus $0$).
\item[$\bullet$] $\sum_{v} h(v) = g$.
\end{itemize}
In formula \eqref{taudefN}, we have distributed integration variables $z_{e}'$ on the edges of the trees, and $Z'(v)$ is the set of variables incident to a vertex $v$. The bracket indicates that the internal variables $z'$ should be integrated on $\frac{1}{2{\rm i}\pi}\oint_{\gamma}$. Resumming the graphs of $T^{\bullet}$-vertices (Figure~\ref{PhiT-1tau}), we obtain the representation of Proposition~\ref{mimfsu} that contains only $\tau$- and $\omega$-vertices with $2h(v) - 2 + d(v) > 0$.
\end{proof}

\begin{figure}[h!]
\includegraphics[width=0.95\textwidth]{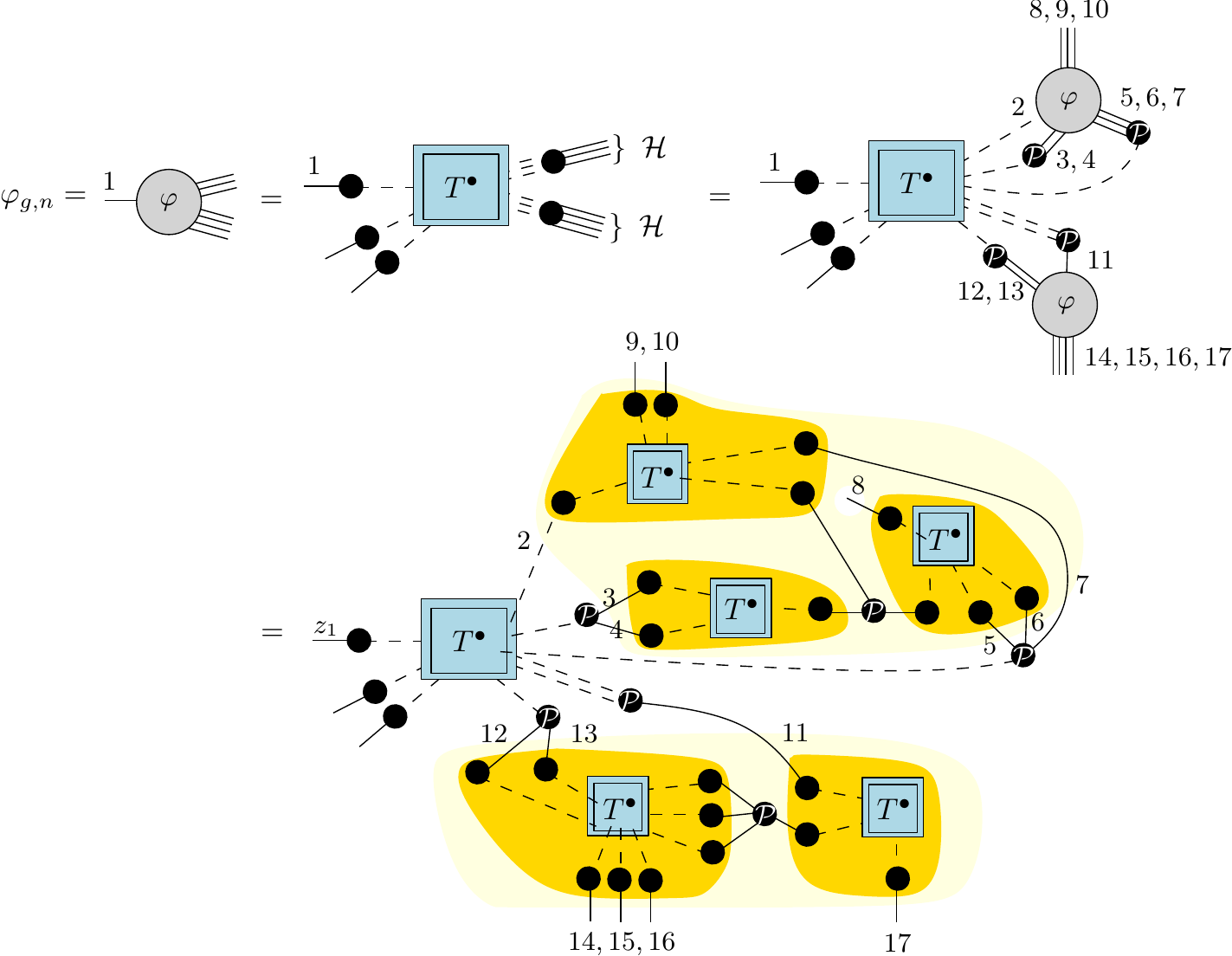}
\caption{\label{FPhiT-1} Recursive relation for $\varphi$ in terms of $T^{\bullet}$. The black vertices with a label $\mathcal{P}$ have local weight $\omega^{\mathcal{P}}$. To exemplify the properties of the graphs solving this recursion, we have displayed in the second line the second step of the recursion in a case where two steps is enough to convert all $\varphi$'s to $T^{\bullet}$. The light-gold (resp. gold) colored domain is the $\varphi$'s revealed at the first step (resp. at the second step). The labels $2,\ldots,17$ have no other meaning than facilitating the comparison between the first and second line.}
\end{figure}

\begin{figure}[h!]
\includegraphics[width=0.65\textwidth]{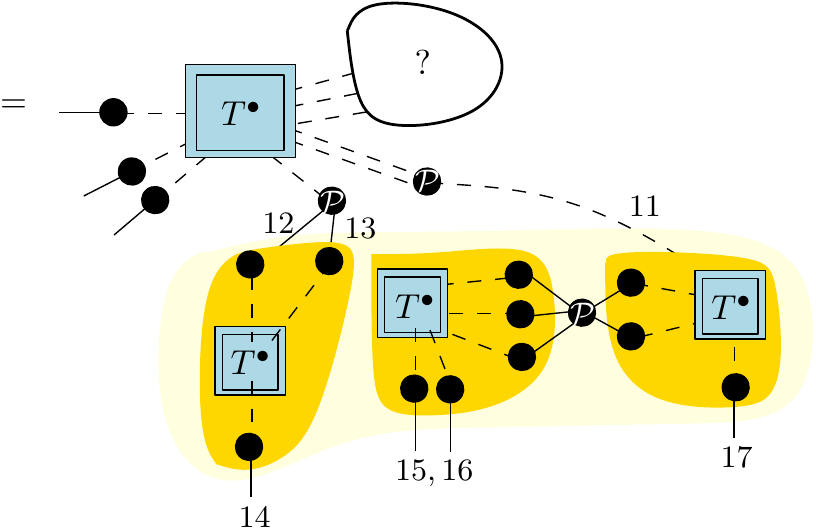}
\caption{\label{FPhiT-1err} This graph cannot occur at the second step of the recursion from Figure~\ref{FPhiT-1}, because the graph in the light-gold colored domain (supposedly a $\varphi$) is not connected. This is the origin of the condition ``plain edges are non-separating".}
\end{figure}

\begin{figure}[h!]
\includegraphics[width=0.2\textwidth]{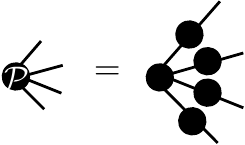}
\caption{\label{OmP}}
\end{figure}

\begin{figure}[h!]
\includegraphics[width=0.6\textwidth]{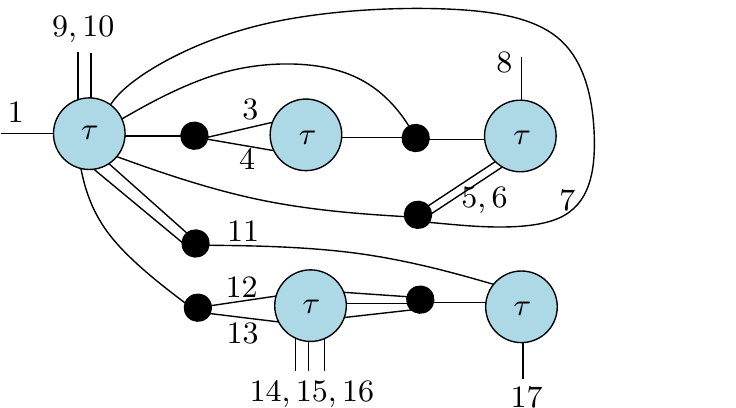}
\caption{\label{PhiT-1tau} We contracted the $\omega_{0,2}$-vertices in the example of Figure~\ref{FPhiT-1}: the two $T^{\bullet}$-vertices connected by the edge $2$ are resummed in the first $\tau$-vertex. The result is an example of graph in ${\rm Bip}^{\tau}_{g,n}$.}
\end{figure}

\subsection{Combinatorial interpretation (stuffed maps)}
\label{revi}

We agree that an elementary $2$-cell is an (isomorphism class of) oriented topological surface of genus $h$ with $k$ labeled boundaries $\mathscr{B}_1,\ldots,\mathscr{B}_k$, with a set $\mathscr{O}_i \subseteq \mathscr{B}_i$ of marked points labeled from $1$ to $\ell_i = |\mathscr{B}_i|$ (the labeling should respect the cyclic order along the boundary). We call ``edge" the closure of the connected components of $\mathscr{B}_i\setminus \mathscr{O}_i$ in $\mathscr{B}_i$. $\ell_i$ is the perimeter of $\mathscr{B}_i$. A $2$-cell with topology of a disk is called ``face".

We consider (isomorphism classes of) graphs embedded on oriented surfaces, built with the following rules:
\begin{itemize}
\item[$\bullet$] take a finite collection of elementary $2$-cells of arbitrary topology and perimeter lengths ; for each $i \in \llbracket 1,n \rrbracket$, add to this collection a marked face with label $i$.
\item[$\bullet$] take a pairing of edges with opposite orientations, and glue the elementary $2$-cells according to this pairing. The result is an oriented topological surface $\mathscr{S}$, with $n$ marked faces with perimeters $\ell_1,\ldots,\ell_n$. The union of boundaries of the elementary $2$-cells is an embedded graph $\mathscr{G}$ in this surface.
\end{itemize}
If all elementary $2$-cells have the topology of a disk, $(\mathscr{G},\mathscr{S})$ is called a ``map". The model with elementary $2$-cells of arbitrary topology are allowed was introduced in \cite{Bstuff}, and $(\mathscr{G},\mathscr{S})$ was called a ``stuffed map", as a reference to its nested structure. The weight of a stuffed map $(\mathscr{G},\mathscr{S})$ is a product of Boltzmann weights:
\begin{itemize}
\item[$\bullet$] for each elementary $2$-cell of topology $(h,k)$ with perimeter lengths $\ell_1,\ldots,\ell_k$ and which is not a marked face, we put a weight $t_{h;\ell_1,\ldots,\ell_n}$ ;
\item[$\bullet$] for the $i$-th marked face with perimeter $\ell_i$, we put a weight $x_i^{-(\ell_i + 1)}$ ;
\item[$\bullet$] for each vertex, we put a weight $u$ ;
\end{itemize}
and we divide by the number of automorphisms of $(\mathscr{G},\mathscr{S})$. Then, the techniques of \cite{BIPZ} show that $W_{g,n}(x_1,\ldots,x_n)$ defined in \S~\ref{fomoim} is the generating series of all connected stuffed maps of genus $g$ with $n$ boundaries. In $W_{0,1}(x)$, we add by convention the term $u/x$, corresponding to a single vertex embedded in the sphere -- we considered that the marked face has perimeter $0$. The $k$-linear potentials in the matrix model:
$$
T_{h,k}(x_1,\ldots,x_k) = -\delta_{h,0}\delta_{k,2}\,\frac{x^2}{2} + \sum_{\ell_1,\ldots,\ell_k \geq 1} \frac{t_{h;\ell_1,\ldots,\ell_k}}{\ell_1\cdots\ell_k}\,x^{\ell_1}\cdots x^{\ell_k}
$$
can be interpreted as the generating series of elementary $2$-cells of genus $h$ with $k$ boundaries. If $\sum_{\ell \geq 0} A_{n}\dd x/x^{\ell + 1}$ is the Laurent expansion at $x = \infty$ of a holomorphic function $A$ in $\{|z| > 1\} \subseteq \underline{\Sigma}$, and $\sum_{\ell \geq 1} B_\ell\,x^{\ell}/\ell$ is the Laurent expansion at $x = 0$ of a holomorphic function $B$ in a neighborhood of $\{|z| = 1\} \subseteq \underline{\Sigma}$, then the contour integral is:
$$
\frac{1}{2{\rm i}\pi}\oint_{\gamma} A(z)\,B(z) = \sum_{\ell \geq 1} \frac{A_{\ell}B_{\ell}}{\ell}
$$
and it computes the number of objects that are obtained by gluing (rooted) an object counted in $A$ to a (rooted) object counted in $B$ of same (arbitrary) size $\ell$. 

Then, many equations in \S~\ref{btrstuff} acquire a clear combinatorial interpretation\footnote{The shift in \eqref{woewo} for $(0,2)$ does not play a role here, because it is even with respect to the involution $\sigma$, while $T$ is holomorphic and even in a neighborhood of $\gamma$, and $\sigma(\gamma)$ is homologically equivalent to $-\sigma(\gamma)$ in this neighborhood. So, replacing $\omega_{0,2}(z_1,z_2)$ by $W_{0,2}(x_1,x_2)\dd x_1\dd x_2$ does not change the weight of the graphs in \S~\ref{btrstuff}.}: the series expansion (in the variable $x_i = x(z_i)$) of $T^{\bullet}_{g,n}$ at $0$ and of $\tau_{g,n}$ and $\varphi_{g,n}$ at $\infty$ count stuffed maps with particular properties read from the graphs. For instance:
\begin{itemize}
\item[$\bullet$] $\varphi_{g,n}$ is the generating series of stuffed maps of genus $g$ with $n$ marked faces in which, after removing all elementary $2$-cells with stable topology (i.e. $2h - 2 + k > 0$), the marked faces are in disjoint connected components, which have the topology of a cylinder (Figure~\ref{PhiT-1tau}).
\item[$\bullet$] $\omega^{\mathcal{P}}_{g,n}$ is the generating series of stuffed maps of genus $g$ with $n$ marked faces in which, for each $i = 1,\ldots,n$, a cycle homologous to the boundary of the $i$-th marked face has been marked.
\end{itemize}
Since we have shown in Section~\ref{S3} that $\omega^{\mathcal{P}}_{g,n}$ is expressed in terms of intersection numbers on $\overline{\mathcal{M}}_{g,n}$, it would be interesting to know if such a representation has a combinatorial interpretation in terms of combinatorics of maps.


\appendix

\section{Kappa class formula}
\label{kappaapp}

The effect of the forgetful map $\pi_{k}\,:\,\overline{\mathcal{M}}_{g,n + k} \rightarrow \overline{\mathcal{M}}_{g,n}$ can be expressed via Mumford classes $(\kappa_{a})_{a \geq 0}$ \cite{ACmoduli}:
$$
(\pi_k)_*\Big[\prod_{\ell = 1}^n \psi_{\ell}^{d_{\ell}} \cdot \prod_{m = 1}^{k} \psi_{n + m}^{b_m} \Big] = \prod_{\ell = 1}^n \psi_{\ell}^{d_{\ell}}\cdot\bigg(\sum_{I_1 \sqcup \cdots \sqcup I_r = \llbracket 1,k \rrbracket} \prod_{p = 1}^r \kappa_{\sum_{m \in I_p} (b_m - 1)}\bigg)
$$
In particular, the dilaton equation:
\beq
\label{dilao}\int_{\overline{\mathcal{M}}_{g,n + 1}} \prod_{\ell = 1}^n \psi_{\ell}^{d_{\ell}} \psi_{n + 1} = (2g - 2 + n) \int_{\overline{\mathcal{M}}_{g,n}} \prod_{\ell = 1}^{n} \psi_{\ell}^{d_{\ell}}
\eeq
gives $\kappa_0 = 2g - 2 + n$. Therefore, we can rearrange Equation~\eqref{eq:DP01action}:
\bea
& & \ln\big(\exp(\widehat{\phi}_{0,1})Z_i\big)  \nonumber \\
& = & \sum_{\substack{g \geq 0 \\ n \geq 1}} \frac{(\hbar/\alpha_i^2)^{g - 1}}{n!} \sum_{k \geq 0} \frac{1}{k!} \!\!\!\!\!\!\sum_{\substack{d_1,\ldots,d_n \in \mathbb{Z} \\ b_1,\ldots,b_k \geq 2 \\ I_1 \sqcup \cdots \sqcup I_r = \llbracket 1,k \rrbracket}} \!\!\!\! \int_{\overline{\mathcal{M}}_{g,n}} \prod_{\ell = 1}^{n} \frac{t_{d_{\ell}}\psi_{\ell}^{d_{\ell}}}{-\alpha_i} \prod_{p = 1}^r \Big\{ \kappa_{\sum_{m \in I_p} (b_m - 1)} \prod_{m \in I_{p}} \frac{\phi_{0,1}\left[\begin{smallmatrix} i \\
2b_{m}
\end{smallmatrix}\right]}{-\alpha_i} (2b_{m} - 1)!!\Big\}
\nonumber \\
& = & \sum_{\substack{g \geq 0 \\ n \geq 1}}  \frac{(\hbar/\alpha_i^2)^{g - 1}}{n!} \sum_{\substack{r \geq 1 \\ c_1,\ldots,c_r \geq 1 \\ d_1,\ldots,d_n \in \mathbb{Z}}}\frac{1}{r!} \int_{\overline{\mathcal{M}}_{g,n}} \prod_{\ell = 1}^{n} \frac{t_{i,d_{\ell}}\psi_{\ell}^{d_{\ell}}}{-\alpha_i} \prod_{p = 1}^r \kappa_{c_p} \widehat{t}_{i,c_p} \nonumber \\
& = & \sum_{\substack{g \geq 0 \\ n \geq 1}} \frac{\hbar^{g - 1}}{n!} \sum_{d_1,\ldots,d_n \in \mathbb{Z}} \int_{\overline{\mathcal{M}}_{g,n}} \prod_{\ell = 1}^n \psi_{\ell}^{d_{\ell}}\,\exp\Big(\sum_{c \geq 0} \widehat{t}_{i,c}\kappa_{c}\Big) \prod_{\ell = 1}^n t_{i,d_{\ell}} \nonumber
\eea
It was convenient to introduce new parameters $(\widehat{t}_{i,c})_{c \geq 0}$ to parametrize the coefficients in $\phi_{0,1}$. We have set:
$$
\widehat{t}_{i,0} = -\ln(-\alpha_i) = -\ln\big(-\phi_{0,1}\left[\begin{smallmatrix} i \\
2
\end{smallmatrix}\right]\big)
$$
Together with $\kappa_0 = (2g - 2 + n)$, it absorbs the scaling in $\alpha_i$'s. And for $c \geq 1$, we defined:
$$
\widehat{t}_{i,c} = \sum_{q = 1}^{c} \frac{1}{q!} \sum_{\substack{b_1,\ldots,b_q \geq 2 \\ \sum_{m} (b_m - 1) = c}} \prod_{m = 1}^q \frac{\phi_{0,1}\left[\begin{smallmatrix} i \\
2b_{m}
\end{smallmatrix}\right]}{-\alpha_i} (2b_{m} - 1)!!
$$
This change of variables is nicely expressed in terms of generating series:
\bea
1 + \sum_{c \geq 1} \widehat{t}_{i,c}\,u^{c} & = & \sum_{q \geq 0} \frac{1}{q!} \sum_{b_1,\ldots,b_q \geq 2} \prod_{m = 1}^q \frac{\phi_{0,1}\left[\begin{smallmatrix} i \\ 2b_m \end{smallmatrix}\right]}{-\alpha_i} (2b_m - 1)!!\,u^{b_m - 1} \nonumber \\
& = & \exp\bigg\{\sum_{b \geq 2} \frac{\phi_{0,1}\left[\begin{smallmatrix} i \\ 2b \end{smallmatrix}\right]}{-\alpha_i}\,(2b - 1)!!\,u^{b - 1}\bigg\} \nonumber
\eea
Besides, the series in the exponential coincides with the $O(u)$ part of the formal Laplace transform:
$$
\sum_{b \geq 1} \phi_{0,1}\left[\begin{smallmatrix} i \\ 2b \end{smallmatrix}\right]\,(2b - 1)!!\,u^{b - 1} = \frac{1}{(2\pi)^{1/2}u^{3/2}}\int_{\gamma_i} \phi_{0,1}\,e^{-[x - x(p_i)]/u}
$$
$\gamma_i$ is the steepest contour lifting ${\rm Re}\,x \leq 0$ in $U_i$, which is just the real line in the coordinate $\zeta_i$. The previous equation means equality of formal power series in $u$, obtained in the right-hand side by integrating term by term the formal Taylor expansion of $\phi_{0,1}$ near $p_i$. 
We see that the effect of $\exp(\widehat{\phi}_{0,1}\big)$ is to insert the class:
$$
\Lambda_i = \exp\Big(\sum_{c} \widehat{t}_{i,c}\,\kappa_{c}\Big)
$$
in integrals over $\overline{\mathcal{M}}_{g,n}$. In the particular case of $\phi_{h,k} = 0$ for $(h,k) \neq (0,1)$, the formula for the expansion of the correlators when all $z$'s tend to $p_i$ is found by applying the substitution \eqref{substim}:
\beq
\omega_{g,n} \sim \sum_{d_1,\ldots,d_n \in \mathbb{Z}} \int_{\overline{\mathcal{M}}_{g,n}} \exp\Big(\sum_{c \geq 0} \widehat{t}_{i,c}\,\kappa_{c}\Big)\prod_{\ell = 1}^{n} \psi_{\ell}^{d_{\ell}}\,\prod_{\ell = 1}^n \frac{(2d_{\ell} + 1)!!\dd \zeta_{i,\ell}}{\zeta_{i,\ell}^{2d_{\ell} + 2}}
\eeq
for $2g - 2 + n > 0$. This is the final form of the answer given in \cite{Ekappa}, and it is our $\omega_{g,n}^{\Box}|_{{\rm KdV}}$ in \eqref{nonon}. For general $(\phi_{h,k})_{h,k}$, it is also possible to rewrite our general relation between $\omega_{g,n}$ and intersection numbers in the style of \cite[Theorem 4.1]{Einter}, by inserting suitable boundary divisors.

\end{document}